\newtheorem{definition}{Definition}[section]
\newtheorem{corollary}[definition]{Corollary}
\newtheorem{lemma}[definition]{Lemma}
\newtheorem{proposition}[definition]{Proposition}
\newtheorem{theorem}[definition]{Theorem}
\newcommand{\REQ}{\eqref}
\newcommand{\bes}{\begin{displaymath}}
\newcommand{\ees}{\end{displaymath}}
\newcommand{\be}{\begin{equation}}
\newcommand{\ee}{\end{equation}}
\newcommand{\ba}{\begin{eqnarray}}
\newcommand{\ea}{\end{eqnarray}}
\newcommand{\bas}{\begin{eqnarray*}}
\newcommand{\eas}{\end{eqnarray*}}
\newcommand{\@Bbb}[1]{\ensuremath{\Bbb #1}}
\newcommand{\B}{{\@Bbb B}}
\newcommand{\C}{{\@Bbb C}}
\newcommand{\E}{{\@Bbb E}}
\newcommand{\F}{{\@Bbb F}}
\newcommand{\G}{{\@Bbb G}}
\renewcommand{\P}{{\@Bbb P}}
\newcommand{\bbP}{{\P}}
\newcommand{\Q}{{\@Bbb Q}}
\newcommand{\bQ}{{\@Bbb Q}}
\newcommand{\N}{{\@Bbb N}}
\newcommand{\R}{{\@Bbb R}}
\newcommand{\T}{{\@Bbb T}}
\newcommand{\bbR}{{\@Bbb R}}
\newcommand{\W}{{\@Bbb W}}
\newcommand{\Z}{{\@Bbb Z}}
\newcommand{\bbZ}{{\@Bbb Z}}
\newcommand{\si}{\sigma}
\newcommand{\Om}{\Omega}
\newcommand{\om}{\omega}
\newcommand{\@s}[1]{\ensuremath{\mathcal #1}}
\newcommand{\cA}{\@s A}
\newcommand{\cB}{\@s B}
\newcommand{\cC}{\@s C}
\newcommand{\cD}{\@s D}
\newcommand{\cE}{\@s E}
\newcommand{\cF}{\@s F}
\newcommand{\cG}{\@s G}
\newcommand{\cH}{\@s H}
\newcommand{\cI}{\@s I}
\newcommand{\cJ}{\@s J}
\newcommand{\cal}{\mathcal}
\newcommand{\cK}{\@s K}
\newcommand{\cL}{\@s L}
\newcommand{\cN}{\@s N}
\newcommand{\cM}{\@s M}
\newcommand{\cO}{\@s O}
\newcommand{\cP}{\@s P}
\newcommand{\cQ}{\@s Q}
\newcommand{\cR}{\@s R}
\newcommand{\cS}{\@s S}
\newcommand{\cT}{\@s T}
\newcommand{\cU}{\@s U}
\newcommand{\cV}{\@s V}
\newcommand{\cW}{\@s W}
\newcommand{\cX}{\@s X}
\newcommand{\cY}{\@s Y}
\newcommand{\cZ}{\@s Z}
\def\REQ#1{{\rm (\ref{#1})}}
\def\d{{\rm d}}
\def\e{{\rm e}}
\def\i{{\rm i}}
\def\qed{\mbox{$\square$}}
\newcommand{\@bm}[1]{\ensuremath{\mathbf #1}}
\newcommand{\bma}{\@bm a}\newcommand{\bmA}{\@bm A}
\newcommand{\bmb}{\@bm b}\newcommand{\bmB}{\@bm B}
\newcommand{\bmc}{\@bm c}\newcommand{\bmC}{\@bm C}
\newcommand{\bmd}{\@bm d}\newcommand{\bmD}{\@bm D}
\newcommand{\bme}{\@bm e}
\newcommand{\bmf}{\@bm f}\newcommand{\bmF}{\@bm F}
\newcommand{\bmg}{\@bm g}\newcommand{\bmG}{\@bm G}
\newcommand{\bmh}{\@bm h}\newcommand{\bmH}{\@bm H}
\newcommand{\bmi}{\@bm i}\newcommand{\bmI}{\@bm I}
\newcommand{\bmj}{\@bm j}
\newcommand{\bmk}{\@bm k}\newcommand{\bmK}{\@bm K}
\newcommand{\bml}{\@bm l}
\newcommand{\bmm}{\@bm m}\newcommand{\bmM}{\@bm M}
\newcommand{\bmn}{\@bm n}
\newcommand{\bmo}{\@bm o}
\newcommand{\bmp}{\@bm p}
\newcommand{\bmq}{\@bm q}\newcommand{\bmQ}{\@bm Q}
\newcommand{\bmr}{\@bm r}
\newcommand{\bms}{\@bm s}\newcommand{\bmS}{\@bm S}
\newcommand{\bmt}{\@bm t}
\newcommand{\bmu}{\@bm u}\newcommand{\bmU}{\@bm U}
\newcommand{\bmw}{\@bm w}\newcommand{\bmW}{\@bm W}
\newcommand{\bmv}{\@bm v}\newcommand{\bmV}{\@bm V}
\newcommand{\bmx}{\@bm x}\newcommand{\bmX}{\@bm X}\newcommand{\bx}{\@bm x}
\newcommand{\bmy}{\@bm y}\newcommand{\bmY}{\@bm Y}\newcommand{\by}{\@bm y}
\newcommand{\bmz}{\@bm z}\newcommand{\bmZ}{\@bm Z}
\newcommand{\bbE}{\mathbb E}
\newcommand{\bbT}{{\mathbb T}^2}
\newcommand{\bmzero}{\@bm 0}
\newcommand{\ga}{\gamma}
\newcommand{\@g}[1]{\ensuremath{\mathfrak #1}}
\newcommand{\gA}{\@g A}
\newcommand{\gD}{\@g D}
\newcommand{\gJ}{\@g J}
\newcommand{\gF}{\@g F}
\newcommand{\gM}{\@g M}
\newcommand{\gR}{\@g R}
\newcommand{\commentout}[1]{{}}
\begin{document}
\title[Limit theorems for a passive tracer]{Passive tracer in a   flow corresponding to a two dimensional stochastic Navier Stokes equations}
\author{Tomasz Komorowski}
\address{Institute of Mathematics, UMCS, Pl. M. Curie-Sk\l
o\-do\-wskiej 1, 20-031 Lublin, Poland\\
and\\
Institute of Mathematics, Polish Academy of Sciences,
\'Sniadeckich, 8, 00-956, Warsaw, Poland}
\email{komorow@hektor.umcs.edu.pl}
\author{Szymon  Peszat}
\address{Institute of Mathematics, Polish Academy of Sciences,
\'Sw. Tomasza 30/7, 31-027 Krak\'ow, Poland and Faculty of Applied Mathematics, AGH University of Science and Technology, Krak\'ow, Poland}
\email{napeszat@cyf-kr.edu.pl}
\author{Tomasz Szarek}
\address{Institute of Mathematics, University of Gda\'nsk, Wita Stwosza
  57, 80-952 Gda\'nsk, Poland}
\email{szarek@intertele.pl}
\subjclass[2000]{Primary: 76D06, 60J25; Secondary: 60H15}

\keywords{Stochastic Navier--Stokes equations, uniqueness of invariant measures, stochastic evolution equations}

\thanks{This work has been partly supported by Polish Ministry of Science and Higher Education  Grants N N201 419139  (T.K. and T.S.),   N N201 419039 (S.P.).  } 
\begin{abstract}

In this paper we prove the law of large numbers and central limit theorem for trajectories of a particle carried by a two dimensional Eulerian velocity field. The field is given by a solution of a stochastic Navier--Stokes system with a non-degenerate noise. The spectral gap property, with respect to Wasserstein metric, for such a system has been shown in \cite{HM1}. In the present paper we show that a similar property holds for the environment process corresponding to the Lagrangian observations of the velocity. In consequence we conclude the law of large numbers and the central limit theorem for the tracer. The proof of the central limit theorem relies on the martingale approximation of the trajectory process.  

\end{abstract}

\date{\today}

\maketitle

\section{Introduction} 

\label{intro}

Consider the Navier--Stokes equations (N.S.E.) on a two dimensional torus $\bbT$,
\begin{equation}
\label{E11a}
\begin{aligned}
&\partial_t\vec u(t,x)+ \vec u(t,x)\cdot \nabla_x\vec u(t,x)= \Delta_x\vec u(t,x )-\nabla_x p(t,x)+ \vec F(t,x),\\
& \nabla \cdot \vec u(t,x)= 0,\\
& \vec u(0,x)=\vec u_0(x).
\end{aligned}
\end{equation}
The two dimensional vector field $\vec u(t,x)$ and scalar  field $p(t,x)$ over $[0,+\infty)\times\bbT$, are  called an Eulerian velocity and pressure, respectively. The forcing $\vec F(t,x)$ is assumed to be a Gaussian white noise in $t$, homogeneous and sufficiently regular in $x$ defined over a certain probability space $(\Om,{\cal F},\bbP)$.
 Consider  the trajectory of a tracer particle defined as the solution of the ordinary differential equation (o.d.e.)
 \begin{equation}
\label{E11b}
\dfrac{d x(t)}{d t}=\vec u(t,x(t)),\quad x(0)=x_0,
\end{equation}
where $x_0\in \bbR^2$. Thanks to well known regularity properties of solutions of N.S.E, see e.g. \cite{MP}, $\vec u(t,x)$ possesses continuous   modification in $x$ for any $t>0$. However, since $\vec u(t,x)$ needs not be Lipschitz in $x$, the  equation might not define  $x(t)$, $t\ge 0$,  as  a stochastic process  over $(\Om,{\cal F},\bbP)$, due to possible non-uniqueness of solutions. In our first result we construct a solution process (see Proposition \ref{CTM})  and show (see Corollary  \ref{unique-law-traj})  that the law of any process  satisfying \eqref{E11b} and adapted to the  natural filtration of   $\vec u$ is uniquely determined.

The main objective of this paper  is to study ergodic properties of the trajectory process. We prove, see part 1) of Theorem \ref{lab3}, the existence of the Stokes drift 
 \begin{equation}\label{eq1}
v_*:=\lim_{t\to+\infty}\frac{x(t)}{t},
\end{equation}
where the limit above is understood in probability.  A similar result for a Markovian and Gaussian velocity field $\vec u$ (that need not be a solution of a N.S.E.) that decorrelates sufficiently fast in time has been considered in  \cite{Kps}. Next, we investigate  the size of ''typical fluctuations'' of the trajectory around its mean. We prove, see part  3) of the theorem, that 
 \begin{equation}
 \label{012603}
 Z(t):=\frac{x(t)-v_*t}{\sqrt{t}}\Rightarrow Z,\quad\mbox{ as } t\to+\infty 
 \end{equation}
where $Z$  is  a random vector with normal distribution ${\cal N}(0,D)$ and the convergence  is understood in law. Moreover, we show that the asymptotic variance of $Z(t)$, as $t\to +\infty$,  exists and coincides with the covariance matrix $D$.  

In our approach a crucial role is played by the {\em Lagrangian process} 
$$
\vec\eta(t,x):=\vec u(t,x(t)+x),\quad  t\ge 0, \ x\in \bbT
$$ 
that describes the environment  from the vantage point of the moving particle.
It turns out that its rotation in $x$, 
$$
\om(t,x)={\rm rot}\,\vec\eta(t,x):=\partial_2\eta_1(t,x)-\partial _1\eta_2(t,x),\quad  t\ge 0,\  x\in \bbT,  
$$
satisfies a stochastic partial differential equation (s.p.d.e.) \eqref{E25a}  that is similar to the stochastic N.S.E. in the vorticity formulation, see \eqref{E25a0}. The position $x(t)$ of the particle  at time $t$, can be represented as an additive functional of the Lagrangian process, i.e. 
$$
x(t)=\int_0^t\psi_*(\om(s))ds,
$$
see the begining of Section \ref{S6} for the definition of $\psi_*$.  Then, \eqref{eq1} and \eqref{012603} become the statements about the law of large numbers and central limit theorem for an additive functional of the process $\eta(\cdot)$.

Following the ideas of Hairer and Mattingly, see \cite{HM,HM1}, we are able to prove, see Theorem  \ref{T1} below, that the transition semigroup of $\om(\cdot)$ satisfies the spectral gap property in a Wasserstein metric defined over the Hilbert space $H$ of  square integrable  mean zero functions. If $\psi_*(\cdot)$ were Lipschitz this fact would make the proof of  the law of large numbers and central limit theorem standard, in view of     \cite{shirikyan} (see also \cite{kowalczuk,kuksin-shirkyan}). However, in our case the observable $\psi_*$  is not Lipschitz. In fact, it is not even defined on the state space $H$ of the process.  Nevertheless, it is  a bounded linear functional over another Hilbert space $V$ that is compactly embedded in $H$. Adopting the approach of Mattingly and Pardoux from \cite{MP},   see Theorem \ref{T2} below, we are able to prove that  the equation for $\om$ has regularization properties similar to  the N.S.E. and that $\om(t)$ belongs to $V$ for any $t>0$. In consequence, one can show that the transition semigroup can be defined on $\psi_*$ and has the same contractive properties as the semigroup defined on Lipschitz functions on $H$. The  law of large numbers can be then shown, Section  \ref{sec5.4},  by a modification of  the argument of Shirikyan    from \cite{shirikyan} (see also \cite{kowalczuk}). To prove the central limit theorem we  construct a corrector field $\chi$, see Section \ref{lab21}, over the ''larger'' space $H$. Then, we  proceed with the classical martingale proof of the central limit theorem, see Section  \ref{sec5.4}. Such an argument has been used to show this type of a theorem for a Lipschitz observable of the solution of a N.S.E. in \cite{shirikyan}. The proof of the existence of the asymptotic variance is done in Section \ref{sec5.3}.

The model  of transport in a fluid flow based on  \eqref{E11b} is  referred to in the literature  as the {\em passive tracer model} (see e.g. Chapter V of \cite{yaglom-monin}). The  $d$-dimensional vector field $\vec u$ appearing on the right hand side of  \eqref{E11b}  is usually assumed  to be    random, stationary,   that in principle may have nothing to do with the N.S.E.  Since the fluid flow is incompressible, equation \eqref{E11b} is complemented by the condition $\nabla_x\cdot \vec u(t,x)\equiv 0$. This model has been introduced by G. Taylor in the 1920-s (see \cite{taylor} and also \cite{kraichnan}) and plays an important role in describing transport phenomena in fluids, e.g.   in investigation of ocean currents (see \cite{stewart}). There exists an extensive literature concerning the  passive tracer both from the mathematical and physical points of view, see e.g. \cite{krama} and the references therein. In particular, it can be shown (see \cite{port-stone}) that the incompressibility assumption implies that the Lagrangian  process $\vec u(t,x(t))$, $t\ge 0$,  is stationary and if one can prove its ergodicity, the Stokes drift coincides with the mean of the field  $\bbE\vec u(0,0)$.  The   weak convergence of $(x(t)-v_*t)/\sqrt{t}$ towards a normal law has  been shown for flows  possessing good relaxation properties either in time, or both in time and space, see \cite{caxu, fk1,kola, koralov} for the  Markovian case, or  \cite{fg-1} for the case of  non-Markovian, Gaussian fields  with finite decorrelation time. According to our knowledge this 
is the first  result when the central limit theorem has been shown for the  tracer  in a flow that is given by an actual solution of the two dimensional N.S.E.

\section{Preliminaries}

\subsection{Some function spaces and operators}

Denote by $\bbT$ the two dimensional torus understood as the product of two segments $[-1/2,1/2]$ with identified endpoints. Trigonometric monomials $e_k(x)=\e^{2i\pi k\cdot x}$, $k=(k_1,k_2)\in\bbZ^2$, form the orthonormal base in  the space $L^2(\bbT)$ of all square integrable functions with the standard scalar product $\langle\cdot,\cdot\rangle$ and norm $|\cdot|$. For a given $w\in L^2(\bbT)$ let $\hat w_k=\langle w,e_k\rangle$. Let $H$ be the subspace of $L^2(\bbT)$  consisting of those functions  $w$, for which $\hat w_0=0$. For any $r\in\bbR$ let 
$$
(-\Delta)^{r/2}w:=\sum_{k\in\bbZ^2_*}|k|^r\hat w_ke_k,\quad w\in H^r,
$$
where $H^r$ consists of such $w$, for which $\sum_{k\in\bbZ^2_*}|k|^{2r}|\hat w_k|^2<+\infty$ and $\bbZ^2_*:=\bbZ^2\setminus\{(0, 0)\}$.  We equip $H^r$  with the graph Hilbert norm  $|\cdot |_r:=|(-\Delta )^{r/2}\cdot |$.   Let $V:=H^1$ and let  $V'$ be the dual to $V$.  Then $H$ can be identified with a subspace of $V'$ and $V\hookrightarrow H \hookrightarrow V'$.  We shall also denote by $\|\cdot\|$ the respective norm $|\cdot|_1$. It is well known (see e.g. Corollary 7.11 of \cite{gilbarg-trudinger}) that $H^{1+s}$ is continuously embedded in $C(\bbT)$ for any $s>0$. Moreover, there exists a constant $C>0$ such that
\begin{equation}
\label{embed}
\|w\|_{\infty}\le C|w|_{1+s},\quad\forall\, w\in C^\infty(\bbT).
\end{equation}
Here 
$\|w\|_{\infty}:=\sup_{x\in\bbT}|w(x)|$. In addition, the following estimate, sometimes referred to as the Gagliardo--Nirenberg inequality, holds, see e.g. p. 27 of \cite{henry}. For any $s>0$, $\beta\in[0,1]$ there exists $C>0$ such that
\begin{equation}
\label{gagliardo-nirenberg}
|w|_{\beta s}\le C|w|^{1-\beta}|w|_{s}^{\beta},\quad\forall\, w\in C^\infty(\bbT).
\end{equation}

Define ${\cal K}\colon H^r\to H^{r+1}\times H^{r+1}$ by
\begin{equation}
\label{010512}
{\cal K}(w)=({\cal K}_1(w),{\cal K}_2(w)):=\sum_{k\in\bbZ^2_*}|k|^{-2}k^\perp \hat w_ke_k.
\end{equation}
We have
\begin{equation}
\label{010512a}
|{\cal K}_i(w)|_{r+1}\le |w|_r, \quad w\in H_r.
\end{equation}
For a given $x\in\bbR^2$ and $w\in H^r$ we let $\tau_xw\in H^r$  be defined by
$$
\tau_xw:=w(\cdot+x)=\sum_{k\in\bbZ^2_*}\e^{-2\pi i k\cdot x}\hat w_ke_k. 
$$

\subsection{Homogeneous Wiener process}

Write 
$$
\mathbb Z^2_+:=[(k_1,k_2)\in \mathbb Z^2_*\colon k_2>0]\cup [(k_1,k_2)\in \mathbb Z^2_*\colon k_1>0,k_2=0]
$$ 
and let $\mathbb Z^2_-:=-\mathbb Z^2_+$.
Let $ (B_k(t))_{t\ge 0}$, $k\in \mathbb Z^2_+$,  be independent, standard  one dimensional Brownian motions defined on a filtered probability space  $(\Omega,\mathcal{F},(\mathcal{F}_t)_{t\ge 0}, \mathbb{P})$. Define $B_{-k}(t):=B_k(t)$ for $k\in \mathbb Z^2_+$.
Assume that the function $k\mapsto q_k$   is  even, i.e.
$q_{-k}=q_k$, $k\in \mathbb Z^2_*$, and real-valued.  A cylindrical Wiener process in $H$, given  on a filtered probability space  $(\Omega,\mathcal{F},(\mathcal{F}_t)_{t\ge 0}, \mathbb{P})$, can be written as 
 $$
W(t):=\sum_{k\in\bbZ^2_*}B_k(t)e_k,\quad t\ge0.
$$ 
Let $Q\colon H\to H^r$ be a symmetric, positive-definite, bounded linear operator  given by
\begin{equation}
\label{031002}
\widehat {Qw}_k := q_k\widehat w_k,\qquad k\in
\Z^2_*.
\end{equation}
The Hilbert--Schmidt norm of the operator, see Appendix C of \cite{DaPrato-Zabczyk}, can be computed from formula
\begin{equation}
\label{011002}
\| Q\| ^2_{L_{(HS)}(H,H^{r})}:= \sum\limits_{k\in\Z^2_*}\|
Qe_k\| ^2_{H^{r}}= \sum\limits_{k\in\Z^d}|k|^{2r}
q_k^2,
\end{equation}
\begin{proposition}
If $\| Q\| ^2_{L_{(HS)}(H,H^{r})}<+\infty$ then the process $\left(QW(t)\right)_{t\ge0}$ has realizations in $H^r$, $\mathbb{P}$-a.s. Moreover,
the laws of the Wiener processes $\left(\tau_xQW(t)\right)_{t\ge0}$ are independent of $x\in\bbR^2$.
\end{proposition}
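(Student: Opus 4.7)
The plan is to handle the two claims in turn; both rest on Hilbert-space and Gaussian-process machinery once the structure of the noise is unpacked.

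For the existence of $H^r$-valued realizations of $QW(t)$, I would introduce the truncated sums
\[
S_N(t) := \sum_{0 < |k| \le N} q_k B_k(t)\, e_k,
\]
which lie in finite-dimensional subspaces of $H^r$. Using the orthogonality of $\{|k|^r e_k\}$ in $H^r$ together with the prescribed Brownian covariance (after accounting for the reality constraint $B_{-k} = B_k$ and the evenness $q_{-k} = q_k$), one obtains
\[
\bbE |S_N(t)-S_M(t)|_r^2 \,\le\, C\,t \sum_{M < |k| \le N} |k|^{2r} q_k^2.
\]
By the Hilbert--Schmidt hypothesis \REQ{011002}, the right-hand side tends to $0$ as $M,N \to +\infty$, so $\{S_N(t)\}$ is Cauchy in $L^2(\Om; H^r)$ and defines a limit $QW(t)\in H^r$. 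To upgrade $L^2$-convergence to a.s.\ uniform convergence on compact time intervals, I would regard $t\mapsto S_N(t)-S_M(t)$ as a continuous $H^r$-valued martingale and apply Doob's maximal inequality, then extract an a.s.\ convergent subsequence via Borel--Cantelli. This furnishes a continuous version of $QW(\cdot)$ in $H^r$, $\bbP$-a.s.

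For the spatial homogeneity, I start from the identity
\[
\tau_x QW(t) \,=\, \sum_{k\in \bbZ^2_*} q_k B_k(t)\, \e^{2\pi \i k\cdot x}\, e_k.
\]
Both $(QW(t))_{t\ge 0}$ and $(\tau_x QW(t))_{t\ge 0}$ are centered Gaussian processes with values in $H^r$, so equality of their laws reduces to equality of covariance structures. Pairing with arbitrary test elements $f,g\in H^{-r}$ and invoking $q_{-k}=q_k$ together with $B_{-k}=B_k$, a direct Fourier computation shows that the $x$-dependent phase factors cancel, so that the covariance
\[
\bbE\bigl[\langle \tau_x QW(s), f\rangle\, \overline{\langle \tau_x QW(t), g\rangle}\bigr]
\]
is independent of $x\in\bbR^2$ and coincides with the corresponding covariance of $(QW(t))_{t\ge 0}$.

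The principal technical point is the phase cancellation in the covariance computation, which hinges on the interplay of the reality and evenness symmetries of the noise data. Once this is in place, both parts reduce to standard Hilbert-valued stochastic-series arguments: Hilbert--Schmidt convergence together with Doob's inequality for the first claim, and matching of Gaussian covariance operators for the second.
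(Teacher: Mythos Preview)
Your proposal is correct and follows the same route as the paper: for the first claim the paper simply invokes Proposition~4.2 of Da~Prato--Zabczyk, whose content is exactly the truncated-sum Cauchy argument (with Doob's inequality for the pathwise upgrade) that you spell out, while for the second claim both you and the paper reduce to the observation that a centered Gaussian process is determined by its covariance operator and that the latter is unchanged under $\tau_x$. Your version is more self-contained, but there is no substantive difference in strategy.
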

\proof The first part of the proposition follows directly from  Proposition 4.2, p. 88 of \cite{DaPrato-Zabczyk}. The second part is a simple consequence of the fact that the processes in question have the same covariance operator as $\left(QW(t)\right)_{t\ge0}$. \qed

\section{Formulation of the main results}\label{sec2.3}

In this section we make it precise what we mean by a solution of \eqref{E11b} with vector field $\vec u$  given by   the solution of the  Navier--Stokes equations \eqref{E11a} and formulate precisely the main results of the paper dealing with the long time, large scale behavior of the trajectory.

Since, as it turns out, the components of the solution of the N.S.E.  belong to $V$, see \cite{MS}, if the initial condition $\vec u_0\in V$, we cannot use equation \eqref{E11b} for a  direct definition of the solution because the point evaluation for the field is not well defined (not to mention the question of the existence and uniqueness of solutions to the o.d.e. in question).

\subsection{Vorticity formulation of the N.S.E}
Note that the rotation 
$$
\xi(t):={\rm rot}\, \vec u(t)=\partial_2u_1(t)-\partial_1u_2(t)
$$ 
of $\vec u(t,x)=(\vec u_1(t,x),\vec u_2(t,x))$,  satisfies
\begin{equation}\label{E25a0}
d \xi(t) =[\Delta\xi(t) -B_0(\xi(t))]d t + Qd  W(t), \qquad \xi(0)=w\in H, 
\end{equation}
with  a cylindrical Wiener process $W(t)$, $t\ge 0$, on $H$, non-anticipative with respect to the filtration $\{{\cal F}_t,\,t\ge0\}$, a certain Hilbert--Schmidt operator $Q\in L_{(HS)}(H,H)$,  and $B_0(\xi):=B_0(\xi,\xi)$,   $\xi\in V$, where
$
B_0(h,\xi):=\vec u\cdot\nabla \xi,
$
with $\vec u:={\cal K}(h)$. 
Let ${\cal E}_T:=C([0,T];H)\cap L^2([0,T];V)$.
\begin{definition}\label{D21a}
{\rm A measurable and $({\cal F}_{t})$-adapted, $H$-valued process $\xi=\left\{\xi(t),\,t\ge0\right\}$ is a solution to $\REQ{E25a0}$  if  for any $T\in (0,+\infty)$, $\xi\in L^2(\Omega, {\cal E}_T, \mathbb{P})$ and 
\begin{equation}
\label{020512a}
\xi(t) = \e ^{\Delta t} w - \int_0^t \e ^{\Delta (t-s)} B_0(\xi(s))d s +   \int_0^t \e ^{\Delta (t-s)} Qd W(s)
\end{equation}
for all $t\ge 0$. }
\end{definition}
The following estimate comes from \cite{MP}, see Lemma A. 3, p. 39.
\begin{proposition}
\label{propMP}
For any $T,N>0$ there exists $C>0$ such that
\begin{equation}
\label{012703}
\bbE\left[\sup_{t\in[0,T]}(|\xi(t)|^2+t\|\xi(t)\|^2)^N\right]\le C(1+|w|^{4N}),\quad \forall\,w\in H.
\end{equation}
\end{proposition}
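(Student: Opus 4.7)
The plan is to bound the two quantities $\sup_{t\in[0,T]}|\xi(t)|^{2N}$ and $\sup_{t\in[0,T]}t^N\|\xi(t)\|^{2N}$ separately and then combine them via the elementary inequality $(a+b)^N\le 2^{N-1}(a^N+b^N)$.

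For the first, I would apply It\^o's formula to $|\xi(t)|^{2N}$ and exploit the crucial two-dimensional cancellation $\langle B_0(\xi),\xi\rangle=0$, which follows from $\nabla\cdot{\cal K}(\xi)=0$. This produces the identity
\[
|\xi(t)|^{2N}+2N\int_0^t|\xi(s)|^{2N-2}\|\xi(s)\|^2\,ds=|w|^{2N}+\int_0^t D_N(\xi(s))\,ds+M_t,
\]
where $D_N(\xi)$ is a lower-order drift controlled by $|\xi|^{2N-2}\|Q\|_{L_{(HS)}(H,H)}^2$ and $M_t$ is a local martingale. An induction on $N$, together with the Burkholder--Davis--Gundy inequality and Gr\"onwall's lemma, yields $\bbE\sup_{t\le T}|\xi(t)|^{2N}\le C(1+|w|^{2N})$ and, as a byproduct, $\bbE\bigl[\bigl(\int_0^T\|\xi(s)\|^2\,ds\bigr)^N\bigr]\le C(1+|w|^{2N})$.

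For the second, I would apply It\^o's formula to $t\|\xi(t)\|^2$, working first at the level of finite-dimensional Galerkin approximations since $w$ need not lie in $V$. This yields, formally,
\[
d(t\|\xi\|^2)=\|\xi\|^2\,dt+t\bigl[-2|\Delta\xi|^2+2\langle -\Delta\xi,B_0(\xi)\rangle+\|(-\Delta)^{1/2}Q\|_{L_{(HS)}(H,H)}^2\bigr]\,dt+dN_t.
\]
The decisive step is controlling the trilinear term. In dimension two, Ladyzhenskaya's inequality---a consequence of \eqref{gagliardo-nirenberg}---together with the Biot--Savart bound $|{\cal K}(\xi)|\le C|\xi|$ gives
\[
|\langle -\Delta\xi,B_0(\xi)\rangle|\le C|\xi|\,\|\xi\|^{1/2}|\Delta\xi|^{3/2}\le \tfrac{1}{2}|\Delta\xi|^2+C|\xi|^4\|\xi\|^2,
\]
so the $|\Delta\xi|^2$ contribution is absorbed into the dissipation and
\[
d(t\|\xi\|^2)\le\bigl(\|\xi\|^2+Ct|\xi|^4\|\xi\|^2\bigr)\,dt+dN_t.
\]
Raising to the $N$-th power, taking the supremum, applying BDG to the martingale part, and splitting the $|\xi|^4\|\xi\|^2$ drift via Cauchy--Schwarz between $\sup_{s\le T}|\xi(s)|^{4N}$ and $\bigl(\int_0^T\|\xi(s)\|^2\,ds\bigr)^{2N}$---both controlled by powers of $|w|$ via the first step (applied with $2N$ in place of $N$)---closes the estimate at the claimed order $|w|^{4N}$.

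The main obstacle is justifying It\^o's formula applied to $\|\xi(t)\|^2=|\xi(t)|_1^2$, since $w$ lies only in $H$, so that $\xi(t)\in V$ is guaranteed only for $t>0$ and the quantity $t\|\xi(t)\|^2$ is not an a priori $C^{1,2}$ functional in the classical sense. I would circumvent this by truncating with the Galerkin projections $P_n$ onto $\mathrm{span}\{e_k:|k|\le n\}$, performing all the computations above for $P_n\xi$ (where $V$-regularity is automatic and the finite-dimensional It\^o calculus applies), and then passing to the limit $n\to\infty$ using the compactness in ${\cal E}_T$ guaranteed by the uniform estimates together with the lower semicontinuity of $|\cdot|_1$, as in the approach of \cite{MP}.
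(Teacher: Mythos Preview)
Your approach is sound and close in spirit to the argument in \cite{MP} (which the paper simply cites for this proposition; the analogous computation reproduced in the paper itself is the proof of part~1) of Theorem~\ref{T2}). The main structural difference is that the paper applies It\^o's formula to the combined quantity $\zeta(t):=|\xi(t)|^2+t\|\xi(t)\|^2$ rather than to the two pieces separately. This yields a single pathwise inequality of the form \eqref{032502}, after which one raises to the $N$-th power and controls the stochastic remainder $U(t)$ via the exponential martingale inequality (instead of BDG), pairing this with the exponential $H$-moment bound \eqref{010712-00}. Your separation-plus-BDG route is a legitimate alternative.

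One point to watch: with your stated trilinear remainder $C|\xi|^4\|\xi\|^2$, the drift term produces $\bbE\bigl[\sup_s|\xi(s)|^{4N}\bigl(\int_0^T\|\xi\|^2\bigr)^{N}\bigr]$, and Cauchy--Schwarz between $\sup|\xi|^{8N}$ and $\bigl(\int\|\xi\|^2\bigr)^{2N}$ only yields $C(1+|w|^{6N})$, not the claimed $C(1+|w|^{4N})$. To recover the sharp exponent you can either (i) work with the combined functional $\zeta$, so that the factor $t\|\xi\|^2$ in the remainder is reabsorbed, or (ii) sharpen the trilinear bound via the 2D identity $\langle B_0(\xi),\Delta\xi\rangle=-\int(\partial_i u_j)(\partial_j\xi)(\partial_i\xi)\,dx$, which gives $|\langle B_0(\xi),\Delta\xi\rangle|\le C|\xi|\,\|\xi\|\,|\xi|_2\le\tfrac12|\xi|_2^2+C|\xi|^2\|\xi\|^2$, one power of $|\xi|^2$ lower than what you wrote. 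With either fix your argument closes at $|w|^{4N}$.
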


Let  $\vec u(t):={\cal K}(\xi(t))$. 
Using the above proposition and \eqref{embed} we conclude that
\begin{corollary}
\label{cor012703}
For any $t>0$,  $\vec u(t)\in C(\bbT)$ and
\begin{equation}
\label{022703}
\int_0^t\|\vec u(s)\|_{\infty}ds<+\infty,\quad \mathbb{P}-\mbox{a.s.}
\end{equation}
\end{corollary}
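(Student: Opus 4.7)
The plan is to chain three estimates already stated in the preliminaries so as to dominate $\|\vec u(s)\|_\infty$ by a function of $\xi(s)$ which is pathwise integrable, then use Proposition \ref{propMP} and the fact that $\xi \in L^2(\Omega,\mathcal{E}_T,\mathbb{P})$ to close everything.

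Fix a small $\epsilon\in(0,1)$. First, since $\vec u(t)=\mathcal{K}(\xi(t))$, the bound \eqref{010512a} applied with $r=\epsilon$ gives $|\vec u_i(t)|_{1+\epsilon}\le|\xi(t)|_\epsilon$ for $i=1,2$. Combining with the Sobolev embedding \eqref{embed} (with $s=\epsilon$) and the Gagliardo--Nirenberg interpolation \eqref{gagliardo-nirenberg} (with $s=1$, $\beta=\epsilon$), I would obtain the pointwise bound
\begin{equation*}
\|\vec u(s)\|_\infty \;\le\; C\,|\vec u(s)|_{1+\epsilon}\;\le\; C\,|\xi(s)|_\epsilon \;\le\; C\,|\xi(s)|^{1-\epsilon}\|\xi(s)\|^\epsilon .
\end{equation*}

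For the first assertion, fix $t>0$. Proposition \ref{propMP} (applied, say, with $N=1$ and any $T>t$) shows that $|\xi(t)|^2+t\|\xi(t)\|^2<\infty$ almost surely, so both factors on the right of the displayed inequality are finite at time $t$. Hence $\vec u_i(t)\in H^{1+\epsilon}$ a.s., and the embedding $H^{1+\epsilon}\hookrightarrow C(\bbT)$ gives $\vec u(t)\in C(\bbT)$.

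For the second assertion, integrate the pointwise bound and separate the two factors by pulling the $|\xi(s)|^{1-\epsilon}$ term out as a supremum and applying H\"older's inequality with exponents $2/\epsilon$ and $2/(2-\epsilon)$ to the $\|\xi(s)\|^\epsilon$ factor:
\begin{equation*}
\int_0^t\|\vec u(s)\|_\infty\,ds \;\le\; C\sup_{s\in[0,t]}|\xi(s)|^{1-\epsilon}\cdot t^{1-\epsilon/2}\left(\int_0^t\|\xi(s)\|^2\,ds\right)^{\epsilon/2}.
\end{equation*}
By Definition \ref{D21a}, $\xi\in\mathcal{E}_t = C([0,t];H)\cap L^2([0,t];V)$ almost surely, so the supremum and the $L^2_t V$-integral on the right are both finite a.s., giving \eqref{022703}.

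There is no real obstacle here; the only delicate point is making sure the $t\to 0^+$ singularity of $\|\xi(s)\|$ is integrable, which is automatic from the $L^2([0,T];V)$-membership built into Definition \ref{D21a} once one picks $\epsilon<1$ so that $\|\xi(s)\|^\epsilon\in L^{2/\epsilon}([0,t])$ pathwise. The choice of $\epsilon$ is constrained only by $0<\epsilon<1$ (strict positivity for \eqref{embed}, strict sub-unity for \eqref{gagliardo-nirenberg}), so any $\epsilon\in(0,1)$ works.
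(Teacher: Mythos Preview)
Your proof is correct, and the route you take for the integral bound genuinely differs from the paper's. The paper bounds $\|\vec u(s)\|_\infty$ directly by $C\|\xi(s)\|$ (using \eqref{010512a} with $r$ small and the monotonicity $|\cdot|_s\le\|\cdot\|$ for $s\le1$), then invokes Proposition~\ref{propMP} to extract the pointwise control $\|\xi(s)\|\le \tilde C\,s^{-1/2}$, which is integrable on $(0,t]$. You instead interpolate via Gagliardo--Nirenberg to $\|\vec u(s)\|_\infty\le C|\xi(s)|^{1-\epsilon}\|\xi(s)\|^\epsilon$ and then use only the membership $\xi\in\mathcal{E}_t$ built into Definition~\ref{D21a}, closing with H\"older. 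Your argument is in a sense more economical for the integral estimate, since it does not require the refined $t\|\xi(t)\|^2$ bound of Proposition~\ref{propMP}; the paper's approach is more direct (no interpolation step) but leans on that sharper regularity. Both arguments need Proposition~\ref{propMP} for the first assertion, since $\xi\in L^2([0,t];V)$ alone does not guarantee $\xi(t)\in V$ at a \emph{fixed} time $t$.
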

\proof
The continuity of $\vec u(t,x)$ with respect to $x$, follows from the Sobolev embedding. From \eqref{010512a} we conclude that  there exists $C>0$ such that
\begin{equation}
\label{022803}
\|\vec u(s)\|_{\infty}\le C\|\xi(s)\|,\quad\forall\,s\ge 0.
\end{equation}
On the other hand from \eqref{012703} we conclude that for any $t>0$ there exists a random variable   $\tilde C$ that  is almost surely finite
and such that
$
\|\xi(s)\|\le \tilde Cs^{-1/2}
$ for all $s\in(0,t]$.
Combining this with \eqref{022803} we conclude \eqref{022703}.
\qed

\subsection{Definition of trajectory process and its ergodic properties}

\begin{definition}

\label{def-1a}
{\rm Let $x_0\in\bbR^2$. By a  {\em solution to} $\eqref{E11b}$  we mean any $({\cal F}_t)$-adapted  process $x(t)$, $t\ge0$, with   continuous trajectories,  such that
\begin{equation}
\label{integral-ode}
x (t) =x_0+\int_0^t\vec u(s,x(s))d s, \quad\forall\,t\ge0,\qquad \text{$\Bbb P$-a.s.}
\end{equation}
}
\end{definition}

For a given $\nu>0$ denote $e_{\nu}(w):=\exp\{\nu|w|^2\}$, $w\in H$.
\begin{theorem}\label{lab3}
Assume that $Q$ in \eqref{E25a} belongs to $L_{(HS)}(H,V)$  and has a trivial null space, i.e. $Qw=0$ implies $w=0$. 
Suppose that the initial vorticity is random, distributed on $H$ according to the law $\mu_0$ for which
 \begin{equation}
 \label{022011}
 \int_{H}e_{\nu_0}(w)\mu_0(dw)<+\infty
 \end{equation}  
 with a certain $\nu_0>0$.
 Finally, assume that $\{x(t;x_0),\,t\ge0\}$ is a solution of \eqref{E11b} corresponding to the initial data $x_0\in\bbR^2$.
  Then, the following are true:
\begin{enumerate}
\item[1)] \label{1} (Weak law of large numbers)   there exists $v_*=(v_{*,1},v_{*,2})\in\bbR^2$ such that
\begin{equation}
\lim_{T\to+\infty}\frac{x(T;x_0)}{T}= v_{*} \label{spwl}
\end{equation}
in probability.
\item[2)] (Existence of the asymptotic variance) there exists $D_{i j}\in[0,+\infty)$ such that
\begin{equation}
\lim_{T\to+\infty}\frac1T\bbE\left[(x_i(T;x_0)-v_{*,i}T)(x_j(T;x_0)-v_{*,j}T)\right]=D_{ij},\quad i,j=1,2.\label{D}
\end{equation}
\item[3)] \label{2} (Central limit theorem) 
Random vectors $(x(T;x_0)-v_{*}T)/\sqrt{T}$ converge in law, as $T\to +\infty$,  to  a zero mean normal law whose co-variance matrix equals ${\bf D}=[D_{ij}]$.
\end{enumerate}
\end{theorem}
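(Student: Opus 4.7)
The plan is to reformulate the long-time behaviour of $x(t;x_0)$ as an ergodic question about the Lagrangian vorticity process $\omega(t,\cdot)$ on the torus. Since $\vec u(t)={\cal K}(\xi(t))$ and the shifted vorticity $\omega(t,x)=\xi(t,x(t)+x)$ satisfies an s.p.d.e.\ of the same type as \eqref{E25a0} (translation invariance of the law of the driving noise absorbs the random shift), the trajectory admits the additive-functional representation
\begin{equation*}
x(t;x_0)=x_0+\int_0^t\psi_*(\omega(s))\,ds,
\end{equation*}
where $\psi_*(w):={\cal K}(w)(0)$ is a bounded \emph{linear} functional on $V$ that fails to be continuous on $H$. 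All three statements then reduce to a strong law, an asymptotic-variance formula and a central limit theorem for an additive functional of the Markov process $\omega(\cdot)$ on $H$, tested against an observable defined only on the smaller space $V$.

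I would then invoke two ergodic tools. First, the Hairer--Mattingly type spectral gap of Theorem \ref{T1}: the semigroup $P_t$ of $\omega$ has a unique invariant measure $\mu_*$ on $H$ and is exponentially contractive in a Wasserstein metric built from Lipschitz functions on $H$. Second, the Mattingly--Pardoux regularisation of Theorem \ref{T2} together with Proposition \ref{propMP}, \eqref{embed} and \eqref{010512a}, which force $\omega(t)\in V$ for every $t>0$ and give polynomial-in-$|w|$ moments of $\|\omega(t)\|$, whence $|\psi_*(\omega(t))|\le C\|\omega(t)\|$ is controlled. Composing one time step of regularisation with the Wasserstein contraction extends $P_t\psi_*$ to $H$ for $t>0$ and yields an exponential-in-$t$ bound on $|P_t\psi_*(w_1)-P_t\psi_*(w_2)|$, with prefactors controlled polynomially in $|w_j|$; the exponential-moment hypothesis \eqref{022011} is then propagated by the flow and absorbs these prefactors. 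This is the principal obstacle of the proof: the standard Wasserstein/Lipschitz ergodic theory does not apply to $\psi_*$ directly, and one has to thread the $V$-regularisation through the Hairer--Mattingly contraction in a quantitatively controlled way.

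Given these estimates, part 1 follows by a Shirikyan-type argument \cite{shirikyan,kowalczuk}: set $v_*:=\int_H\psi_*\,d\mu_*$, split $[0,T]$ into blocks of length $t_0>0$, and control $T^{-1}\int_0^T\psi_*(\omega(s))\,ds-v_*$ in $L^1(\bbP)$ using the extended contraction of $P_t\psi_*$ together with the $V$-bound on $\psi_*$ and the moment estimate \eqref{012703}. For parts 2 and 3 I would follow the Kipnis--Varadhan/Gordin martingale-approximation strategy. Define the corrector (Section \ref{lab21})
\begin{equation*}
\chi(w):=\sum_{n=0}^{\infty}\bigl(P_{nt_0}\psi_*(w)-v_*\bigr),
\end{equation*}
which converges by the extended exponential contraction and solves $\chi-P_{t_0}\chi=\psi_*-v_*$. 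A discrete-time telescoping produces a decomposition $x(T)-v_*T=M_T+R_T$ with $(M_T)$ a square-integrable martingale with stationary increments under $\mu_*$ and $R_T$ bounded in $L^2$. The classical martingale CLT yields part 3 once $T^{-1}\langle M\rangle_T$ is shown to converge, which follows from unique ergodicity of $\mu_*$ combined with the exponential mixing. The existence of the asymptotic variance in part 2 is then read off from the same decomposition (Section \ref{sec5.3}): $T^{-1}\bbE[(x_i(T)-v_{*,i}T)(x_j(T)-v_{*,j}T)]$ converges to the Green--Kubo expression identifying the entries of the matrix ${\bf D}$ through $\chi$ and $\mu_*$.
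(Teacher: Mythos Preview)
Your overall architecture matches the paper's exactly: reduce to an additive functional of the Lagrangian vorticity, combine the $V$-regularisation of Theorem~\ref{T2} with the Wasserstein gap of Theorem~\ref{T1} to push $P_t\psi_*$ down to $H$ with exponential decay (this is precisely Corollary~\ref{cor011011}), prove the LLN by direct moment estimates, build a corrector, and run a martingale CLT. So the strategy is right.

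There is, however, a genuine gap in your corrector. You set $\chi=\sum_{n\ge 0}(P_{nt_0}\psi_*-v_*)$, which solves $\chi-P_{t_0}\chi=\tilde\psi_*$. Telescoping this equation along the chain $(\omega(kt_0))_k$ produces a martingale for the \emph{sampled} sum $\sum_{k}\tilde\psi_*(\omega(kt_0))$, not for the continuous additive functional $\int_0^T\tilde\psi_*(\omega(s))\,ds=x(T)-v_*T$; the difference between the two is of order $T$, so your remainder $R_T$ is not bounded in $L^2$ and the decomposition fails. A second, related problem is that the $n=0$ term of your sum is $\psi_*-v_*$, defined only on $V$, so your $\chi$ is not a function on $H$; since $\mu_0$ lives on $H$, the term $\chi(\omega(0))$ in your boundary piece is undefined. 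The paper avoids both issues by taking the \emph{continuous-time} corrector $\chi=\int_0^\infty P_s\tilde\psi_*\,ds$ (Proposition~\ref{lab21}). This integral is defined on all of $H$ because $P_s\psi_*$ already lives on $H$ for every $s>0$ and the contribution of $\int_0^1$ is controlled by the $L^2([0,1];V)$ bound \eqref{022002}; moreover it satisfies the correct (infinitesimal) Poisson equation, so that $M_T=\chi(\omega(T))-\chi(\omega(0))+\int_0^T\tilde\psi_*(\omega(s))\,ds$ is genuinely a martingale (Proposition~\ref{lab20}). If you prefer a discrete scheme, the right object is $\chi=\sum_{n\ge 0}P_{nt_0}\Psi$ with $\Psi(w)=\int_0^{t_0}P_s\tilde\psi_*(w)\,ds$, which collapses to the same continuous integral. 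Finally, note that the increments are not stationary under $\mu_0$; the paper does not assume this, but instead verifies the conditions M1)--M4) of a martingale CLT (Theorem~\ref{lab30b}) directly, using \eqref{012011} and Corollary~\ref{cor011011} to pass from $\mu_0$ to $\mu_*$ inside the averaged quadratic variation.
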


\section{Lagrangian and tracer trajectory processes}


\subsection{Uniqueness in law of the trajectory process}

Define the {\em Lagrangian velocity process} as
$$
\vec \eta(t,x)=(\eta_1(t,x),\eta_2(t,x)):=\vec u(t,x(t)+x),\qquad t\ge 0,\ x\in \mathbb{R}^2. 
$$ 
Suppose  that the forcing $\vec F$  is a white noise in time and spatially homogeneous Gaussian  random field.  
%
Using It\^o's formula  we obtain that its vorticity, given by,
$$
\om(t,x):={\rm rot}\,
\vec\eta(t,x)=\xi(t,x(t)+x)
$$ 
satisfies $\om(0)=\tau_{x_0}w\in H$ and 
\begin{equation}\label{E25a}
d \om(t) =[\Delta\om(t) -B_0(\om(t))+B_1(\om(t))]d t + Qd  W(t),
\end{equation}
where $W$ is   an  $({\cal F}_t)$-adapted cylindrical Wiener process on $H$,  $Q\in L_{(HS)}(H,H)$ and 
$$
B_0(\om):=B_0(\om,\om),  \quad B_1(\om):=B_1(\om,\om),
$$
$$
B_0(h,\om):=\vec \eta\cdot\nabla \om,\quad B_1(h,\om):=\vec \eta(0)\cdot\nabla \om, \quad\om\in V,
$$
with $\vec\eta:={\cal K}(h)$, for more details see \cite{fkp, kp}.  Since we have assumed that $\om\in V$ and, by the Sobolev embedding, ${\cal K}(V)$ is embedded into  the space  $C(\bbT;\bbR^2)$ of two dimensional, continuous trajectory vector fields on $\bbT$, we see that  the evaluation of $\vec \eta$ is well defined, and therefore there is no ambiguity in the definition of 
$B_1(\om)$ for $\om\in V$.

%

\begin{definition}\label{D21}
{\rm A measurable,  $({\cal F}_{t})$-adapted, $H$-valued process $\om=\left\{\om(t),\,t\ge0\right\}$ is a solution to $\REQ{E25a}$,  with the initial condition $\om(0)=w$, if  for any $T>0$, $\om\in L^2(\Omega, {\cal E}_T,\mathbb{P})$ and 
\begin{equation}
\label{020512}
\om(t) = \e ^{\Delta t} w - \int_0^t \e ^{\Delta (t-s)} B_0(\om(s))d s +
 \int_0^t \e ^{\Delta (t-s)}  B_1(\om(s))d s+   \int_0^t \e ^{\Delta (t-s)} Qd W(s),
\end{equation}
$\mathbb{P}$-a.s. for all $t\ge 0$.}
\end{definition}

Sometimes, when we wish to highlight the dependence on the initial condition and the Wiener process, we shall write $\om(t;w,W)$. We shall omit writing one, or both of these parameters when they are obvious from the context.

Using a Galerkin approximation argument, as in Section 3 of  \cite{MS}, see also Appendix \ref{secAp1} below for the outline of the argument, we conclude the following.
\begin{theorem}
\label{MF1}
Given an  initial condition $w\in H$ and an $({\cal F}_t)$-adapted  cylindrical Wiener process $(W(t))_{t\ge0}$,  there exists a unique solution to \eqref{E25a} in the sense of Definition \ref{D21}. Moreover, processes $\{\om(t;w),\,t\ge0\}$ form a Markov family with the corresponding  transition probability semigroup $\{P_t,\,t\ge0\}$ defined on  the space $ C_b(H)$  of continuous and bounded  functions on $H$. 
\end{theorem}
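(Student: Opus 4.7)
My plan is to follow the Galerkin/variational scheme used in \cite{MS} for the stochastic vorticity equation, treating $B_1$ as a lower-order perturbation of $B_0$ that preserves the crucial energy cancellation. Let $H_N:=\mathrm{span}\{e_k:0<|k|\le N\}$ with orthogonal projection $\Pi_N$. First I would introduce the Galerkin SDE
\[
d\om_N(t)=\Pi_N\bigl[\Delta\om_N(t)-B_0(\om_N(t))+B_1(\om_N(t))\bigr]dt+\Pi_N Q\,dW(t),\qquad \om_N(0)=\Pi_N w,
\]
whose coefficients are locally Lipschitz on the finite dimensional space $H_N$. A priori estimates (see below) rule out explosion, so the $\om_N$'s exist globally.

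The key observation for the a priori bounds is that both nonlinear terms are energy--orthogonal to $\om$. For $B_0$ this is the classical 2D cancellation $\langle\vec\eta\cdot\nabla\om,\om\rangle=0$ obtained from $\nabla\cdot\vec\eta=0$. For the new term note that $\vec\eta(0)$ is a (time dependent) \emph{constant} vector in $x$, so integration by parts gives
\[
\langle B_1(\om),\om\rangle=\int_{\bbT}\vec\eta(0)\cdot\nabla\om\,\om\,dx=\tfrac12\vec\eta(0)\cdot\int_{\bbT}\nabla(\om^2)\,dx=0.
\]
Applying It\^o's formula to $|\om_N(t)|^2$ therefore yields, exactly as in Lemma~A.3 of \cite{MP}, the uniform bound
\[
\bbE\sup_{t\in[0,T]}|\om_N(t)|^2+\bbE\int_0^T\|\om_N(s)\|^2ds\le C_T(1+|w|^2),
\]
together with higher moment and $t\|\om_N(t)\|^2$ estimates analogous to \eqref{012703}. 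These, combined with a bound on $\partial_t\om_N$ in $L^2([0,T];V')$ (using that $B_1(\om)=\vec\eta(0)\cdot\nabla\om\in L^2([0,T];H)$ and $B_0(\om)\in L^2([0,T];V')$), give tightness in $\mathcal{E}_T$ via the Aubin--Lions--Dubinskii compactness lemma.

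Next, by Skorokhod's representation theorem, along a subsequence the $\om_N$'s converge $\bbP$--a.s. in $C([0,T];V')\cap L^2([0,T];H)$ to some process $\om$, and the stochastic convolution term converges by It\^o isometry. Identifying the limit of the nonlinear terms in the weak formulation is routine: $B_0(\om_N)\to B_0(\om)$ as a distribution and the Galerkin projection disappears by density; $B_1(\om_N)\to B_1(\om)$ since $\om_N\to\om$ in $L^2([0,T];H)$ and $\vec\eta_N(0)=\mathcal K(\om_N)(0)$ converges by the $H^{1+s}\hookrightarrow C(\bbT)$ embedding \eqref{embed} combined with the uniform $L^2([0,T];H^{1+s})$ bound coming from Gagliardo--Nirenberg and the $\|\om_N(s)\|$-estimate. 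This produces an $(\mathcal F_t)$-adapted solution in the sense of Definition \ref{D21}; an additional factorization/stopping argument promotes the convergence to convergence on the \emph{original} probability space once uniqueness is established.

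For pathwise uniqueness, given two solutions $\om_1,\om_2$ driven by the same $W$, let $\rho:=\om_1-\om_2$. It solves a linear equation and It\^o's formula gives
\[
\frac{d}{dt}|\rho|^2+2\|\rho\|^2=-2\langle B_0(\rho,\om_1),\rho\rangle-2\langle B_1(\rho,\om_1),\rho\rangle,
\]
the other bilinear pieces vanishing by the cancellations above. Ladyzhenskaya's inequality $\|v\|_{L^4}^2\le C|v|\,\|v\|$ in 2D bounds $|\langle B_0(\rho,\om_1),\rho\rangle|\le \tfrac12\|\rho\|^2+C\|\om_1\|^2|\rho|^2$, while $B_1$ is controlled by $|\vec\eta_1(0)|\le C\|\om_1\|$ and Young's inequality. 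Absorbing the $\|\rho\|^2$ terms in the dissipation and using $\int_0^T\|\om_1(s)\|^2ds<\infty$ a.s., Gronwall yields $\rho\equiv 0$, hence uniqueness. The Markov and Feller properties on $C_b(H)$ then follow from pathwise uniqueness together with continuous dependence of $\om(t;w)$ on $w\in H$, which is proved by the same Gronwall computation applied to $\om(t;w_1)-\om(t;w_2)$. The main obstacle throughout is verifying that the pointwise--in--space quantity $\vec\eta(0)$ is well defined and stable under the approximation, which I expect to be controlled by the instantaneous $V$-regularization \eqref{012703}.
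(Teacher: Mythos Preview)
Your proposal is correct and follows essentially the same route as the paper: Galerkin approximation on $H_N=\mathrm{span}\{e_k:0<|k|\le N\}$, energy cancellation $\langle B_0(\om),\om\rangle=\langle B_1(\om),\om\rangle=0$ yielding the uniform bound \eqref{E26N}, compactness to pass to the limit in the mild formulation, and a Gronwall argument for uniqueness/continuous dependence. The only differences are cosmetic: the paper phrases compactness tersely as ``compact in $L^2(\Omega,\mathcal E_T)$'' where you spell out Aubin--Lions plus Skorokhod, and for the uniqueness estimate the paper uses the Constantin--Foias trilinear bounds \eqref{020712}--\eqref{030712} (with $r=1/2$ and Gagliardo--Nirenberg) in place of your Ladyzhenskaya inequality, arriving at the same inequality \eqref{E27}. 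One small slip: in your uniqueness step, the quantity $|\vec\eta_1(0)|$ multiplying $\nabla\om_1$ in $B_1(\rho,\om_1)=\mathcal K(\rho)(0)\cdot\nabla\om_1$ is $|\mathcal K(\rho)(0)|\le C|\rho|_{1/2}$, not $C\|\om_1\|$; after interpolation this still absorbs into $\tfrac12\|\rho\|^2+C\|\om_1\|^2|\rho|^2$ exactly as in the paper's lemma.
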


Using the Yamada--Watanabe result, see e.g. \cite{YW} (Corollary after Theorem 4.1.1), or \cite{IW}, from the above theorem we can conclude the following result, see \cite{kp}. 
\begin{corollary}
\label{unique-law}
 Solutions of  \eqref{E25a} have  the uniqueness in law property, i.e.  the laws  over $C([0,+\infty);H)$ of any two solutions of   \eqref{E25a}  starting with the same initial data (but possibly based on different cylindrical Wiener processes) coincide.
\end{corollary}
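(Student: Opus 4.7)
The plan is to invoke the Yamada--Watanabe principle, which upgrades pathwise uniqueness together with weak existence into uniqueness in law. Theorem \ref{MF1} supplies both ingredients: for any prescribed filtered probability space carrying an $(\cF_t)$-adapted cylindrical Wiener process $W$ and any initial datum $w\in H$, it gives existence of a solution of \eqref{E25a} and, crucially, uniqueness on that same probability space when driven by that same $W$ and starting from that same $w$. This is precisely pathwise uniqueness in the sense required by the infinite-dimensional version of Yamada--Watanabe for stochastic evolution equations (as formulated, for instance, by Ondrej\'at or Kurtz).

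First I would set up the standard gluing. Let $\om^{(i)}$, $i=1,2$, be two solutions of \eqref{E25a} defined on possibly different probability spaces $(\Om^{(i)},\cF^{(i)},\bbP^{(i)})$, driven by cylindrical Wiener processes $W^{(i)}$ and starting from the common deterministic initial value $w$. Denote by $\mu_i$ the joint law of $(\om^{(i)},W^{(i)})$ on $C([0,+\infty);H)\times \cW$, where $\cW$ is a Polish space on which a cylindrical Wiener process on $H$ can be realized. Using regular conditional distributions each $\mu_i$ factors through its Wiener marginal; gluing the two factorizations over this common marginal produces a single space $(\tilde\Om,\tilde\cF,\tilde\bbP)$ carrying one cylindrical Wiener process $\tilde W$ and two processes $\tilde\om^{(1)},\tilde\om^{(2)}$ such that each pair $(\tilde\om^{(i)},\tilde W)$ has the same law as $(\om^{(i)},W^{(i)})$, and such that $\tilde\om^{(1)}$ and $\tilde\om^{(2)}$ are conditionally independent given $\tilde W$.

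Next I would apply pathwise uniqueness from Theorem \ref{MF1}, with respect to the filtration generated by $\tilde W$ and both $\tilde\om^{(i)}$, to conclude $\tilde\om^{(1)}=\tilde\om^{(2)}$ $\tilde\bbP$-almost surely in $C([0,+\infty);H)$. Projecting onto the first coordinate then yields that $\om^{(1)}$ and $\om^{(2)}$ have the same law as $H$-valued processes, which is the desired uniqueness in law. For a random initial datum with distribution $\mu_0$ one conditions on the starting value and integrates the deterministic-initial-condition result against $\mu_0$.

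The one step that I expect to need genuine care, rather than being a routine citation, is the verification that $\tilde W$ remains a cylindrical Wiener process with respect to the enlarged filtration generated by $(\tilde W,\tilde\om^{(1)},\tilde\om^{(2)})$; only then does each $\tilde\om^{(i)}$ qualify as a solution of \eqref{E25a} in the sense of Definition \ref{D21} on $(\tilde\Om,\tilde\cF,\tilde\bbP)$, so that Theorem \ref{MF1} is legitimately applicable. This is ensured by the conditional independence of the two solution components given $\tilde W$ built into the construction, which preserves the martingale property and the quadratic covariation characterization of $\tilde W$ under the enlargement; modulo this check the conclusion follows from the abstract Yamada--Watanabe statement.
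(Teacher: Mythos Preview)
Your proposal is correct and follows exactly the same approach as the paper, which simply states that the result follows from Theorem~\ref{MF1} via the Yamada--Watanabe principle (citing \cite{YW}, \cite{IW}, and \cite{kp}). You have merely unpacked the content of that citation by sketching the standard gluing-plus-pathwise-uniqueness argument, including the care needed about the enlarged filtration; the paper treats all of this as a black-box reference.
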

This immediately implies the uniqueness in law property for solutions of \eqref{E11b}.
\begin{corollary}
\label{unique-law-traj}
 Suppose that $\xi$ and $\xi'$ are two solutions of \eqref{E25a0} with the identical initial data but possibly based on two cylindrical Wiener processes with the respective filtrations $({\cal F}_t)$ and $({\cal F}_t')$. Assume also that $x(\cdot)$ and $x'(\cdot)$ are the solutions of \eqref{E11b} corresponding to $\vec u(t)={\cal K}(\xi(t))$ and $\vec u'(t)={\cal K}(\xi'(t))$,  respectively. Then, the    laws  of the pairs $(x(\cdot),\xi(\cdot))$ and $(x'(\cdot),\xi'(\cdot))$  over $C([0,+\infty), \mathbb R^2)\times C([0,+\infty),H)$  coincide.
\end{corollary}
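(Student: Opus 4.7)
My plan is to exhibit the pair $(x(\cdot),\xi(\cdot))$ as the image under a fixed Borel measurable map $\Phi$ of the Lagrangian vorticity $\om(\cdot)$, and then apply Corollary \ref{unique-law} to $\om$.

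Starting from a solution pair $(\xi, x)$ I would set $\om(t) := \tau_{x(t)}\xi(t)$. Since translations are isometries on $H$ and $V$ and $x$ is $({\cal F}_t)$-adapted and continuous, $\om$ is $({\cal F}_t)$-adapted with paths in $\cE_T$ for every $T$. As the authors recall between Corollary \ref{cor012703} and Definition \ref{D21}, It\^o's formula shows that $\om$ then solves \eqref{E25a} with initial datum $\tau_{x_0}w$, driven by some cylindrical Wiener process on $H$ that is again $({\cal F}_t)$-adapted thanks to the spatial homogeneity of the noise. The analogous construction applied to $(\xi', x')$ yields a second solution $\om'$ of \eqref{E25a}, with the same initial datum $\tau_{x_0}w$, relative to the filtration $({\cal F}_t')$.

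To invert this passage, I would use that $\vec\eta(s) = \cK(\om(s))$ and $\vec u(s,x(s)) = \vec\eta(s,0)$, so that \eqref{integral-ode} reads
\begin{equation*}
x(t) = x_0 + \int_0^t \cK(\om(s))(0)\, ds, \qquad \xi(t) = \tau_{-x(t)}\om(t).
\end{equation*}
Arguing as in the proof of Corollary \ref{cor012703} — combining \eqref{010512a}, \eqref{embed} and the bound $\|\om(s)\|\le \widetilde C s^{-1/2}$ on $(0,T]$ supplied by Proposition \ref{propMP} — the integrand is $\bbP$-a.s.\ locally integrable in $s$, and these two formulas together define a Borel map $\Phi\colon C([0,\infty);H)\to C([0,\infty);\mathbb R^2)\times C([0,\infty);H)$ on a full-measure set, satisfying $\Phi(\om)=(x,\xi)$ and $\Phi(\om')=(x',\xi')$ by construction.

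By Corollary \ref{unique-law} the laws of $\om$ and $\om'$ on $C([0,\infty);H)$ coincide; pushing this common law through $\Phi$ immediately gives the equality of the laws of $(x,\xi)$ and $(x',\xi')$. The only delicate step is the well-definedness and Borel measurability of $\Phi$, since $h\mapsto \cK(h)(0)$ is \emph{not} continuous on the ambient Hilbert space $H$; this is rescued by the parabolic smoothing in Proposition \ref{propMP}, which places $\om(s)$ instantaneously in $V$, where pointwise evaluation of $\cK(\om(s))$ is controlled by the Sobolev embedding \eqref{embed}.
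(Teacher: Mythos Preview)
Your proof is correct and follows essentially the same route as the paper: pass to the Lagrangian vorticity $\om(t)=\tau_{x(t)}\xi(t)$, invoke Corollary \ref{unique-law} to identify the laws of $\om$ and $\om'$, and then recover $(x,\xi)$ as a measurable functional of $\om$ via the formulas $x(t)=x_0+\int_0^t\cK(\om(s))(0)\,ds$ and $\xi(t)=\tau_{-x(t)}\om(t)$. The only cosmetic difference is that the paper phrases the domain of the reconstruction map as $L^1_{\rm loc}([0,\infty);V)$ (citing the $\om$-analogue of Proposition \ref{propMP}, namely part 1) of Theorem \ref{T2}) rather than a full-measure subset of $C([0,\infty);H)$; your observation that $\|\om(s)\|=\|\xi(s)\|$ lets you transfer Proposition \ref{propMP} directly, which is equally valid.
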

\proof
Both
$
\om(t,\cdot)=\xi(t,x(t)+\cdot)
$ and $
\om'(t,\cdot)=\xi'(t,x'(t)+\cdot)
$ satisfy \eqref{E25a}. According to Corollary \ref{unique-law} they have identical laws on $C([0,+\infty),H)$ with the initial condition $\tau_{x_0}w$. In fact, due to an analogue of Proposition \ref{propMP} that holds for the process $\om(\cdot)$, see part 1) of Theorem \ref{T2} this law is actually supported in $L^1_{{\rm loc}}([0,+\infty),V)$. We can write therefore that $(x(\cdot),\xi(\cdot))=\Psi(\om(\cdot))$ and $(x'(\cdot),\xi'(\cdot))=\Psi(\om'(\cdot))$, where the mapping
$$
\Psi=(\Psi_1,\Psi_2):L^1_{{\rm loc}}([0,+\infty),V)\to C([0,+\infty), \mathbb R^2)\times C([0,+\infty),H)
$$
is defined as
\begin{eqnarray*}
&&\Psi_1(X)(t):=x_0+\int_0^t{\cal K}(X(s))(0)ds,\\
 && \Psi_2(X)(t,x):=X(t,x-\Psi_1(X)(t)),\quad\forall X\in L^1_{{\rm loc}}([0,+\infty),V), 
 \end{eqnarray*}
and the uniqueness claim made in the corollary follows.
\qed

\subsection{Existence of solution of \eqref{E11b}}

\begin{definition}

\label{def-1}
{\em 
Suppose that   $(\Om,{\cal F}, ({\cal F}_t),\bbP)$ is a filtered probability
space. Let $x_0\in\bbR^2$. By {\em a  weak
solution to} $\eqref{E11b}$  we mean a pair consisting of  a  continuous trajectory $({\cal F}_t)$-adapted  process $x(t)$, $t\ge0$, and an $({\cal F}_t)$-adapted solution $\xi(t)$, $t\ge0$,   to \eqref{E25a0} such that
 \eqref{integral-ode} holds.}
\end{definition}


Suppose now that we are given a filtration $({\cal F}_t)$ and  an  ${\cal F}_t$-adapted solution $\om$ of \eqref{E25a}  with the initial condition $\om(0)=\tau_{x_0}w$.  Define
$
(x(\cdot),\xi(\cdot)):=\Psi(\om(\cdot)).
$
One can easily check, using It\^o's formula, that $(x(\cdot),\xi(\cdot))$  is a  weak solution in the sense of Definition \ref{def-1}. Therefore we conclude the following.
\begin{proposition}
\label{CTM}
Given a filtered probability space there exists a weak solution of  \eqref{E11b}.
\end{proposition}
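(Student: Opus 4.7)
The plan is to invert the construction that was used at the start of the subsection to derive \eqref{E25a} from \eqref{E25a0}. Concretely, on the given filtered probability space $(\Omega,\mathcal F,(\mathcal F_t),\mathbb P)$, I would first pick an $(\mathcal F_t)$-adapted cylindrical Wiener process $W$ on $H$ and invoke Theorem \ref{MF1} to obtain the unique $(\mathcal F_t)$-adapted solution $\omega(\cdot)$ of the Lagrangian vorticity equation \eqref{E25a} with initial condition $\omega(0)=\tau_{x_0}w$. Then I would simply set $(x(\cdot),\xi(\cdot)):=\Psi(\omega(\cdot))$, where $\Psi$ is the mapping introduced at the end of the previous subsection. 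Continuity and adaptedness of $x(\cdot)$ will be automatic once one observes that $\mathcal K(\omega(s))(0)=\mathcal K_1(\omega(s))(0)\,\bme_1+\mathcal K_2(\omega(s))(0)\,\bme_2$ is locally integrable in $s$: indeed, Theorem \ref{T2} (cited in the proof of Corollary \ref{unique-law-traj}) yields $\omega(s)\in V$ for a.e.\ $s>0$, and the bound \eqref{010512a} combined with the Sobolev embedding \eqref{embed} then gives pointwise control of $\mathcal K(\omega(s))$ exactly as in Corollary \ref{cor012703}.

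The heart of the argument is verifying that $\xi(t,y):=\omega(t,y-x(t))$ is a genuine solution of \eqref{E25a0} and that the pair $(x,\xi)$ satisfies \eqref{integral-ode}. The second assertion is built into the definition of $\Psi_1$, since
\begin{equation*}
x(t)=x_0+\int_0^t \mathcal K(\omega(s))(0)\,ds=x_0+\int_0^t \mathcal K(\xi(s))(x(s))\,ds=x_0+\int_0^t \vec u(s,x(s))\,ds,
\end{equation*}
using that by construction $\omega(s)=\tau_{x(s)}\xi(s)$ and that the operator $\mathcal K$ commutes with spatial translations. To obtain the first assertion I would apply the infinite-dimensional It\^o formula to the map $(t,\omega)\mapsto \tau_{-x(t)}\omega(t)$. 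The spatial derivative in $x$ produces a term $-\dot x(t)\cdot\nabla\xi(t)=-\vec u(t,x(t))\cdot\nabla \xi(t)$, which exactly cancels the extra drift $B_1(\omega)=\vec\eta(0)\cdot\nabla\omega$ present in \eqref{E25a} (after translating back); the Laplacian and the nonlinearity $B_0$ are invariant under spatial translations, and the translated stochastic integral $\int_0^t\tau_{-x(s)}Q\,dW(s)$ has the same law as $\int_0^t Q\,dW(s)$ by the homogeneity statement of the proposition in Section~2.2.

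The main obstacle I expect is the rigorous justification of this It\^o change of frame. Because $x(t)$ is not in general $C^2$ in $t$ (it is only absolutely continuous) and the map $\omega\mapsto \tau_{-x}\omega$ is not smooth on $H$, the formula has to be applied in the weak sense: test both sides against a smooth divergence-free test function $\varphi\in C^\infty(\mathbb T)$, in which case $\langle \tau_{-x(t)}\omega(t),\varphi\rangle=\langle \omega(t),\tau_{x(t)}\varphi\rangle$ becomes a finite-dimensional It\^o functional, the classical product rule applies, and after integrating by parts one recovers the mild formulation \eqref{020512a} for $\xi$. Density of test functions then yields the full statement in $H$. The remaining items (measurability of $(x,\xi)$ with respect to $(\mathcal F_t)$ and the integrability in the spaces required by Definition \ref{D21a}) are inherited directly from the corresponding properties of $\omega$, so no further work is needed.
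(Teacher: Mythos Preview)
Your proposal is correct and follows exactly the route the paper takes: the paper's argument, given in the paragraph immediately preceding the proposition, is precisely to take an $(\mathcal F_t)$-adapted solution $\omega$ of \eqref{E25a} with $\omega(0)=\tau_{x_0}w$ (supplied by Theorem~\ref{MF1}), set $(x(\cdot),\xi(\cdot)):=\Psi(\omega(\cdot))$, and then verify via It\^o's formula that this pair is a weak solution in the sense of Definition~\ref{def-1}. Your write-up simply unpacks the phrase ``one can easily check, using It\^o's formula'' in more detail than the paper does; the only small sharpening you might make is to note that $\int_0^t\tau_{-x(s)}Q\,dW(s)=\int_0^t Q\,d\widetilde W(s)$ for a \emph{new} $(\mathcal F_t)$-adapted cylindrical Wiener process $\widetilde W$ (since $\tau_{-x(s)}$ is unitary and $Q$ commutes with translations), which is exactly what the weak formulation permits.
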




\section{Spectral gap and regularity properties of the transition semigroup}


\label{sec3}

Here we present the basic  results that shall be instrumental in the proof of Theorem \ref{lab3} formulated in the previous section. In case of the Navier--Stokes dynamics on a two-dimensional torus,  corresponding results have been shown in \cite{HM1}, see Theorem 5.10, Proposition 5.12 and parts 2, 3 of  Lemma A.1  from \cite{HM1}. The proofs of analogous results for the Lagrangian dynamics are not much different, some additional care is needed due to the presence of function $B_1(\cdot)$, but it usually does not create much trouble. We present the proofs of these results in Section \ref{sec6} of the appendix. 

Let us introduce the space  $C_0^\infty(H)$ consisting of all functionals $\phi$, for which there exist $n\ge1$,   a function $F$ from $C^\infty_0(\bbR^n)$  and vectors $v_1,\ldots,v_n\in H$ such that 
$$
\phi(v)=F\left(\langle v,v_1\rangle,\ldots,\langle v,v_n\rangle\right),\quad \forall\,v\in H.
$$
Given $\nu>0$ define ${\cal B}_\nu$ as the completion of $C^\infty_0(H)$ under the norm
$$
\|\phi\|_\nu:=\sup_{w\in H}e_{-\nu}(w)\left(|\phi(w)|+\|D\phi(w)\|\right),
$$
where, as we recall,
$
e_{\nu}(v)=\exp\left\{\nu|w|^2\right\}.
$
Here $\|D\phi(w)\|=\sup_{|\xi|\le 1}|D\phi(w)[\xi]|$, where  $D\phi(w)[\xi]$ denotes the Fr\'echet derivative of a function $\phi\colon H\to\bbR$ at $w$ in the direction $\xi\in H$.
By $\tilde {\cal B}_\nu$ we understand the Banach space of all Fr\'echet  differentiable functions $\phi$ such that
$
\|\phi\|_{\nu}<+\infty.
$
Let $\mathcal P(H)$ be  the space of all Borel, probability measures on $H$. 
Recall also  that $\mu_* \in \mathcal P(H)$ is called an \emph{invariant measure} for $(P_t)_{t\ge 0}$ if
 $$
 \langle \mu_*,P_t\phi\rangle = \langle \mu_*,\phi\rangle, \quad \forall  \,\phi\in C_b(H),\,t\ge 0. 
 $$
 Here  $\langle \mu,\phi\rangle:=\int_H\phi d\mu$ for any   $\mu\in{\cal P}(H)$  and $\phi$ that is integrable.
Our first result can be stated as follows.
\begin{theorem}\label{T1}
Under the assumptions of Theorem \ref{lab3} the following are true:
\begin{itemize} 
\item[1)] 
 there exist $\nu_0, C>0$ such that for any $\nu\in(0,\nu_0]$ we have
\begin{equation}
\label{012011}
\bbE e_{\nu}(\om(t;w))\le Ce_{\nu}(w),\quad\forall\,t\ge0,\,w\in H.
\end{equation}
\item[2)] the constant $\nu_0$ can be further adjusted in such a way that  for any $\nu\in(0,\nu_0]$ the semigroup $(P_t)$ extends to  $\tilde{\cal B}_\nu$ and  
$$
P_t({\cal B}_\nu)\subset {\cal B}_\nu, \quad \forall\,t\ge0.
$$
 In addition, for any $\nu$ as above 
there exist $C,\gamma>0$ such that 
\begin{equation}
\label{010811b}
\|P_t\phi-\langle\mu_*, \phi\rangle\|_{\nu}\le C\e^{-\gamma t}\|\phi\|_{\nu},\quad \forall\,t\ge0,\,\phi\in \tilde {\cal B}_\nu,
\end{equation}
\item[3)] there exist a unique Borel probability measure $\mu_*$ that is  invariant for $(P_t)$,
and  such that 
\begin{equation}
\label{010811a}
\int_He_{\nu}(w)\mu_*(dw)<+\infty,\quad \forall\,\nu\in(0,\nu_0].
\end{equation}
\end{itemize}
\end{theorem}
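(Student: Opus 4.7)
The plan is to adapt the spectral-gap framework of Hairer--Mattingly \cite{HM1} to our setting. The system \eqref{E25a} differs from the pure stochastic N.S.E.~\eqref{E25a0} only through the addition of the drift $B_1(\om)=\vec\eta(0)\cdot\nabla\om$, and the decisive observation that makes the adaptation possible is that this extra term still obeys the $H$-energy cancellation: since $\vec\eta(0)=\cK(\om)(0)$ does not depend on $x$, integration by parts on $\bbT$ yields
\begin{equation*}
\langle B_1(\om),\om\rangle=\tfrac{1}{2}\int_{\bbT}\vec\eta(0)\cdot\nabla(\om^{2})\,dx=0.
\end{equation*}
Hence the pathwise energy balance of \eqref{E25a} coincides with that of \eqref{E25a0}, and the quantitative regularization estimates of Mattingly--Pardoux \cite{MP}, quoted here as Theorem~\ref{T2}, can be used without modification.

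For part 1), I would apply It\^o's formula twice: first to $|\om|^2$, which yields, using the cancellations above and $\langle\Delta\om,\om\rangle=-\|\om\|^2$, the identity $d|\om|^2=(-2\|\om\|^2+\|Q\|^2_{L_{(HS)}(H,H)})\,dt+2\langle\om,Q\,dW\rangle$; then to $e_\nu(\om)$, which produces
\begin{equation*}
de_\nu(\om)=\bigl\{-2\nu\|\om\|^2+2\nu^{2}|Q\om|^2+\nu\|Q\|^2_{L_{(HS)}(H,H)}\bigr\}\,e_\nu(\om)\,dt + dM_t,
\end{equation*}
with $M_t$ a local martingale. Picking $\nu_0$ so small that $2\nu|Q\om|^2$ is absorbed by $\|\om\|^2$ through the Poincar\'e inequality, and then using the elementary bound $x\,e^{\nu x}\ge\alpha\,e^{\nu x}-\alpha\,e^{\nu\alpha}$ to trade the residual dissipation $\|\om\|^2 e_\nu(\om)$ for $e_\nu(\om)$ itself, one obtains a differential inequality $\tfrac{d}{dt}\bbE e_\nu(\om(t))\le -\gamma\,\bbE e_\nu(\om(t))+K$, from which \eqref{012011} follows by Gronwall's lemma.

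For part 3), once \eqref{012011} is in hand, existence of an invariant measure satisfying \eqref{010811a} is obtained by a Krylov--Bogoliubov argument: the time-averages $T^{-1}\int_{0}^{T}P_{t}^{\ast}\mu_{0}\,dt$ are tight on $H$, thanks to the compactness of $V\hookrightarrow H$ combined with the bound of Theorem~\ref{T2} on $\bbE\|\om(t)\|^{2}$; the Feller property of $(P_t)$ then ensures that any weak limit is invariant, and \eqref{010811a} passes to the limit by Fatou's lemma. Uniqueness of $\mu_*$ will be extracted from part 2).

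The main obstacle is part 2), the spectral gap \eqref{010811b}. The strategy, following \cite{HM1}, would be to establish a gradient inequality
\begin{equation*}
\|DP_{t}\phi(w)\|\le C\,e_{\nu}(w)\,\|\phi\|_{\nu}\bigl(e^{-\gamma t}+\rho(t)\bigr),
\end{equation*}
with $\rho(t)\to 0$ as $t\to\infty$, by combining a Bismut--Elworthy--Li/Malliavin integration-by-parts formula for the low-frequency Fourier modes of $\om$ with the Foias--Prodi determining-modes estimate for the high-frequency modes, and then to couple this gradient bound with irreducibility of an arbitrary neighborhood of the origin, which follows from the non-degeneracy of $Q$ via the controllability already proved in \cite{HM1}. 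Together with the exponential moment bound \eqref{012011} and the strong regularization of Theorem~\ref{T2}, these ingredients yield a contraction of $P_t$ in a weighted Wasserstein-type metric on $\cP(H)$, from which the spectral gap on $\tilde\cB_\nu$ and the uniqueness of $\mu_*$ follow by standard arguments. The hard part will be verifying that the Malliavin matrix lower bounds and the Jacobian moment estimates of \cite{HM1} survive the addition of $B_1$; however, $B_1$ is of the same differential order as $B_0$ and its coefficient $\vec\eta(0)=\cK(\om)(0)$ is controlled by $|\om|_{1+s}$ via the Sobolev embedding \eqref{embed}, so its linearization is absorbed in the same a priori moment estimates supplied by Theorem~\ref{T2}, without affecting either the exponential decay rate $\gamma$ or the final form of the gradient inequality.
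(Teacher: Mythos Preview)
Your outline is correct and follows the same overall architecture as the paper: the Hairer--Mattingly weighted-Wasserstein contraction framework of \cite{HM1}, with $V(w)=e_\nu(w)$ as the Lyapunov weight, the Malliavin/Foias--Prodi decomposition for the gradient bound, and irreducibility near the origin. A few differences in execution are worth recording. For part~1) you argue directly by It\^o on $e_\nu(\om)$ and Gronwall; the paper instead applies the semimartingale comparison lemma of \cite{HM} to $U(t)=\nu|\om(t)|^2$ to get the bound on $[0,1]$ and then iterates via Jensen's inequality with exponent $q=e^{-1/2}$ to pass to all $t\ge 0$---either route works. For part~3) you invoke Krylov--Bogoliubov for existence, whereas the paper simply reads both existence and uniqueness off the contraction in part~2). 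Finally, the paper does not need ``Malliavin matrix lower bounds'' in the H\"ormander sense: since $Q$ has trivial null space, the control $g$ is built explicitly by inverting $Q$ on a finite-dimensional low-mode subspace (their Proposition~\ref{m-lm} and equations \eqref{083005}--\eqref{ET213}); the genuine work, as you anticipate, lies in checking that the extra linearized term from $B_1$ can be absorbed in the high-mode dissipation and in the estimates for $|g|^2$, which the paper carries out in Lemma~\ref{lm011412} and the subsequent estimates of $g_0,g_1$.
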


The property described in \eqref{010811b} is referred to as {\em the spectral gap} of the transition semigroup. Since we shall  use an extension of this property to functions defined on a smaller space than $H$ we introduce the following definition.
For $N>0$ and $\phi\in C^1(V)$ define
$$
\|\!|\phi\|\!|_N:=\sup_{w\in V}\frac{|\phi(w)|+\|D\phi(w)\|}{(1+\|w\|)^N}
$$
and denote by $C^1_N(V)$ the space made of functions, for which $\|\!|\phi\|\!|_N<+\infty$.
\begin{theorem}\label{T2}
Under the assumptions of Theorem \ref{lab3} the following are true:
\begin{itemize}
\item[1)]   for any $t, N>0$ there exists $C_{t,N}$ such that 
\begin{equation} \label{E27-aa}
\bbE\|\om(t;w)\|^N\le C_{t,N}\left(|w|^{2N}+1\right),\quad\forall\,w\in H,
\end{equation}
\item[2)] the definition of the transition semigroup can be extended to an arbitrary   $\phi \in C^1_N(V)$ by letting $P_t\phi(w):=\bbE\tilde\phi(\om(t;w))$, where $\tilde \phi$ is an arbitrary, measurable extension of $\phi$ from $V$ to $H$. Moreover,
 for any $t, N>0$ there exists $C_{t,N}$ such that for any $\nu>0$,  
\begin{equation} \label{E27a}
\|P_t\phi\|_{\nu}\le C_{t,N}\|\!|\phi\|\!|_N,\quad\forall\,\phi\in C^1_N(V).
\end{equation}
\end{itemize}
\end{theorem}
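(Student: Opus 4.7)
The proof of Part 1 adapts the Mattingly--Pardoux regularization argument underlying Proposition \ref{propMP} to the Lagrangian vorticity equation \eqref{E25a}. The crucial observation is that the additional term $B_1(\om)=\vec\eta(0)\cdot\nabla\om$ is invisible to the energy and enstrophy balances: since $\vec\eta(0)=\mathcal K(\om)(0)$ is spatially constant (though random), periodicity yields both $\langle B_1(\om),\om\rangle=0$ and $\langle B_1(\om),-\Delta\om\rangle=0$. Thus I would apply It\^o's formula to the time-weighted functional $(|\om(t)|^2+\alpha t\|\om(t)\|^2)^N$ for a small $\alpha>0$, control the $B_0$-contribution by the standard two-dimensional vorticity estimate obtained from \eqref{gagliardo-nirenberg} (with $s=1$, $\beta=1/2$), absorb the quadratic nonlinear terms into the dissipation via Young's inequality, and close the estimate with Gronwall together with the exponential moment bound of Theorem \ref{T1} Part 1). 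A subsequent Cauchy--Schwarz step then delivers \eqref{E27-aa} in the form claimed, with a singular constant of order $t^{-N/2}$.

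For Part 2, the well-definedness of $P_t\phi(w):=\bbE\tilde\phi(\om(t;w))$ is immediate: Part 1) forces $\om(t;w)\in V$ almost surely for every $t>0$, so the choice of measurable extension is irrelevant. The pointwise bound
\[
|P_t\phi(w)|\le \|\!|\phi\|\!|_N\,\bbE(1+\|\om(t;w)\|)^N \le C_{t,N}\|\!|\phi\|\!|_N(1+|w|^{2N})\le C'_{t,N,\nu}\|\!|\phi\|\!|_N e_\nu(w)
\]
follows at once from Part 1) and the trivial domination of polynomials by $e_\nu$. For the derivative I would differentiate $P_t\phi$ at $w\in H$ along $\xi\in H$ through the linearization $J_{0,t}\xi$ of the flow in the initial condition, obtaining
\[
DP_t\phi(w)[\xi]=\bbE\bigl(D\phi(\om(t;w))[J_{0,t}\xi]\bigr),
\]
and then Cauchy--Schwarz reduces matters to a moment bound for $\|\om(t;w)\|$ (already at hand) and one for $\|J_{0,t}\xi\|$.

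The bulk of the work---and the principal obstacle---lies in establishing the regularization estimate $\bbE\|J_{0,t}\xi\|^2\le C_t\,e_\nu(w)|\xi|^2$ for the linearization started from a merely $H$-valued $\xi$. The linearized equation $\partial_tJ=\Delta J-DB_0(\om)[J]+DB_1(\om)[J]$, $J(0)=\xi$, has random coefficients driven by $\om$; the divergence-free cancellation kills the transport terms $B_0(\om,J)$ and $B_1(\om,J)$, and a Gagliardo--Nirenberg bound on the remaining contributions $B_0(J,\om)+B_1(J,\om)$ yields an inequality of the form
\[
\frac{d}{dt}|J(t)|^2+\|J(t)\|^2\le C\|\om(t)\|^2|J(t)|^2.
\]
Gronwall then gives the pathwise bound $|J(t)|^2\le|\xi|^2\exp\bigl(C\int_0^t\|\om(s)\|^2 ds\bigr)$, and the exponential moment of $\int_0^t\|\om\|^2 ds$ is controlled via the enstrophy balance $|\om(t)|^2+2\int_0^t\|\om\|^2 ds=|w|^2+\|Q\|_{HS}^2 t+M(t)$ combined with Theorem \ref{T1} Part 1). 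A second, time-weighted energy estimate for $\|J(t)\|^2$---mirroring the scheme used for $\om$ in Part 1---then upgrades the $H$-bound to the required $V$-bound on $J(t)$. The delicate point is the interplay between the exponential growth factors forced by the nonlinear coupling to $\om$ and the Gaussian-type moment bounds supplied by Theorem \ref{T1}; it is here that all the assumptions on $Q$ and the invariant measure $\mu_*$ from the previous section pay off.
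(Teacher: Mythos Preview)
Your approach is essentially the paper's. Both parts run an It\^o--Gronwall argument on a time-weighted energy, control the bilinear terms via Gagliardo--Nirenberg and Young, and close with the exponential moment bounds of Theorem \ref{T1} (more precisely \eqref{010712} and \eqref{010712-00}).

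Two differences are worth flagging. First, for Part 1 the paper does not exploit your cancellation $\langle B_1(\om),-\Delta\om\rangle=0$; it simply estimates $|\langle B_1(\om),\Delta\om\rangle|$ via \eqref{cofo88a} with $s_1=3/2$, $s_2=s_3=0$ and absorbs it together with the $B_0$ term into $\tfrac12|\om|_2^2+C|\om|^4$. Your observation is a legitimate simplification. Second, for Part 2 the paper's route to the derivative bound is through Lemma \ref{lm031912}, which establishes
\[
\|\xi(t)\|^2\le C\|\xi\|^2\exp\Bigl\{\nu\int_0^t\|\om(s;w)\|^2\,ds+Ct\Bigr\}
\]
for initial perturbations $\xi\in V$, by running an energy estimate on $|\xi(t)|^2+\gamma\|\xi(t)\|^2$ with $\gamma$ small. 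As you correctly anticipate, the norm $\|P_t\phi\|_\nu$ requires control of $DP_t\phi(w)[\xi]$ uniformly over $|\xi|_H\le 1$, so one really needs an $H\to V$ smoothing estimate for the linearization. The paper's presentation glosses over this point; your proposed fix---a time-weighted version of the same estimate, mirroring the argument for $\om$ in Part 1---is the natural way to bridge the gap and is implicitly what is required.
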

Combining  the above result with part 2) of Theorem \ref{T1} we conclude that the following holds.
\begin{corollary}
\label{cor011011}
For any $N>0$ there exist $C,\nu_0,\gamma>0$ such that for any $\nu\in (0,\nu_0]$  we have
\begin{equation}
\label{010811}
\|P_t\phi-\langle\mu_*, \phi\rangle\|_{\nu}\le C\e^{-\gamma t}\|\!|\phi\|\!|_N,\quad \forall\,t\ge0,\,\phi\in C^1_N(V).
\end{equation}
\end{corollary}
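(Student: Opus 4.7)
The strategy is a one-step bootstrap: the operator $P_1$ takes a function from $C^1_N(V)$ to an element of $\tilde{\cal B}_\nu$, and on the latter space Theorem \ref{T1} part 2) supplies an exponential contraction. This is a standard scheme (a.k.a.\ ``weak-to-strong norm'' smoothing followed by spectral gap) and the only delicate points are the matching of integrability spaces and the invariance identity.

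Fix $N>0$ and let $\nu_0, C_0, \gamma$ be the constants from Theorem \ref{T1} part 2). Given $\phi\in C^1_N(V)$, invoke Theorem \ref{T2} part 2) with $t=1$ to obtain $P_1\phi\in\tilde{\cal B}_\nu$ with
$$
\|P_1\phi\|_\nu\le C_{1,N}\|\!|\phi\|\!|_N, \qquad \nu\in(0,\nu_0].
$$
For $t\ge 1$, use the semigroup property $P_t\phi=P_{t-1}(P_1\phi)$ and apply Theorem \ref{T1} part 2) to $P_1\phi$ to obtain
$$
\|P_t\phi-\langle\mu_*,P_1\phi\rangle\|_\nu \;\le\; C_0\e^{-\gamma(t-1)}\|P_1\phi\|_\nu \;\le\; C_0C_{1,N}\e^{\gamma}\cdot\e^{-\gamma t}\|\!|\phi\|\!|_N .
$$
For $t\in[0,1]$ (using the convention $P_0\phi=\phi$ where it makes sense, or simply for $t\in(0,1]$) the factor $\e^{-\gamma t}$ is bounded below by $\e^{-\gamma}$, so a crude bound of the form $\|P_t\phi-\langle\mu_*,\phi\rangle\|_\nu\le C'\|\!|\phi\|\!|_N$, which follows at once from \eqref{E27a} and the integrability check below, is absorbed into $C$ upon enlarging the constant.

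To replace $\langle\mu_*,P_1\phi\rangle$ by $\langle\mu_*,\phi\rangle$ we use invariance of $\mu_*$. This requires $\phi\in L^1(\mu_*)$. Because $|\phi(w)|\le\|\!|\phi\|\!|_N(1+\|w\|)^N$ on $V$, we estimate via invariance and Theorem \ref{T2} part 1),
$$
\int_H \bbE|\phi(\om(1;w))|\,\mu_*(dw) \le \|\!|\phi\|\!|_N\int_H \bbE(1+\|\om(1;w)\|)^N\mu_*(dw) \le C\int_H(1+|w|^{2N})\mu_*(dw),
$$
which is finite by \eqref{010811a}. This also shows $\mu_*(V)=1$, so $\phi$ is defined $\mu_*$-a.e. and $\langle\mu_*,P_1\phi\rangle=\langle\mu_*,\phi\rangle$ by invariance. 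Plugging this into the previous display completes the proof, with $C:=\max(C_0C_{1,N}\e^{\gamma},C')$ and the same $\gamma$.

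The step I expect to require the most care is the integrability / invariance identity: one has to make sure that the extension $\tilde\phi$ of $\phi$ from $V$ to $H$ used implicitly in the definition of $P_t\phi$ does not affect $\langle\mu_*,P_1\phi\rangle$ (since $\mu_*$-a.s.\ the trajectory $\om(1;\cdot)\in V$), and that the moment bound \eqref{E27-aa} combined with \eqref{010811a} furnishes the polynomial moments of $\mu_*$ needed to make $\phi$ integrable. The spectral gap and regularization estimates are then applied as black boxes.
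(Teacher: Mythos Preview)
Your bootstrap argument---regularize via $P_1$ using Theorem \ref{T2} part 2), then apply the spectral gap of Theorem \ref{T1} part 2)---is exactly the combination the paper has in mind; the paper's own proof is the one-line remark preceding the corollary, and your treatment of the invariance identity $\langle\mu_*,P_1\phi\rangle=\langle\mu_*,\phi\rangle$ via \eqref{E27-aa} and \eqref{010811a} is precisely the content of the subsequent Corollary \ref{cor011112}.

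One caveat on the point you flagged for care: for $t\in(0,1]$ the constant $C_{t,N}$ in \eqref{E27a} is \emph{not} uniform in $t$---it behaves like $t^{-N/2}$, coming from the term $t\|\om(t)\|^2$ in the regularization estimate \eqref{011912}---so your ``crude bound ... follows at once from \eqref{E27a}'' does not quite give a fixed $C'$ over $(0,1]$. The estimate \eqref{010811} as stated really holds for $t\ge t_0>0$ with $C$ depending on $t_0$; this is harmless in practice, and indeed every application in the paper (e.g.\ Propositions \ref{lab21}--\ref{lab21a}, estimate \eqref{032011} combined with \eqref{022002}) treats the interval $[0,1]$ by a separate argument.
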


Define 
$$
{\frak p}(w):=\left\{
\begin{array}{ll}
\|w\|^2&\mbox{ for }w\in V,\\
+\infty&\mbox{ for }w\in H\setminus V.
\end{array}
\right.
$$ 
\begin{corollary}
\label{cor011112}
For any $N>0$ we have  $\langle \mu_*,{\frak p}^N\rangle<+\infty$. Thus, in particular $\mu_*(V)=1$.
\end{corollary}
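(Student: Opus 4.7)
The plan is to push the regularizing estimate of Theorem \ref{T2}(1) forward under the invariance of $\mu_*$ and control the resulting Lyapunov term using the exponential integrability from Theorem \ref{T1}(3).

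I first observe $\mu_*(V)=1$. By Theorem \ref{T2}(1), $\bbE\|\om(1;w)\|^N<+\infty$ for every $w\in H$, so $\om(1;w)\in V$ holds $\bbP$-a.s.\ for each fixed $w$; equivalently, the transition kernel satisfies $P_1(w,V)=1$ for every $w\in H$. Invariance of $\mu_*$ under $P_1$, applied to the indicator $\bone_V$, then gives
$$
\mu_*(V)=\int_H P_1(w,V)\,\mu_*(dw)=1.
$$

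For the moment bound I introduce the bounded Borel truncations $\phi_M(w):=({\frak p}(w)\wedge M)^N$, $M>0$, so that $\phi_M\uparrow {\frak p}^N$ pointwise on $H$ (with ${\frak p}^N\equiv+\infty$ off $V$). Invariance of $\mu_*$ under $P_1$, extended from $C_b(H)$ to bounded Borel test functions by a standard monotone class argument, yields
$$
\int_H \phi_M\,d\mu_* = \int_H \bbE\phi_M(\om(1;w))\,\mu_*(dw) \le \int_H \bbE\|\om(1;w)\|^{2N}\,\mu_*(dw),
$$
where the inequality uses $\phi_M(v)\le\|v\|^{2N}$ for $v\in V$ together with $\om(1;w)\in V$ a.s. Applying Theorem \ref{T2}(1) with exponent $2N$ in place of $N$, the right-hand side is bounded by
$$
C_{1,2N}\int_H (|w|^{4N}+1)\,\mu_*(dw).
$$
For any $\nu\in(0,\nu_0]$ one has $|w|^{4N}\le c_{\nu,N}e_\nu(w)$ for a suitable constant $c_{\nu,N}$, so \eqref{010811a} makes this last integral finite, uniformly in $M$. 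Monotone convergence as $M\to+\infty$ then delivers $\langle\mu_*,{\frak p}^N\rangle<+\infty$.

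The only step that requires a bit of care is Borel measurability of ${\frak p}$ (hence of $\phi_M$) on $H$; this follows because ${\frak p}$ is the pointwise supremum of the continuous functionals $w\mapsto|(-\Delta)^{1/2}P_nw|^2$, where $P_n$ denotes projection onto the first $n$ Fourier modes. Everything else is a routine combination of invariance, the regularizing moment bound, and monotone convergence, so I do not foresee any substantive obstacle.
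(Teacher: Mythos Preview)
Your proof is correct and follows essentially the same strategy as the paper: combine the invariance of $\mu_*$ with the smoothing estimate \eqref{E27-aa} from Theorem~\ref{T2}(1) and the exponential moment bound \eqref{010811a} to get a uniform control on truncations of $\mathfrak{p}^N$, then pass to the limit by monotone convergence. The only cosmetic difference is in the truncation bookkeeping: the paper uses a two-parameter cutoff $\varphi_R\circ\mathfrak{p}_K^N$ (with $\mathfrak{p}_K$ a finite-mode approximation) so as to stay inside $C_b(H)$ where invariance is stated, whereas you use the single truncation $(\mathfrak{p}\wedge M)^N$ and invoke a monotone-class extension of invariance to bounded Borel test functions; both routes are standard and equivalent.
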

\proof
Suppose that $\varphi_R\colon [0,+\infty)\to[0,R+1]$ is  a continuous function such that $\varphi_R(u)=u$ if $u\in[0,R]$ and it vanishes on $u\ge R+1$. For a fixed $K>0$ we denote
$$
{\frak p}_K(w):=\sum_{0<|k|\le K}|k|^2|\hat w(k)|^2.
$$
Thanks to part  2) of Theorem  \ref{T1} we have $P_t{\frak p}^N\in {\cal B}_\nu$ for any $t>0$ and therefore from \eqref{E27-aa} and \eqref{010811a} we get
\begin{equation}\label{071801}
 \langle \mu_*,P_t{\frak p}_K^N\rangle\le  \langle \mu_*,P_t{\frak p}^N\rangle<+\infty.
\end{equation}
We have therefore
\begin{equation}\label{071801a}
 \langle \mu_*,P_t\varphi_R\circ{\frak p}_K^N\rangle= \langle \mu_*,\varphi_R\circ{\frak p}_K^N\rangle\le  \langle \mu_*,P_t{\frak p}^N\rangle.
\end{equation}
 The first equality follows from the fact that $\mu_*$ is invariant. 
 Letting first $K\to+\infty$ and then subsequently $R\to+\infty$ we conclude the corollary.
\qed


%
%
%
%
%
%
%
%
%
%
%
%
%
%
%
%

\section{Proof of Theorem \ref{lab3}}\label{S6}

\label{sec5}
To abbreviate we assume that  $x_0=0$ and we drop it from our notation. Let $\psi_*=(\psi_*^{(1)},\psi_*^{(2)})\colon V\to\bbR^2$ be defined as  $\psi_*(\om):={\cal K}(\om)(0)$.
Since, for any $s>0$, $H_{1+s}$ is embedded into  $C(\bbT)$,  for any $s>0$ there exists $C>0$ such that
\begin{equation}
\label{H-s}
|\psi_*^{(i)}(w)|\le  C|{\cal K}_i(w)|_{1+s}\le C|w|_s,\quad\forall\,w\in H_s,\,i=1,2.
\end{equation}
It is clear therefore that  the components of $\psi_*$ are bounded linear functional on $V$ and 
 $\psi_*\in C^1_1(V)$.
Suppose also that  $\om(t)$ is the solution of  \eqref{E25} with the initial data distributed according to $\mu_0$.  

\subsection{Proof of part 1)}
\label{sec5.1}

Let $v_*:=(v_{*,1},v_{*,2})$ and 
$
v_{*,i}:=\langle \mu_*,\psi_*^{(i)}\rangle,
$
and $\tilde \psi_*:=\psi_*-v_*$.
To prove the weak law of large numbers it  suffices only to show that for $i=1,2$,
 \begin{equation}
 \label{h11}
\lim_{T\to+\infty} \frac1T\mathbb{E} \tilde x_i(T)
=v_{*,i}\quad\mbox{and}\quad \lim_{T\to+\infty} 
 \frac{1}{T^2}\mathbb{E}\tilde x^2_i(T)= v_{*,i}^{2},
\end{equation}
where 
$$
\tilde x(T)=(\tilde x_1(T),\ldots,\tilde x_d(T)):=\int_{0}^{T} \tilde\psi_*(\om(s))ds.
$$
Using the Markov property we can write that
\begin{eqnarray}
\label{011712}
&&\frac{1}{T}\mathbb{E}\tilde x_i(T)=\frac{1}{T}\int_{0}^{T}
\langle\mu_{0},P_{s}\tilde \psi_{*}^{(i)}\rangle ds,\quad i=1,2.
\end{eqnarray}
Suppose that  $\nu_0$ is chosen in such a way that the conclusions of Theorem \ref{T1} and Corollary  \ref{cor011011} hold. Assume also that 
$\nu\in(0,\nu_0]$. We shall  adjust its value later on.
By virtue of \eqref{010811} we conclude that there exists a constant $C>0$ such that
\begin{equation}
\label{032011}
|P_{t}\tilde \psi_*(w)|\le C\e^{-\gamma t}e_{\nu}(w)\|\!|\tilde \psi_*\|\!|_{1}.
\end{equation}
Hence, the 
 right hand side of \eqref{011712} converges to $0$, by estimate \eqref{022011} and the Lebesgue dominated convergence theorem.
On the other hand
\begin{equation} \label{lab11}
\begin{aligned}
\frac{1}{T^2}\mathbb{E}\tilde x^2_i(T)&=
\frac{1}{T^{2}}\,\mathbb{E}\Big(\int_{0}^{T}\!\!\tilde\psi_{*,i}(\om(t))dt\int_{0}^{T}\!\!\tilde\psi_{*,i}(\om(s))ds\Big) \\
&=\frac{2}{T^{2}}\!\!\int_{0}^{T}\!\!\int_{0}^{t}\!\,\mathbb{E}[\tilde\psi_{*,i}(\om(t))\tilde\psi_{*,i}(\om(s))]dt ds.
\end{aligned}
\end{equation}
The utmost right hand side of   \eqref{lab11} equals
 \begin{eqnarray}
 \label{042011}
&& \frac{2}{T^{2}}\int_{0}^{T}\!\!\int_{0}^{t}\!\,\mathbb{E}\big[\tilde\psi_{*,i}(\om(s))P_{t-s}\tilde\psi_{*,i}(\om(s))\big] dtds=\frac{2}{T^{2}}\int_{0}^{T}\!\!\int_{0}^{t}\! \langle\mu_0P_{s},\tilde\psi_{*,i} P_{t-s}\tilde\psi_{*,i}\rangle dt ds.
 \end{eqnarray}
Using \eqref{032011} we can estimate the right hand side of \eqref{042011} by
\begin{equation}
\frac{C}{T^{2}}\int_{0}^{T}\!\!\int_{0}^{t}\e^{-\ga (t-s)} \langle\mu_0P_{s},|\tilde\psi_{*,i}| e_{\nu}\rangle dt ds=
\frac{C(1-\e^{-\ga T})}{\ga T^{2}}\int_{0}^{T}\langle\mu_0P_{s},|\tilde\psi_{*,i}| e_{\nu}\rangle ds. 
\label{lab4}
\end{equation}
Applying H\"older's inequality with $q\in(1,\nu_0/\nu)$ and an even integer  $p$ such that $p^{-1}:=1-q^{-1}$,  we conclude that  the right hand side is smaller than
 \begin{eqnarray}
 \label{012211}
&&\frac{C}{\ga T^{2}}\int_{0}^{T}\langle\mu_0,P_{s}|\tilde\psi_*|^p \rangle^{1/p} \langle\mu_0P_{s},e_{q\nu}\rangle^{1/q}ds\le \frac{C_1}{\ga T^{2}}\int_{0}^{T}\langle\mu_0,P_{s}|\tilde\psi_*|^p \rangle^{1/p} ds
 \end{eqnarray}
 for some constants $C,C_1$ independent of $T$.
The last inequality follows from \eqref{012011} and \eqref{010811a}. Since $|\tilde\psi_*|^p$ belongs to $C^1_p(V)$ we conclude from Corollaries \ref{cor011011},  \ref{cor011112} and condition \eqref{022011} that the right hand side of the above expression can be estimated by
$ C_2T/(\ga T^{2})$, with $C_2$ a constant independent of $T$, which tends to $0$,
 as $T\to+\infty$. Thus, part 1) follows.
\qed

\subsection{Definition and basic properties of the corrector}  
\label{cor-f}
We start with the following.
\begin{proposition}\label{lab21}
Functions 
\begin{equation}
\label{chi-t}
\chi_{t}(w)=(\chi_{t}^{(1)}(w),\chi_{t}^{(2)}(w)):=\int_{0}^{t}P_s\tilde\psi_*(w) ds,\quad w\in H,
\end{equation}
  converge uniformly on bounded sets, as $t\rightarrow\infty.$ For any $\nu\in(0,\nu_0]$ there is $C>0$ such that
  \begin{equation}
\label{022211}
|\chi_{t}^{(i)}|\le Ce_{\nu},\quad\forall\,t\ge1,\,i=1,2.
\end{equation}
The limit  
\begin{equation}
\label{chi}
\chi=(\chi^{(1)},\chi^{(2)}):=\lim_{t\to+\infty}\chi_{t}=\int_{0}^{+\infty}P_s\tilde\psi_*\,ds,
\end{equation} called a {\em corrector}, satisfies
\begin{equation}
\label{032211}
|\chi^{(i)}|\le Ce_{\nu},\quad i=1,2,
\end{equation}
with the same constant as in \eqref{022211}.
\end{proposition}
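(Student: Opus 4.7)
The plan is to reduce everything to Corollary \ref{cor011011}. First I would verify that each component $\tilde\psi_*^{(i)}$ belongs to $C^1_1(V)$ and is centred under $\mu_*$. The first fact is immediate from \eqref{H-s} and \eqref{010512a}, which together show that $\psi_*^{(i)}$ is a bounded linear functional on $V$, hence Fr\'echet differentiable on $V$ with $\|\!|\psi_*^{(i)}\|\!|_1<+\infty$; subtracting the constant $v_{*,i}$ preserves membership in $C^1_1(V)$. The second fact is literally the definition of $v_{*,i}$, and the integral $\langle\mu_*,\psi_*^{(i)}\rangle$ is meaningful because $\mu_*(V)=1$ by Corollary \ref{cor011112}.

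With these observations in hand, Corollary \ref{cor011011} applied to $\tilde\psi_*^{(i)}$ with $N=1$ yields, for suitable $\nu_0,C,\gamma>0$ and any $\nu\in(0,\nu_0]$, the pointwise bound
\begin{equation*}
|P_s\tilde\psi_*^{(i)}(w)|\le Ce^{-\gamma s}e_\nu(w),\qquad s>0,\ w\in H,
\end{equation*}
which is exactly estimate \eqref{032011} already used in Section \ref{sec5.1}. Integrating in $s$ from $0$ to $t$ gives the uniform bound $|\chi_t^{(i)}(w)|\le (C/\gamma)e_\nu(w)$ for every $t\ge 0$, in particular \eqref{022211}. For uniform convergence on bounded subsets of $H$, for any $t'>t\ge 0$ I would estimate
\begin{equation*}
|\chi_{t'}^{(i)}(w)-\chi_t^{(i)}(w)|\le \int_t^{t'}|P_s\tilde\psi_*^{(i)}(w)|\,ds\le \frac{C}{\gamma}e^{-\gamma t}e_\nu(w),
\end{equation*}
which on the set $\{|w|\le R\}$ is dominated by $(C/\gamma)e^{-\gamma t}e^{\nu R^2}\to 0$ as $t\to\infty$. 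Hence $(\chi_t^{(i)})$ is uniformly Cauchy on bounded sets, its limit $\chi^{(i)}$ is given by the improper integral \eqref{chi}, and it inherits the pointwise bound \eqref{032211} by passing to the limit.

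There is no genuine difficulty in this argument beyond the deep input already contained in Theorems \ref{T1}--\ref{T2} and Corollary \ref{cor011011}; the proposition is essentially a reformulation of the exponential mixing as an absolutely convergent integral of a centred observable. The one subtle point to bear in mind is that $\tilde\psi_*^{(i)}$ is defined only on $V$, not on all of $H$, so the expression $P_s\tilde\psi_*^{(i)}$ must be interpreted via the extension mechanism from part 2) of Theorem \ref{T2}, which also provides the regularity needed to place $P_s\tilde\psi_*^{(i)}$ inside $\tilde{\cal B}_\nu$ for each $s>0$ and thus to legitimise using the norm $\|\cdot\|_\nu$ in the first place.
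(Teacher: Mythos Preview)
Your approach is the same as the paper's in spirit---reduce to the exponential estimate of Corollary~\ref{cor011011} and integrate---but the paper handles the short-time portion differently, and the difference is worth understanding.

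You integrate the pointwise bound $|P_s\tilde\psi_*^{(i)}(w)|\le C\e^{-\gamma s}e_\nu(w)$ from $s=0$, taking Corollary~\ref{cor011011} literally for all $t\ge 0$. The paper instead splits $\chi_t=\int_0^1+\int_1^t$: the tail $\int_1^t$ is controlled exactly as you do, while the initial segment is bounded via the separate integrated estimate
\[
\int_0^1\bbE\|\om(s;w)\|^2\,ds\le C e_\nu(w),
\]
which follows from \eqref{010712} (and yields $\int_0^1|P_s\tilde\psi_*^{(i)}(w)|\,ds\le C e_\nu(w)$ via $|\psi_*^{(i)}(v)|\le C\|v\|$ and Cauchy--Schwarz). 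The reason for the split is that the regularisation constant $C_{t,N}$ in Theorem~\ref{T2} part~2) is \emph{not} uniform as $t\to 0^+$---it degenerates like $t^{-N/2}$, since $\om(t;w)$ enters $V$ only through parabolic smoothing. Consequently the ``$\forall\,t\ge 0$'' in Corollary~\ref{cor011011} with a single constant $C$ is somewhat formal near $t=0$, and the paper's own argument does not rely on it there. Your proof is correct if one accepts the corollary as stated; the paper's route makes the short-time estimate self-contained and independent of that point. Either way the conclusion is the same, and the Cauchy/uniform-convergence argument you give for $t\to\infty$ is exactly what is needed.
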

\proof
As a  consequence of Corollary \ref{cor011011} we conclude that the functions
$$
 \int_{1}^{t}P_s\tilde\psi_*^{(i)}(w) ds,\quad t\ge1,\,i=1,2,
$$
are well defined on $H$ and converge uniformly on bounded sets. The convergence part of the proposition follows from the fact that there exists a constnt $C>0$ such that for $\nu\in(0,\nu_0]$,
\begin{equation}
\label{022002}
\int_0^1\bbE\|\om(s,w)\|^2ds\le Ce_{\nu}(w),\quad\forall\,w\in H,
\end{equation}
see \eqref{010712} below. This estimate together with \eqref{032011}  imply both \eqref{022211} and \eqref{032211}.
\qed

\begin{proposition}\label{lab21a}
One can choose $\nu_0>0$ in such a way that 
$
\chi^{(i)}\in {\cal B}_{\nu}$ for any $\nu\in(0,\nu_0]$, $i=1,2$.
\end{proposition}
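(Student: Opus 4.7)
The plan is to exhibit $\chi^{(i)}$ as a $\|\cdot\|_\nu$-limit of elements already known to lie in $\mathcal{B}_\nu$; since $\mathcal{B}_\nu$ is a Banach space (being the completion of $C^\infty_0(H)$ under this norm), membership will follow. The crucial intermediate assertion is the following: \emph{for every $s>0$, $P_s\tilde\psi_*^{(i)}\in\mathcal{B}_\nu$}.

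To establish this, I would approximate the linear functional $\tilde\psi_*^{(i)}$ by cylindrical bump functions. Let $\pi_n$ be the spectral projection onto the first $n$ Fourier modes and let $\eta_R\in C^\infty_c(\bbR)$ be smooth cutoffs with $\eta_R\equiv 1$ on $[0,R]$. Put
$$
\phi_{n,R}(w):=\eta_R(|w|^2)\bigl(\psi_*^{(i)}(\pi_n w)-v_{*,i}\bigr), \qquad w\in H.
$$
Each $\phi_{n,R}$ lies in $C^\infty_0(H)\subset \mathcal{B}_\nu$, and by part 2) of Theorem \ref{T1}, $P_s\phi_{n,R}\in\mathcal{B}_\nu$. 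The key step is then to prove
$$
\lim_{n,R\to\infty}\|P_s\phi_{n,R}-P_s\tilde\psi_*^{(i)}\|_\nu=0.
$$
For the value difference, combine $|\psi_*^{(i)}(w)-\psi_*^{(i)}(\pi_n w)|\le C|(I-\pi_n)w|_s$ from \eqref{H-s} (choosing small $s>0$) with the moment bound $\bbE\|\om(s;w)\|^N\le C_{s,N}(|w|^{2N}+1)$ of Theorem \ref{T2} part 1), and with the exponential moment estimate \eqref{012011}; the cutoff piece is handled via tail bounds on $|\om(s;w)|$ using \eqref{012011}. For the Fréchet derivative of the difference, I would exploit the linearity of $\psi_*^{(i)}$ and a variation-of-constants representation of $D(P_s\phi)$ through the Jacobian flow of \eqref{E25a}, together with the same moment inputs.

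Granted the claim, I would then integrate it to obtain $\chi^{(i)}\in\mathcal{B}_\nu$. Corollary \ref{cor011011} applied to $\tilde\psi_*^{(i)}\in C^1_1(V)$ (which vanishes against $\mu_*$ by construction) yields
$$
\|P_s\tilde\psi_*^{(i)}\|_\nu\le C\e^{-\gamma s}\|\!|\tilde\psi_*^{(i)}\|\!|_1,\qquad s\ge 1,
$$
while on $(0,1]$ part 2) of Theorem \ref{T2} bounds $\|P_s\tilde\psi_*^{(i)}\|_\nu$ with an integrable dependence on $s$, analogous to \eqref{022002}. Consequently $s\mapsto P_s\tilde\psi_*^{(i)}$ is Bochner integrable from $(0,\infty)$ into the Banach space $\mathcal{B}_\nu$, and its Bochner integral coincides pointwise with the corrector $\chi^{(i)}$ defined in Proposition \ref{lab21}; hence $\chi^{(i)}\in\mathcal{B}_\nu$.

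The main obstacle is the first step: upgrading the pointwise convergence $P_s\phi_{n,R}\to P_s\tilde\psi_*^{(i)}$ to convergence in $\|\cdot\|_\nu$. Controlling the Fréchet derivatives uniformly on $H$, weighted by $e_{-\nu}$, requires a delicate combination of the smoothing bounds of Theorem \ref{T2}, the exponential moments of Theorem \ref{T1}, and the specific linear structure of $\psi_*^{(i)}$ together with the embedding $H^{1+s}\hookrightarrow C(\bbT)$; this is where all the effort is concentrated.
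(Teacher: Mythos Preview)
Your overall plan---split $\chi^{(i)}=\int_0^1+\int_1^\infty$ and control the tail via the exponential contraction of Corollary~\ref{cor011011}---is exactly what the paper does. The substantive difference, and the gap in your argument, lies in the short-time piece $\int_0^1 P_s\tilde\psi_*^{(i)}\,ds$.

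You propose to show that $s\mapsto P_s\tilde\psi_*^{(i)}$ is Bochner integrable in ${\cal B}_\nu$ on $(0,1]$, invoking Theorem~\ref{T2} part~2) for an ``integrable dependence on $s$, analogous to~\eqref{022002}''. But that theorem does not track the small-$t$ behaviour of $C_{t,N}$, and if you trace its proof the derivative estimate passes through Lemma~\ref{lm031912}, which bounds $\|\xi(t)\|$ only in terms of $\|\xi\|$ (initial data already in $V$), not $|\xi|$. The $\|\cdot\|_\nu$-norm of $DP_s\tilde\psi_*^{(i)}$ requires the latter. The only control of $\|\xi(\cdot)\|$ in terms of $|\xi|$ that the paper supplies is the \emph{time-integrated} estimate of Proposition~\ref{prop021801},
\[
\Bigl\{\int_0^t\|\xi(s)\|^2\,ds\Bigr\}^{1/2}\le |\xi|\exp\Bigl\{\nu\!\int_0^t\|\om(s)\|^2\,ds+Ct\Bigr\},
\]
which does \emph{not} deliver a pointwise-in-$s$ bound on $\|DP_s\tilde\psi_*^{(i)}\|$ integrable near $s=0$. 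Estimate~\eqref{022002} is likewise an integrated bound and helps only with the value part, not the derivative.

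The paper sidesteps Bochner integrability altogether. Exploiting the linearity of $\psi_*^{(i)}$ it writes
\[
\int_0^1 DP_s\psi_*^{(i)}(w)[\xi]\,ds=\bbE\bigl[{\cal K}(\Xi)(0)\bigr],\qquad \Xi:=\int_0^1\xi(s)\,ds,
\]
and then $|{\cal K}(\Xi)(0)|\le C\|\Xi\|\le C\bigl\{\int_0^1\|\xi(s)\|^2\,ds\bigr\}^{1/2}$, which \emph{is} controlled by $|\xi|$ through Proposition~\ref{prop021801} and~\eqref{010712}. So the idea you are missing is: integrate in $s$ \emph{before} estimating, and use the linearity of $\psi_*^{(i)}$ to convert the derivative bound into the available $L^2$-in-time estimate on the Jacobian. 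With this in hand your cylindrical-approximation detour becomes unnecessary for the short-time contribution.
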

\proof
Since $\tilde\psi_*^{(i)}\in C^1_1(V)$, $i=1,2$, from Corollary \ref{cor011011} we conclude that $P_t\tilde\psi_*^{(i)}\in {\cal B}_{\nu}$ for $t\ge1$ and there exists $\nu_0>0$ such that for any $\nu\in(0,\nu_0]$ one can find $C,\ga>0$, for which
$$
\|P_t\tilde\psi_*^{(i)}\|_{\nu}\le C\e^{-\ga t}\|\!|\tilde\psi_*^{(i)}\|\!|_{1},\quad\forall\, t\ge1,\,i=1,2.
$$
This guarantees that $\int_1^{+\infty}P_t\tilde\psi_*^{(i)}dt$ belongs to ${\cal B}_{\nu}$.  Thanks to estimate \eqref{032211}
it suffices only to show that 
\begin{equation}
\label{031801}
\left|\int_0^{1}DP_t\psi_*^{(i)}(w)[\xi]dt\right|\le Ce_{\nu}(w),\quad\forall\,w,\xi\in H,\,|\xi|\le 1.
\end{equation}
To prove the above estimate note that
$$
\int_0^1 DP_t \psi_*^{(i)}(w)[\xi]dt:=\bbE\left[ {\cal K}(\Xi(1))(0)\right],
$$
where $\Xi(w):= \int_0^1\xi(t;w)dt$ and $\xi(t):=D\om(t;w)[\xi]$. We have, from \eqref{H-s} for $s=1$, that there exists $C>0$ such that
$$
\left|{\cal K}(\Xi(w))(0)\right|\le C\|\Xi(w)\|,\quad\forall\,w\in H.
$$
Hence, from Proposition \ref{prop021801}, we conclude that for any $\nu>0$ there exists $C>0$ such that
$$
\left|\int_0^{1}DP_t\psi_*^{(i)}(w)[\xi]dt\right|^2\le |\xi|^2\bbE\exp\left\{\nu |\om(1)|^2+\frac{\nu}{2e}\int_0^1\| \om(s)\|^2ds\right\}
$$
and \eqref{031801} follows from estimate \eqref{010712} formulated below.
\qed

\subsection{Proof of part 2)}
\label{sec5.3}
After a simple calculation we get
\begin{eqnarray*}
&&D_{ij}(T):=\frac{1}{T}\mathbb{E} \Big[\tilde x_i(T)\tilde x_j(T)\Big]=D_{ij}^1(T)+D_{ij}^2(T),
\end{eqnarray*}
with
\begin{eqnarray*}
&&D_{ij}^1(T):=\frac{1}{T}\int_{0}^{T}\!\! \left\langle \mu_0P_s, \tilde\psi_*^{(i)} \int_{0}^{T-s}P_t\tilde\psi_*^{(j)}\,dt\right\rangle ds,\\
&&
D_{ij}^2(T):=\frac{1}{T}\int_{0}^{T}\!\! \left\langle \mu_0P_s, \tilde\psi_*^{(j)} \int_{0}^{T-s}P_t\tilde\psi_*^{(i)}\,dt\right\rangle ds.
\end{eqnarray*}
It suffices only to deal with the limit of $D_{ij}^1(T)$, the other term can be handled in a similar way. We can write that
\begin{equation*}
\Big|D_{ij}^1(T)-\frac{1}{T}\int_{0}^{T}\! \!\! \left\langle \mu_0 P_{s} ,\tilde\psi_*^{(i)} \chi^{(j)}\right\rangle ds\Big|=\frac{1}{T}\left|\int_{0}^{T}\!\! \left\langle \mu_0P_s, \tilde\psi_*^{(i)}(\chi^{(j)}-\chi_{T-s}^{(j)})\right\rangle ds\right|=R_{ij}(T),
\end{equation*}
where
\begin{equation}
\label{012401}
R_{ij}(T):=\left|\int_{0}^{1}\!\! \left\langle \mu_0P_{sT}, \tilde\psi_*^{(i)}(\chi^{(j)}-\chi^{(j)}_{T(1-s)})\right\rangle ds\right|.
\end{equation}
\begin{lemma}
\label{lm012401}
We have
\begin{equation}
\label{012401c}
\lim_{T\to+\infty}R_{ij}(T)=0.
\end{equation}
\end{lemma}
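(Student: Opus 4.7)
My plan is to apply the bounded convergence theorem to the $s$-integral defining $R_{ij}(T)$, showing that the integrand converges to zero pointwise on $[0,1)$ while remaining uniformly bounded on $[0,1]$.

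For the pointwise convergence, observe that
$$\chi^{(j)}(w) - \chi^{(j)}_{T(1-s)}(w) = \int_{T(1-s)}^{+\infty} P_t \tilde\psi_*^{(j)}(w)\,dt.$$
Since $\tilde\psi_*^{(j)}\in C^1_1(V)$ and $\langle\mu_*,\tilde\psi_*^{(j)}\rangle=0$, Corollary \ref{cor011011} yields $|P_t\tilde\psi_*^{(j)}(w)|\le C e^{-\gamma t}e_\nu(w)$ for $t\ge 1$, hence for $T(1-s)\ge 1$,
$$|\chi^{(j)}(w) - \chi^{(j)}_{T(1-s)}(w)|\le \frac{C}{\gamma}\,e^{-\gamma T(1-s)}\,e_\nu(w).$$
Consequently,
$$|\langle\mu_0P_{sT},\tilde\psi_*^{(i)}(\chi^{(j)}-\chi^{(j)}_{T(1-s)})\rangle|\le \frac{C}{\gamma}\,e^{-\gamma T(1-s)}\,\langle\mu_0P_{sT},|\tilde\psi_*^{(i)}|e_\nu\rangle,$$
and the second factor on the right is uniformly bounded in $s\in[0,1]$ and $T\ge 0$, by exactly the H\"older argument used in Section \ref{sec5.1}: writing $|\tilde\psi_*^{(i)}|e_\nu\le \tfrac12(|\tilde\psi_*^{(i)}|^p + e_{q\nu}^{q})$ with $q\in(1,\nu_0/\nu)$, $p$ the conjugate exponent, then using part 1) of Theorem \ref{T1} on $e_{q\nu}$, Corollary \ref{cor011011} together with Corollary \ref{cor011112} on $|\tilde\psi_*^{(i)}|^p\in C^1_p(V)$, and condition \eqref{022011} on $\mu_0$. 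Thus for each fixed $s\in[0,1)$, the integrand in \eqref{012401} vanishes as $T\to +\infty$.

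For the uniform bound on $[0,1]$ that allows one to invoke bounded convergence, I need $|\chi^{(j)}_{T(1-s)}|$ to be dominated by $Ce_{\nu'}$ for some $\nu'\in(0,\nu_0]$ uniformly in the argument $T(1-s)$. For $T(1-s)\ge 1$ this is Proposition \ref{lab21} (estimate \eqref{022211}). For $T(1-s)\le 1$, the definition \eqref{chi-t} together with \eqref{H-s} gives
$$|\chi^{(j)}_{T(1-s)}(w)|\le\int_0^1\bbE|\tilde\psi_*^{(j)}(\om(t;w))|\,dt\le C\int_0^1(1+\bbE\|\om(t;w)\|)\,dt,$$
which is controlled by $Ce_{\nu'}(w)$ via Cauchy--Schwarz and the forthcoming moment bound \eqref{010712}/\eqref{022002}. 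Combined with \eqref{032211}, this furnishes a pointwise bound of the form $|\tilde\psi_*^{(i)}(w)(\chi^{(j)}-\chi^{(j)}_{T(1-s)})(w)|\le C\,|\tilde\psi_*^{(i)}(w)|\,e_{\nu'}(w)$, whose integral against $\mu_0P_{sT}$ is, by the same H\"older argument as above, bounded uniformly in $s$ and $T$.

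With pointwise convergence to zero on $[0,1)$ and a uniform dominating constant on $[0,1]$, the bounded convergence theorem yields $R_{ij}(T)\to 0$. The main technical point to watch is the joint choice of the exponents $\nu,\nu',q\nu\le\nu_0$ so that all invocations of \eqref{012011}, \eqref{010811a}, \eqref{022011}, and the spectral gap estimate \eqref{010811} remain simultaneously valid; once those parameters are fixed consistently, the argument is a routine application of the bounds assembled in Sections \ref{sec3} and \ref{cor-f}.
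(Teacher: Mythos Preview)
Your strategy—bounded convergence on the $s$-integral, with pointwise convergence coming from the exponential decay of $\chi^{(j)}-\chi^{(j)}_{T(1-s)}$—matches the paper's approach on most of $[0,1]$, but there is a gap in the uniform-bound step.

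The issue is the claim that $\langle\mu_0P_{sT},|\tilde\psi_*^{(i)}|e_{\nu'}\rangle$ is bounded uniformly in $s\in[0,1]$ and $T\ge0$. Your H\"older/Young argument reduces this to bounding $\langle\mu_0,P_{sT}|\tilde\psi_*^{(i)}|^p\rangle$ uniformly. For $sT\ge1$ that is fine, via part~2) of Theorem~\ref{T2} followed by the spectral gap. But for $sT\to0$ the regularization estimate \eqref{E27-aa}, whose proof rests on \eqref{011912}, only yields $\bbE\|\om(t;w)\|^{2N}\le Ct^{-N}(|w|^{4N}+1)$; the constant $C_{t,N}$ in $\|P_t\phi\|_\nu\le C_{t,N}\|\!|\phi\|\!|_N$ therefore blows up as $t\to0$. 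Since $\mu_0$ is only an $H$-measure satisfying \eqref{022011}, with no $V$-regularity assumed, $\langle\mu_0,P_t|\tilde\psi_*^{(i)}|^p\rangle$ need not stay bounded as $t\downarrow0$ (indeed $\tilde\psi_*^{(i)}$ is not even defined $\mu_0$-a.e.\ at $t=0$). The argument from Section~\ref{sec5.1} you invoke only controls the \emph{integral} $\int_0^T\langle\mu_0,P_s|\tilde\psi_*|^p\rangle^{1/p}\,ds$, not a pointwise supremum in $s$, so it does not supply the dominating constant needed for bounded convergence on all of $[0,1]$.

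The paper deals with this by splitting the $s$-integral into $[1/T,1]$ and $[0,1/T]$. On $[1/T,1]$ your dominated-convergence argument is essentially what the paper does, cf.\ \eqref{052211}--\eqref{012401a}. On $[0,1/T]$ the paper gives a direct estimate \eqref{052401}: Cauchy--Schwarz in $s$, the bound $|\tilde\psi_*|\le C\|\cdot\|$, and the integrated moment control \eqref{010712-00} on $\int_0^1\|\om(s)\|^2\,ds$ show this piece is $O(1/T)$. Once you insert this splitting, your proof goes through.
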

\proof
Suppose that $p$ is a positive even integer and $q$ is sufficiently close to $1$ so that $q\nu<\nu_0$ and $1/q=1-1/p$, where $\nu$ is as in \eqref{022211} and \eqref{032211}, while $\nu_0$ is such that \eqref{022011} is in force. Then, we can find a  constant $C>0$ such that 
\begin{equation}
\label{022401}
|\chi^{(j)}(w)-\chi^{(j)}_{T(1-s)}(w)|^q\le Ce_{\nu_0}(w),\quad\forall\,w\in H\quad\forall\,s\in[0,1],\,T>0.
\end{equation}
Using Proposition \ref{lab21} and \eqref{022011}  we conclude that
 $$
 \lim_{T\to+\infty} \langle\mu_0P_{sT},|\chi^{(j)}-\chi^{(j)}_{T(1-s)}|^q\rangle=0,\quad\forall\,s\in [0, 1).
$$
Equality \eqref{012401c}  can be concluded, provided we can substantiate passage to the limit with $T$ under the integral appearing on the right hand side of \eqref{012401}.
Suppose first that the argument $s$ appearing in the integral  satisfies $sT\ge 1$.
Using H\"older's inequality, in the same way as it was done in \eqref{012211},   and estimates \eqref{022211} and \eqref{032211} the expression under the integral can be estimated by
\begin{equation}
 \label{052211}
 \begin{aligned}
&
\langle\mu_0,P_{sT}|\tilde\psi_*^{(i)}|^p \rangle^{1/p} \langle\mu_0P_{sT},|\chi^{(j)}-\chi^{(j)}_{T(1-s)}|^q\rangle^{1/q} \\
&\qquad 
\le \sup_{t\ge1}\langle\mu_0,P_{t}|\tilde\psi_*^{(i)}|^p \rangle^{1/p}\langle\mu_0P_{sT},|\chi^{(j)}-\chi^{(j)}_{T(1-s)}|^q\rangle^{1/q}.
 \end{aligned}
 \end{equation}
Since  $|\tilde\psi_*|^p\in C^1_p(V)$ we have $\sup_{t\ge1}\langle\mu_0,P_{t}|\tilde\psi_*|^p \rangle<+\infty$, thanks to part 2) of Theorem \ref{T2}.
As a result the left hand side of \eqref{052211} is bounded for all $s\in[1/ T,1]$.
From the Lebesgue dominated convergence theorem we conclude therefore  that
\begin{equation}
\label{012401a}
\lim_{T\to+\infty}\int_{1/T}^{1}\!\! \left\langle \mu_0P_{sT}, \tilde\psi_*^{(i)}(\chi^{(j)}-\chi^{(j)}_{T(1-s)})\right\rangle ds=0.
\end{equation}

Next we shall prove that there exists $C>0$
such that 
\begin{equation}
\label{052401}
\left|\int_{0}^{1/T}\!\! \left\langle \mu_0P_{sT}, \tilde\psi^{(i)}_*(\chi^{(j)}-\chi^{(j)}_{T(1-s)})\right\rangle ds\right|\le \frac{C}{T},
\end{equation}
provided that $T\ge 1$.
Indeed, using first the Cauchy--Schwartz inequality and then  \eqref{022211}, and  \eqref{032211} we get that the  left hand side can be estimated by
\begin{eqnarray*}
&&
  C\bbE\left\{\left\{  \int_{0}^{1/T}|\tilde\psi_*(\om(sT))|^2ds\right\}^{1/2}\left\{\int_0^{1/T}e_{2\nu}(\om(sT)) ds\right\}^{1/2}\right\}.
\end{eqnarray*}
Applying H\"older's inequality with $q\in(1,2)$ and $1/p=1-1/q$ we get that this expression can be estimated by
\begin{eqnarray*}
&&
C  \left\{\bbE\left\{  \int_{0}^{1/T}|\tilde\psi_*(\om(sT))|^2ds\right\}^{p/2}\right\}^{1/p}\left\{\bbE\left\{\int_0^{1/T}e_{2\nu}(\om(sT)) ds\right\}^{q/2}\right\}^{1/q}\\
  &&
  \le  C_1\left\{\bbE\left\{ \frac1T \int_{0}^{1}\|\om(s))\|^2ds\right\}^{p/2}\right\}^{1/p}\left\{\bbE\left\{\int_0^{1/T}e_{2\nu}(\om(sT)) ds\right\}\right\}^{1/2}
  \\
  &&
  \le 
  \frac{C_2}{T}\left\{\bbE\exp\left\{ \nu \int_{0}^{1}\|\om(s))\|^2ds\right\}\right\}^{p/2}\le \frac{C_3}{T},
\end{eqnarray*}
provided $2\nu<\nu_0$. The penulmative inequality follows from  \eqref{012011} and assumption  \eqref{022011}, while the last estimate is a consequence of \eqref{010712-00} stated below. Thus, \eqref{052401} follows.
 \qed

 We are left therefore with the problem of finding
the limit of
\begin{equation}
\label{011012}
S_{ij}(T)=\frac{1}{T}\int_{0}^{T}\! \!\! \left\langle \mu_0 P_{s} ,\tilde\psi_*^{(i)} \chi^{(j)}\right\rangle ds
\end{equation}
as $T\to+\infty$. Let $R\ge1$ be fixed and $\varphi_R\colon \bbR\to\bbR$ be a smooth mapping such that $\varphi_R(x)=1$ for $|x|\le R$ and  $\varphi_R(x)=0$ for $|x|\ge R+1$.  Observe that 
$$
\hat\chi^{(R)}(w):= \chi^{(j)}(w)\varphi_R(|w|^2)
$$ belongs to $C^1_b(H)$, and thus also to $C^1_b(V)$. Therefore,
$\tilde\psi_*^{(i)}\hat \chi^{(R)}\in C^1_1(V)$. Denote by $S^{(R)}(T)$ the expression in \eqref{011012} with $\chi^{(j)}$ replaced by $\hat\chi^{(R)}$.

Let $\varepsilon>0$ be arbitrary. Using the same argument as in the proof of Lemma \ref{lm012401}  one can show that for any $\varepsilon >0$ there exists a sufficiently large $R\ge 1$ and $T_0>0$ so  that
$$
\left|\frac{1}{T}\int_{0}^{T}\! \!\! \left\langle \mu_0 P_{s} ,\tilde\psi_*^{(i)} (\chi^{(j)}-\hat\chi^{(R)})\right\rangle ds\right|<\frac{\varepsilon}{2}.
$$
Likewise, we can choose $R\ge 1$ and $T_0>0$ so large that
$$
\left|\left\langle \mu_* ,\tilde\psi_*^{(i)} (\chi^{(j)}-\hat\chi^{(R)})\right\rangle \right|<\frac{\varepsilon}{2}.
$$
 By Corollary \ref{cor011011} we have
$$
\|P_t(\tilde\psi_*^{(i)} \hat\chi^{(R)})-\langle\mu_*, \tilde\psi_*^{(i)}\hat \chi^{(R)}\rangle\|_{\nu}\le Ce^{-\gamma t}\|\!|\tilde\psi_* \hat\chi^{(R)}\|\!|_{2},\quad \forall\,t\ge0.
$$
In consequence we conclude that
$$
\lim_{T\to+\infty}S^{(R)}(T)=\langle\mu_*, \tilde\psi_*^{(i)}\hat \chi^{(R)}\rangle.
$$
Hence,
\begin{eqnarray*}
&&\limsup_{T\to+\infty}|S_{ij}(T)-\langle\mu_*, \tilde\psi_*^{(i)} \chi^{(j)}\rangle|\\
&&\qquad 
\le \limsup_{T\to+\infty}|S_{ij}(T)-S^{(R)}(T)|+|\langle\mu_*, \tilde\psi_*^{(i)}\hat \chi^{(R)}\rangle-\langle\mu_*, \tilde\psi_*^{(i)} \chi^{(j)}\rangle|<\varepsilon.
\end{eqnarray*}
This proves that 
$$
\lim_{T\to+\infty}S_{ij}(T)=\langle\mu_*, \tilde\psi_*^{(i)} \chi^{(j)}\rangle.
$$
We have shown 
 therefore  part 2) of the theorem with
 \begin{equation}
 \label{052002}
\lim_{T\to+\infty}D_{ij}(T):=\langle\mu_*, \tilde\psi_*^{(i)} \chi^{(j)}\rangle+\langle\mu_*, \tilde\psi_*^{(j)} \chi^{(i)}\rangle.\qquad \qed
\end{equation}

\subsection{Proof of part 3)} 
\label{sec5.4}
\subsubsection{Reduction to the central limit theorem for  martingales}
\label{mart-decomp}
Note that
\begin{eqnarray}
\label{decomp}
&&\frac{1}{\sqrt{T}}\int_{0}^{T}\tilde\psi_*(\om(s))\,ds
=\frac{1}{\sqrt{T}}M_{T}+R_{T},
\end{eqnarray}
where
\begin{eqnarray}
\label{mart}
M_{T}:=\chi(\om(T))-\chi(\om(0))+\int_{0}^{T}\tilde\psi_*(\om(s))\,ds
\end{eqnarray}
and
$$
R_{T}:= \frac{1}{\sqrt{T}}\left[\chi(\om(0))-\chi(\om(T))\right].
$$
\begin{proposition}\label{lab20}
The process
$\{
M_{T},\,T\ge0\}
$ is a square integrable, two dimensional vector martingale with respect to the  filtration  $\{{\cal F}_T, T\ge 0\}.$ Moreover, random vectors $R_{T}$ converge to $0$, as $T\to+\infty$, in the $L^{1}$-sense.
\end{proposition}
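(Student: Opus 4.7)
The plan is to verify the martingale property by a direct computation using the Markov property together with the semigroup representation $\chi=\int_0^\infty P_s\tilde\psi_*\,ds$ from Proposition \ref{lab21}. First I would fix $0\le t\le T$ and compute the conditional expectation
\begin{equation*}
\bbE[M_T-M_t\mid {\cal F}_t]=\bbE\left[\chi(\om(T))-\chi(\om(t))+\int_t^T\tilde\psi_*(\om(s))\,ds\,\Big|\,{\cal F}_t\right].
\end{equation*}
By the Markov property (Theorem \ref{MF1}) this equals $P_{T-t}\chi(\om(t))-\chi(\om(t))+\int_0^{T-t}P_s\tilde\psi_*(\om(t))\,ds$. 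The semigroup identity
\begin{equation*}
P_{T-t}\chi(w)=\int_0^{+\infty}P_{s+T-t}\tilde\psi_*(w)\,ds=\int_{T-t}^{+\infty}P_s\tilde\psi_*(w)\,ds=\chi(w)-\int_0^{T-t}P_s\tilde\psi_*(w)\,ds,
\end{equation*}
which is justified by the uniform convergence on bounded sets of $\chi_t$ and Fubini applied to $P_{T-t}$, makes the three terms cancel. One subtle point is that one has to check that $P_{T-t}$ can indeed be passed under the integral defining $\chi$; this is done by working first with $\chi_t$, applying $P_{T-t}$ termwise, and then passing to the limit $t\to+\infty$ using part 2) of Theorem \ref{T1} together with the bound \eqref{032211}.

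Square integrability of $M_T$ is the combination of three ingredients. The bound \eqref{032211} gives $|\chi(w)|\le Ce_\nu(w)$, so $\bbE|\chi(\om(t))|^2\le C^2\bbE e_{2\nu}(\om(t))$, which by part 1) of Theorem \ref{T1} (applied with $2\nu\le \nu_0$ after further shrinking $\nu_0$ if necessary) is dominated by $C'\langle\mu_0,e_{2\nu}\rangle$, finite in view of assumption \eqref{022011}. For the integral term $\int_0^T\tilde\psi_*(\om(s))\,ds$ one writes $\bbE\left|\int_0^T\tilde\psi_*(\om(s))\,ds\right|^2\le T\int_0^T\bbE|\tilde\psi_*(\om(s))|^2\,ds$, bounds $|\tilde\psi_*|^2$ by a function in $C^1_2(V)$ and uses part 2) of Theorem \ref{T2} together with assumption \eqref{022011} to conclude finiteness.

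The remainder term $R_T$ is easy: using $|\chi(w)|\le Ce_\nu(w)$ we obtain
\begin{equation*}
\bbE|R_T|\le \frac{1}{\sqrt T}\Bigl(\bbE|\chi(\om(0))|+\bbE|\chi(\om(T))|\Bigr)\le \frac{C}{\sqrt T}\Bigl(\langle\mu_0,e_\nu\rangle+\bbE\,e_\nu(\om(T))\Bigr),
\end{equation*}
and part 1) of Theorem \ref{T1} together with \eqref{022011} shows that the bracket is bounded uniformly in $T$, so $\bbE|R_T|=O(T^{-1/2})\to 0$ as $T\to+\infty$. The only non-automatic piece of the whole proof is the justification of the semigroup identity used in the martingale step, i.e.\ exchanging $P_{T-t}$ with the improper integral defining $\chi$; everything else reduces to applying the quantitative estimates already collected in Section \ref{sec3} and Proposition \ref{lab21}.
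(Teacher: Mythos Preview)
Your argument is correct and is precisely the standard corrector/martingale computation that the paper defers to \cite{kowalczuk} (Proposition 5.2 and Lemma 5.3 there); the paper gives no independent proof. One small refinement: in the square-integrability step for $\int_0^T\tilde\psi_*(\om(s))\,ds$, invoking part 2) of Theorem \ref{T2} pointwise in $s$ is not quite enough near $s=0$, since the constant $C_{t,N}$ there degenerates like a negative power of $t$; instead bound $|\tilde\psi_*(\om(s))|^2\le C\|\om(s)\|^2$ directly and use \eqref{022002} (or \eqref{010712}) for $\int_0^1\bbE\|\om(s)\|^2\,ds$, then Corollary \ref{cor011011} for $s\ge 1$.
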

The proof of this result is quite standard and can be found in \cite{kowalczuk}, see Proposition 5.2 and Lemma 5.3.

%
%
%
%
%

\subsubsection{Central limit theorem for martingales}

Assume  that $\{{\cal M}_n,\,n\ge0\}$ is a zero mean martingale subordinated to  a filtration  $\{{\frak F}_{n},\,n\ge0\}$ and 
${\cal Z}_n:={\cal M}_n-{\cal M}_{n-1}$ for $n\ge1$, is the respective sequence of martingale differences. Recall that the quadratic variation of the martingale is defined
as
$$
\langle {\cal M}\rangle_{n}=\sum_{j=1}^{n}\bbE\left[{\cal Z}_j^2|{\frak F}_{j-1}\right],\quad n\ge1.
$$
The following theorem has been shown in  \cite{kowalczuk}, see  Theorem 4.1.
\begin{theorem}\label{lab30b}
 Suppose also that
\begin{itemize}
\item[M1)]
 \begin{equation}
  \label{sublinear}
 \sup_{n\ge1}\bbE {\cal Z}_n^2<+\infty,
 \end{equation}
\item[M2)] \label{lab310} for every $\varepsilon >0,\,$\, 
$$
\lim_{N\to+\infty}\frac {1}{N}\sum_{j=0}^{N-1} \mathbb{E}\Big[ {\cal Z}_{j+1}^2, \,
|{\cal Z}_{j+1} | \ge \varepsilon\sqrt{N}  \Big]=0,
$$  
\item[M3)] 
 there exists $\sigma\ge0$ such that
    \begin{equation}
    \label{010212}
    \lim_{K\rightarrow\infty}\limsup_{\ell\rightarrow\infty}\frac{1}{\ell}\sum_{m=1}^{\ell}\mathbb{E}\Big|\frac{1}{K}
    \bbE\left[\langle {\cal M}\rangle_{mK}-\langle {\cal M}\rangle_{(m-1)K}\Big|{\frak F}_{(m-1)K}\right] -\sigma^{2}\Big|=0,
    \end{equation}
\item[M4)] for every $\varepsilon >0$
\begin{equation}
\label{m3}
\lim_{K\rightarrow\infty}\limsup_{\ell\rightarrow\infty}\frac{1}{\ell K}\sum_{m=1}^{\ell}\sum_{j=(m-1)K}^{mK-1}\mathbb{E}[1+{\cal Z}_{j+1}^2,\, |{\cal M}_{j}-{\cal M}_{(m-1)K}|\geq\varepsilon\sqrt{\ell K}]=0.
\end{equation}
\end{itemize}
Then,
\begin{equation}
\label{M2b}
\lim_{N\to+\infty}\frac{\mathbb{E}\langle {\cal M}\rangle_{N}}{N}
    =\sigma^{2}
\end{equation} and
\begin{equation}
\label{052501}
\lim_{N\to\infty}  \, \mathbb{E}  e^{i \theta
  {\cal M}_N/\sqrt N} = e^{-\sigma^2 \theta^2/2},\quad\forall\,\theta\in\bbR.
\end{equation}
\end{theorem}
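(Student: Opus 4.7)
The strategy is a classical blocking reduction to a Lindeberg-type central limit theorem for triangular arrays of martingale differences. Fix a block size $K$ (to be sent to infinity after $\ell$) and, for $N=\ell K$, introduce the block increments
$$
S_{m,K}:={\cal M}_{mK}-{\cal M}_{(m-1)K}=\sum_{j=(m-1)K}^{mK-1}{\cal Z}_{j+1},\qquad m=1,\dots,\ell,
$$
which are martingale differences with respect to the coarsened filtration ${\frak G}_m:={\frak F}_{mK}$. Since ${\cal M}_{\ell K}=\sum_{m=1}^\ell S_{m,K}$, a CLT for $\sum_m S_{m,K}/\sqrt{\ell K}$ with $K$ fixed and $\ell\to\infty$, followed by $K\to\infty$, will yield \eqref{052501}. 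The extension from $N=\ell K$ to arbitrary $N$ is standard: the orthogonality bound $\bbE{\cal M}_N^2\le N\sup_n\bbE {\cal Z}_n^2$ provided by M1 shows that ${\cal M}_N-{\cal M}_{\lfloor N/K\rfloor K}$ is of order $\sqrt{K}=o(\sqrt N)$.

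The conclusion \eqref{M2b} is the easier half. By the tower property, $\sum_{m=1}^\ell\bbE[\langle{\cal M}\rangle_{mK}-\langle{\cal M}\rangle_{(m-1)K}\mid{\frak F}_{(m-1)K}]$ telescopes (after taking expectations) to $\bbE\langle{\cal M}\rangle_{\ell K}$; applying Jensen's inequality to M3 (moving $|\cdot|$ inside the outer expectation) gives
$$
\lim_{K\to\infty}\limsup_{\ell\to\infty}\left|\frac{\bbE\langle{\cal M}\rangle_{\ell K}}{\ell K}-\sigma^2\right|=0.
$$
Since $\bbE\langle{\cal M}\rangle_N/N=N^{-1}\sum_{j=1}^N\bbE {\cal Z}_j^2$ is independent of $K$, the double limit forces convergence along the whole sequence $N\to\infty$, and M1 handles interpolation across indices that are not integer multiples of $K$.

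For \eqref{052501} I would apply a Lindeberg martingale CLT for triangular arrays to $\{S_{m,K}/\sqrt{\ell K}\}_{m=1}^\ell$. Two hypotheses must be checked. First, the stabilization of the conditional quadratic variation,
$$
\frac{1}{\ell K}\sum_{m=1}^\ell \bbE[S_{m,K}^2\mid{\frak G}_{m-1}]\longrightarrow\sigma^2\quad\text{in }L^1,
$$
which is exactly M3, given that $\bbE[S_{m,K}^2\mid{\frak G}_{m-1}]=\bbE[\langle{\cal M}\rangle_{mK}-\langle{\cal M}\rangle_{(m-1)K}\mid{\frak G}_{m-1}]$ by orthogonality of martingale increments within a block. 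Second, the Lindeberg condition
$$
\frac{1}{\ell K}\sum_{m=1}^\ell \bbE\bigl[S_{m,K}^2\mathbf{1}_{\{|S_{m,K}|\ge\varepsilon\sqrt{\ell K}\}}\bigr]\longrightarrow 0\quad\text{as }\ell\to\infty,\ \text{then}\ K\to\infty.
$$

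The main obstacle is to derive this second condition from M4. The idea is that on the event $\{|S_{m,K}|\ge\varepsilon\sqrt{\ell K}\}$, either some partial sum ${\cal M}_j-{\cal M}_{(m-1)K}$ with $(m-1)K\le j\le mK-1$ reaches level $\tfrac{\varepsilon}{2}\sqrt{\ell K}$, or the last jump ${\cal Z}_{mK}$ itself is of that order. Coupled with the Cauchy--Schwarz bound $S_{m,K}^2\le K\sum_{j=(m-1)K}^{mK-1}{\cal Z}_{j+1}^2$, this dominates the tail contribution by the double sum appearing in M4, up to an auxiliary term of the form $N^{-1}\sum_{j=1}^{N}\bbE[{\cal Z}_{j+1}^2\mathbf{1}_{\{|{\cal Z}_{j+1}|\ge\delta\sqrt{N}\}}]$ with $N=\ell K$, which is controlled by M2. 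Taking $\limsup_{\ell\to\infty}$ and then $K\to\infty$ in both contributions yields the Lindeberg condition, and a diagonal argument merges the iterated limit into a single limit $N\to\infty$, completing the proof of \eqref{052501}.
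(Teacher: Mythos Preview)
The paper does not prove this theorem; it is quoted verbatim from \cite{kowalczuk} (Theorem~4.1 there) and used as a black box, so there is no ``paper's own proof'' to compare against. Your blocking strategy and the derivation of \eqref{M2b} from M1 and M3 are correct and standard.

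There is, however, a genuine gap in your verification of the Lindeberg condition for the block increments. The Cauchy--Schwarz bound $S_{m,K}^2\le K\sum_{j=(m-1)K}^{mK-1}{\cal Z}_{j+1}^2$ costs a factor~$K$: after inserting it, the tail sum you must control is
\[
\frac{1}{\ell}\sum_{m=1}^{\ell}\sum_{j=(m-1)K}^{mK-1}\bbE\bigl[{\cal Z}_{j+1}^{2}\,,\ |S_{m,K}|\ge\varepsilon\sqrt{\ell K}\bigr],
\]
which is $K$ times an expression of the form appearing in M4. Condition M4 only asserts that the $\tfrac{1}{\ell K}$--normalised quantity satisfies $\lim_K\limsup_\ell(\cdot)=0$; multiplying by $K$ before letting $K\to\infty$ need not preserve this. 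A second, related issue: the truncation event you end up with is $\{|{\cal M}_{mK-1}-{\cal M}_{(m-1)K}|\ge\tfrac{\varepsilon}{2}\sqrt{\ell K}\}$ (or the full-block event), whereas M4 is stated with the event $\{|{\cal M}_{j}-{\cal M}_{(m-1)K}|\ge\varepsilon\sqrt{\ell K}\}$ at the \emph{running} index~$j$ that matches the summation index. Your splitting ``either some partial sum is large or the last jump is large'' pins the event at the end of the block, and replacing it by a union over $j$ only introduces another factor~$K$.

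The presence of the $1$ in $\bbE[1+{\cal Z}_{j+1}^{2},\,\cdot\,]$ and the running-index form of the event in M4 suggest that the intended argument does not pass through a Lindeberg condition on the block sums at all, but rather expands the characteristic function step by step within each block (conditioning successively on ${\frak F}_j$, Taylor-expanding $\exp\{i\theta{\cal Z}_{j+1}/\sqrt{N}\}$), with M4 controlling the error accrued when the running within-block partial sum has already wandered beyond $\varepsilon\sqrt{\ell K}$. To close your argument you would either need to carry out such a direct characteristic-function estimate, or find a sharper route from M4 to the block Lindeberg condition that does not lose the factor~$K$.
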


\subsubsection{Proof of the central limit theorem for $M_T/\sqrt{T}$}
 
We prove that $M_n/\sqrt{n}$, where $n\ge1$ is an integer, converge in law to a Gaussian random vector, as $n\to+\infty$. This suffices to conclude that
in fact $M_T/\sqrt{T}$ satisfy the central limit theorem.  Indeed,  let   $Z_n:=M_n-M_{n-1}$ for $n
\ge1$. 
Note that  for any $\varepsilon>0$
\begin{equation}
\label{vanish}
\lim_{N\rightarrow\infty}\sup_{T\in[N,N+1)}|M_T/\sqrt{T}-M_N/\sqrt{N}|=0,\quad\bbP-{\rm a.s.}
\end{equation}
For a given $\varepsilon_N>0$ we let
$$
A_N:=[\sup_{T\in[N,N+1)}|M_T/\sqrt{T}-M_N/\sqrt{N}|\geq\varepsilon_N].
$$
We have
\begin{eqnarray*}
\mathbb{P}[A_N]&\le& \mathbb{P}[\sup_{T\in[N,N+1)}|M_T-M_N|\ge \varepsilon_N\sqrt{N}/2]
+\mathbb{P}[|M_N|[N^{-1/2}-(N+1)^{-1/2}]\ge \varepsilon_N/2]\\
&\le& \frac{C}{N^2\varepsilon^4_N}\bbE|Z_{N+1}|^4+\frac{C}{N^3\varepsilon^2_N}\sum_{j=1}^N\bbE|Z_j|^2.
\end{eqnarray*}
The last inequality follows from the Doob and Chebyshev estimates and the  elementary inequality $N^{-1/2}-(N+1)^{-1/2}\le CN^{-3/2}$ that holds for all $N\ge1$ and some constant $C>0$. We denote the first and second  terms on the right hand side by $I_N$ and $I\!I_N$,  respectively. We claim that  there exists $C>0$ such that
\begin{equation}
\label{012311}
\bbE |Z_{N+1}|^4\le C,\quad\forall\,N\ge0.
\end{equation}
Indeed, we have
 \begin{eqnarray*}
&&
\bbE |Z_{N+1}|^4 \le  C\left\{\bbE| \chi(\om(N+1))|^4+\bbE|\chi(\om(N))|^4+\bbE\left|\int_N^{N+1}\tilde\psi_*(\om(s))ds\right|^4\right\}.
\end{eqnarray*}
To estimate the first two terms appearing on the right hand side  we use \eqref{032211} and then subsequently \eqref{010811a}. We conclude that all these terms can be estimated by a constant independent of $N$.  The last expectation can be estimated using \eqref{H-s} by
$$
C\bbE\left[\int_N^{N+1}\|\om(s)\|^2ds\right]^2=C\left\langle \mu_0 P_N,\bbE\left[\int_0^{1}\|\om(s;\cdot)\|^2ds\right]^2\right\rangle.
$$ Applying \eqref{010712} and then again \eqref{010811a}  we obtain that also this term can be estimated independently of $N$.
Hence 
$$
I_N\le \frac{C}{N^2\varepsilon^4_N}.
$$
On the other hand, from \eqref{012311} we conclude also  that for some constants $C,C_1>0$ independent of $N$ we have
$$
I\!I_N= \frac{C}{N^3\varepsilon^2_N}\sum_{k=1}^N\bbE |Z_k|^2\le \frac{C_1}{N^2\varepsilon^2_N}.
$$
Choosing $\varepsilon_N$ tending to $0$ sufficiently slowly we can guarantee that
$$
\sum_{N\ge1}\mathbb{P}[A_N]<+\infty,
$$
and \eqref{vanish} follows from an application of the Borel--Cantelli lemma.

Choose $a\in\bbR^2$ and let
$
{\cal M}_n:=M_n\cdot a.
$
Condition M1) obviously holds in light of \eqref{012311}. Condition M2) also easily follows from \eqref{012311}  and the Chebyshev inequality.
Before verifying hypothesis M3) let us introduce
some additional notation. For a given probability measure $\mu$ on $H$ and a Borel event $A$  write 
$$
\bbP_{\mu}[A]:=\int_H\bbP[A|\om(0)=w]\mu(dw).
$$
The respective expectation shall be denoted by $\bbE_\mu$. We write $\bbP_w$ and  $\bbE_w$ in case of $\mu=\delta_w$.  We can write that
\begin{eqnarray*}
&&
\frac{1}{K} \bbE\left[\langle{\cal M}\rangle_{mK}-\langle {\cal M}\rangle_{(m-1)K}\Big|{\frak F}_{(m-1)K}\right] =\frac{1}{K}\sum_{j=0}^{K-1}P_{j}\Psi(\om((m-1)K))
\end{eqnarray*}
with
$
\Psi(w):=\bbE_w {\cal M}_1^2.
$
Suppose that
$\si^2=\langle \mu_*,\Psi\rangle.$ Let also $
\tilde \Psi(w):=\Psi(w)-\si^2,
$
$$
S_K(w):=\frac{1}{K}
   \sum_{j=0}^{K-1}P_{j}\tilde\Psi(w)
$$
and 
$$
\tilde S_K(w):=|S_K(w)|-\langle \mu_*,|S_K|\rangle,\quad\,w\in H.
$$
We can rewrite the expression under the limit in \eqref{010212} as being equal to
 \begin{equation}
    \label{010212a}
 \frac{1}{\ell}\sum_{m=1}^{\ell}\mathbb{E}\Big|\frac{1}{K}
   \sum_{j=0}^{K-1}P_{j}\tilde\Psi(\om((m-1)K))\Big|= \langle \mu_0Q_\ell^K, |S_K|\rangle, 
    \end{equation}
    where
    $$
Q_\ell^K:=\frac{1}{\ell}\sum_{m=1}^{\ell} P_{(m-1)K}.
$$
It is obvious that the second term on the right hand side of \eqref{010212a} does not contribute to the limit in hypothesis M3).
We prove that
\begin{equation}
    \label{010212b}
\lim_{\ell\to+\infty}\sum_{m=1}^{\ell}\langle \mu_0Q_\ell^K, \tilde S_K\rangle =0.
    \end{equation}
Then M3) shall follow upon subsequent applications  of \eqref{010212b}, as $\ell\to+\infty$, and Birkhoff's individual ergodic theorem, as $K\to+\infty$.
To prove \eqref{010212b} it suffices only to show that the function
$S_K(\cdot)$ is continuous on $H$
and for any $K$ fixed there exists a constant $C>0$ such that
\begin{equation}
\label{030212}
 |S_K(w)|\le Ce_{\nu}(w),\quad \forall\,w\in H.
\end{equation}
Equality \eqref{010212b} is then a consequence of the fact that
measures  
$
\mu_0Q_\ell^K
$ converge weakly to $\mu_*$ as $\ell\to+\infty$, and 
estimate \eqref{012011}.
Continuity of $S_K(\cdot)$ follows from the  fact that $\tilde\Psi\in{\cal B}_{\nu}$. On the other hand estimate \eqref{030212} follows from the fact that
 for any $j\ge 1$ fixed there exists a constant $C>0$ such that
\begin{equation}
\label{030212a}
P_{j}\Psi(w)\le Ce_{\nu}(w),\quad \,w\in H.
\end{equation}
The last estimate can be seen as follows
\begin{eqnarray}
\label{marta}
&&
\Psi(w)\le |a|^2\bbE_w|M_{1}|^2=|a|^2\sum_{i=1}^2\left\{P_1[\chi^{(i)}]^2(w)+[\chi^{(i)}(w)]^2+2\int_{0}^{1}P_s(\tilde\psi_*^{(i)}P_{1-s}\chi^{(i)})(w)\,ds\right.\nonumber\\
&&
\\
&&
\left.+2\int_{0}^{1}ds\int_0^sP_{s'}(\tilde\psi_*^{(i)}P_{s-s'}\tilde\psi_*^{(i)})(w)\,ds'+2(\chi^{(i)}P_1\chi^{(i)})(w)+2\chi^{(i)}(w)\int_{0}^{1}P_s\tilde\psi_*^{(i)}(w)\,ds\right\}.\nonumber
\end{eqnarray}
Using estimates \eqref{012011} and \eqref{032211} we
conclude that for any $\nu>0$ there exists a constant $C>0$ such that
$$
\Psi(w)\le Ce_{\nu}(w),\quad \forall\,w\in H.
$$
Hence, using again \eqref{012011}, we conclude \eqref{030212a}. This ends the proof of hypothesis M3).

Finally we verify condition M4). For that purpose  it suffices only to prove that
\begin{eqnarray*}
 \lim_{K\rightarrow +\infty}\limsup_{\ell\rightarrow +\infty}\frac 1K\sum_{j=0}^{K-1}\langle \mu_{0}Q_{\ell}^K, G_{\ell,j}\rangle=0,
\end{eqnarray*}
where
$$
G_{\ell,j}(w):= \mathbb{E}_w\left[1+|Z_{j+1}|^2,|M_{j}|\geq\varepsilon\sqrt{\ell K}\right].
$$
The latter follows if we show that 
\begin{eqnarray}
\label{G-j}
\limsup_{\ell\rightarrow+\infty}\langle \mu_{0}Q_{\ell}^K, G_{\ell,j}\rangle=0,\quad\forall\,j=0,\ldots,K-1.
\end{eqnarray}
From the Markov inequality we obtain
$$
\mathbb{P}_w\Big[|M_{j}|\geq \varepsilon\sqrt{\ell K}\Big]\leq \frac{\mathbb{E}_{w}|M_{j}|}{\varepsilon\sqrt{\ell K}}
\le I_1+I_2,
$$
where
$$
I_1:=
\frac{1}{\varepsilon\sqrt{\ell K}}\sum_{i=1}^2\mathbb{E}_{w}|\chi^{(i)}(\om(j))-\chi^{(i)}(w)|
$$
and
$$
I_2:=\frac{1}{\varepsilon\sqrt{\ell K}}\sum_{i=1}^2\mathbb{E}_{w}\left|\int_0^j\tilde\psi_*^{(i)}(\om(s))ds\right|.
$$
Using \eqref{032211} 
we conclude that
$$
I_1\leq \frac{C_1e_{\nu}(w)}{\varepsilon\sqrt{\ell K}}.
$$
On the other hand, we have
$$
I_2\le \frac{C_2}{\varepsilon\sqrt{\ell K}}\bbE_w\int_0^j\|\om(s)\|ds
$$
  and from \eqref{010712-00} we get that 
 $$
I_2\leq \frac{C_3e_{\nu}(w)}{\varepsilon\sqrt{\ell K}}.
$$ 
  Summarizing, we have shown that
  for any $R>0$,
\begin{eqnarray}
\label{012201-2011}
\sup_{|w|\le R}\mathbb{P}_{w}\Big[|M_{j}|\geq \varepsilon\sqrt{\ell K}|\Big]\leq \frac{C}{\sqrt{\ell K}}.
\end{eqnarray}
Furthermore,
\begin{eqnarray}
\label{022611}
&&\sup_{|w|\le R}\mathbb{E}_{w}\left[|Z_{j+1}|^2,| M_{j}|\geq\varepsilon\sqrt{\ell K}\right]\\
&&
\leq\,
2\sum_{i=1}^2\left\{\sup_{|w|\le R}\mathbb{E}_{w}\left\{\left[\chi^{(i)}(\om(j+1))-\chi^{(i)}(\om(j))\right]^2,| M_{j}|\geq\varepsilon\sqrt{\ell K}\right\}\right.\nonumber\\
&&+
\left.\sup_{|w|\le R}\mathbb{E}_{w}\left\{\Big[\int^{j+1}_{j}\tilde\psi_*^{(i)} (\om(s))ds\Big]^{2},| M_{j}|\geq\varepsilon\sqrt{\ell K}\right\}\right\}\nonumber
\\
&&
\le C\sup_{t\in[0,K]}\sup_{|w|\le R}\mathbb{E}_{w}\left[e_\nu(\om(t)),| M_{j}|\geq\varepsilon\sqrt{\ell K}\right]\nonumber
\end{eqnarray}
for some constant $C$ independent of $\ell$. The above argument shows that
$$
\lim_{\ell\to+\infty}\sup_{|w|\le R}|G_{\ell,j}(w)|= 0.
$$

To obtain \eqref{G-j} it suffices only to prove  that for  $\delta>0$ as in H3) we have
 \begin{equation}
\label{032201-2011}
 \limsup_{\ell\rightarrow+\infty}\langle \mu_{0}Q_{\ell}^K, G_{\ell,j}^{1+\delta/2} \rangle<+\infty,\quad\forall\,K\ge1, 
\,0\le j\le K-1.
\end{equation} 
Note that 
\begin{eqnarray}
\langle \mu_{0}Q_{\ell}^K, G_{\ell,j}^{1+\delta/2} \rangle\le\mathbb{E}_{\mu_{0}Q_{\ell}^K}(1+|Z_{j+1}|^2 )^{1+\delta/2}.\label{lab48}
\end{eqnarray}
This however is a consequence of \eqref{012011}. Thus condition M4) follows.

Summarizing, we have shown that 
$$
\lim_{n\to+\infty}\exp\left\{\frac{ia\cdot M_N}{\sqrt{N}}\right\}=\exp\left\{-\frac{1}{2}\sum_{i,j=1}^2D_{ij}a_ia_j\right\},
$$
where
$$
D_{ij}:=\left\langle\mu_*,\bbE\left\{\prod_{p=i,j}\left[\chi^{(p)}(\om(1;w))-\chi^{(p)}(w)+\int_{0}^{1}\tilde\psi_*^{(p)}(\om(s;w))\,ds\right]\right\}\right\rangle.
$$
After a somewhat lengthy, but straightforward calculation, using stationarity of $\mu_*$ and the fact that
$$
\left\langle\mu_*,\left[P_s\chi^{(i)}-\chi^{(i)}+\int_{0}^{s}P_{s'}\tilde\psi_*^{(i)}\,ds'\right]\tilde\psi_*^{(j)}\right\rangle=0,\quad\forall\,s\ge0
$$
we conclude that
$D_{ij}$ coincides with the expression on the right hand side of  \eqref{052002}. \qed

\section{Proof of the results from section \ref{sec3}}

\label{sec6}

\subsection{Proof of Theorem \ref{T1}}

Part 3) is a direct consequence of parts 1) and 2). 

\subsubsection{Proof of part 1)}

Suppose that $\om(t):=\om(t;w)$.
From \eqref{010712} to conclude that for $\nu\in(0,\nu_0]$, where $\nu_0=1/(4\|Q\|)$, there exists a constant $C>0$ such that
\begin{equation}
\label{010712a}
\bbE\exp\left\{\nu|\om(n+1)|^2\right\}\le C\bbE\exp\left\{q\nu|\om(n)|^2\right\},\quad\forall\,n\ge 0.
\end{equation}
Let $q=e^{-1/2}$. The right hand side can be further estimated using Jensen's inequality
$$
C\bbE\exp\left\{q\nu|\om(n)|^2\right\}\le C\left(\bbE\exp\left\{\nu|\om(n)|^2\right\}\right)^q\le   C^{1+q}\left(\bbE\exp\left\{q\nu|\om(n-1)|^2\right\}\right)^q.
$$
Iterating this procedure we conclude that for any $n\ge 0$
\begin{equation}
\label{010712b}
\begin{aligned}
\bbE\exp\left\{\nu|\om(n+1)|^2\right\}&\le C^{1+q+\ldots+q^n}\left\{\exp\left\{q^{n+1}\nu|\om(0)|^2\right\}\right\}^{1/q^{n+1}}
\\
&\le C^{1/(1-q)}\exp\left\{\nu|w|^2\right\}.
\end{aligned}
\end{equation}
Therefore (cf. part 3) of  Lemma A.1 of \cite{HM1}) we have the following.
\begin{lemma}
\label{lm011112}
There exists a constant $C>0$ such that 
\begin{equation}
\label{010712c}
\bbE\exp\left\{\nu|\om(t;w)|^2\right\}\le C\exp\left\{\nu|w|^2\right\},\quad\forall\,t\ge0,\,\nu\in(0,\nu_0],\,w\in H.
\end{equation}
\end{lemma}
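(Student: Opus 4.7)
The plan is to reduce the continuous-time exponential bound to the discrete-time bound already established in \eqref{010712b} by combining a short-time estimate with the Markov property from Theorem \ref{MF1}. The structure will be a two-step argument: first control the semigroup on the interval $[0,1]$, then bootstrap to all $t \ge 0$ using the integer-time estimate.

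First I would note that \eqref{010712b} already gives the desired bound at every integer time $n \ge 0$. For arbitrary $t \ge 0$, write $t = n + s$ with $n := \lfloor t \rfloor$ and $s := t - n \in [0,1)$, and invoke the Markov property to obtain
$$
\bbE \exp\{\nu |\om(t;w)|^2\} \;=\; \bbE \bigl[ (P_s \mathrm{e}_{\nu})(\om(n;w)) \bigr],
$$
where $\mathrm{e}_\nu(v) := \exp\{\nu |v|^2\}$. Thus the proof will be complete once one controls $P_s \mathrm{e}_\nu$ uniformly for $s \in [0,1]$.

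Second, I would establish the short-time estimate: there exists $C_1 > 0$ (depending only on $\nu_0$) such that
$$
\sup_{s \in [0,1]} \bbE \exp\{\nu |\om(s;w)|^2\} \;\le\; C_1 \exp\{\nu|w|^2\},\quad \forall\, \nu\in(0,\nu_0],\ w\in H.
$$
This is the one substantive ingredient to supply. It will follow from the very same It\^o-calculus computation that underlies \eqref{010712} and produces \eqref{010712a}: applying It\^o's formula to $\exp\{\nu |\om(s)|^2\}$ and exploiting the dissipation $-2\int_0^s \|\om(r)\|^2 dr$ from the Laplacian against the noise quadratic variation term, which is controlled thanks to $\nu \le \nu_0 = 1/(4\|Q\|)$, yields a Gronwall-type inequality valid on every subinterval $[0,s]$ with $s\in[0,1]$, not only at the endpoint $s = 1$. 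The proof of \eqref{010712a} applies verbatim with $s$ in place of $1$, producing a possibly different constant, but one that can be taken uniform in $s \in [0,1]$.

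Combining the two ingredients yields
$$
\bbE \exp\{\nu|\om(t;w)|^2\} \;\le\; C_1\, \bbE \exp\{\nu |\om(n;w)|^2\} \;\le\; C_1\, C^{1/(1-q)} \exp\{\nu|w|^2\},
$$
which is the desired estimate with a new constant independent of $t$, $w$, and $\nu \in (0,\nu_0]$. The main obstacle is verifying the short-time uniform bound, but this is less a new estimate than a careful re-reading of the argument for \eqref{010712a}: the argument already contains the pointwise-in-time control on $[0,1]$, which is stronger than needed at the endpoint $s=1$.
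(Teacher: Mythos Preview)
Your proposal is correct and follows essentially the same route as the paper: use \eqref{010712b} for integer times, then bridge to arbitrary $t$ via the Markov property and a short-time bound on $[0,1]$. Note that the short-time estimate you describe as needing to be established is precisely \eqref{010712} itself, which is already stated for all $t\in[0,1]$ (with right-hand side $C_1\exp\{\nu|w|^2 e^{-t/2}\}\le C_1\exp\{\nu|w|^2\}$), so no re-derivation is required.
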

The above lemma obviously  implies \eqref{012011}.

\subsubsection{A stability result of Hairer and Mattingly}

In our proof we use  Theorems 3.4 and 3.6 of \cite{HM1}, which we recall below. 
Suppose that $({\cal H},|\cdot|)$ is a separable Hilbert space with a stochastic flow $\Phi_t\colon {\cal H}\times\Omega \to{\cal H}$, $t\ge0$, i.e.  a family of $C^1$-class random mappings of ${\cal H}$ defined over a probability space $(\Om,{\cal F},\bbP)$ that satisfies $\Phi_t(\Phi_s(x;\om);\om))=\Phi_{t+s}(x;\om)$ for all $t,s\ge0$, $x\in{\cal H}$ and $\bbP$ a.s. $\om\in
\Om$. 
We assume that $P_t$ and $P_t(x,\cdot)$, $x\in {\cal H}$,  are transition semigroup and a family of  transition probabilities corresponding to the flow, i.e.
$$
P_t \phi(x)=\int \phi(y)P_t(x,dy)=\bbE \phi(\Phi_t(x)),\quad\forall\phi\in B({\cal H}), \,x\in{\cal H}.
$$
Here $B({\cal H})$ is the space of Borel and bounded functions on ${\cal H}$. The dual semigroup acting on a Borel probability measure $\mu$ shall be 
 denoted by $\mu P_t$.
We adopt the following hypotheses on the flow.

{\bf Assumption 1.} There exists a measurable function $V\colon {\cal H}\to[1,+\infty)$ and two increasing continuous functions $V_*,V^*\colon [0,+\infty)\to[1,+\infty)$ that satisfy
\begin{itemize}
\item[1)]
$$
V_*(|x|)\le V(x)\le V^*(|x|),\quad \forall\,x\in{\cal H},
$$
and $\lim_{a\to+\infty}V_*(a)=+\infty$,
\item[2)] there exist $C>0$ and $\kappa_1>1$ such that
$$
aV^*(a)\le CV_*^{\kappa_1}(a),\quad\forall\,a\ge0,
$$
\item[3)] there exist $\kappa_0<1$, $C>0$ and a decreasing function $\alpha\colon [0,1]\to[0,1]$ with $\alpha(1)<1$ such that
$$
\bbE\left[V^{\kappa}(\Phi_t(x))\left(1+|D\Phi_t(x)[h]|\right)\right]\le CV^{\alpha(t)\kappa}(x),\quad\forall\,x,h\in {\cal H},\,|h|=1,
$$
and $t\in[0,1]$, $\kappa\in[\kappa_0,\kappa_1]$.
Here $D\Phi_t(x)[h]$ denotes the Fr\'echet derivative at $x$ in the direction $h$. 
\end{itemize}

{\bf Assumption 2.} There exist $C>0$ and $\kappa_2\in[0,1)$ such that for any $\varepsilon \in(0,1)$ one can find $C(\varepsilon),T(\varepsilon)>0$,
for which
\begin{equation}
\label{E26}
|DP_t\phi(x)|\le CV^{\kappa_2}(x)\left\{C(\varepsilon)\left[P_t(|\phi|^2)(x)\right]^{1/2}+\varepsilon \left[P_t(|D\phi|^2)(x)\right]^{1/2}\right\},
\end{equation}
for all $x\in{\cal H}$, $t\ge T(\varepsilon)$.

Introduce now the following family of metrics on ${\cal H}$. For $\kappa\ge0$ and $x,y\in {\cal H}$ we let
$$
{\rm d}_\kappa(x,y):=\inf_{c\in \Pi(x,y)}\int_0^1V^{\kappa}(c(t))|\dot c(t)|dt,
$$
where the infimum extends over the set $\Pi(x,y)$ consisting of all $C^1$ regular paths $c\colon [0,1]\to{\cal H}$ such that $c(0)=x$, $c(1)=y$. In the special case of $\kappa=1$ we set  ${\rm d}={\rm d}_1$.
For two Borel probability measures $\mu_1,\mu_2$ on ${\cal H}$ denote by ${\cal C}(\mu_1,\mu_2)$ the family of all Borel measures on ${\cal H}\times{\cal H}$ whose marginals on the first and second coordinate coincide with $\mu_1,\mu_2$ respectively. We denote also by
$$
{\rm d}(\mu_1 ,\mu_2 ):=\sup\left[\left|\langle \mu_1,\phi\rangle-\langle \mu_2,\phi\rangle\right|\colon {\rm Lip}(\phi)\le 1\right].
$$
Here ${\rm Lip}(\phi)$ is the Lipschitz constant of $\phi\colon {\cal H}\to\bbR$ in the metric ${\rm d}(\cdot,\cdot)$. By ${\cal P}_1({\cal H},{\rm d})$ we denote the space of all Borel, probability measures $\mu$ on ${\cal H}$ satisfying $\int_{\cal H} {\rm d}(x,0)\mu(dx)<+\infty$.
 
Let  $A\subset {\cal H}\times{\cal H}$ be Borel measurable. For a given $t\ge0$ and $x,y\in{\cal H}$ denote
$$
{\cal P}_t(x,y;A)=\sup\left[\mu[A]\colon \mu\in {\cal C}(P_t(x,\cdot),P_t(y,\cdot))\right].
$$

{\bf Assumption 3.} Given any $\kappa\in(0,1)$ and $\delta,R>0$ there exists $T_0>0$ such that for any $T\ge T_0$ there exists $a>0$ for which
$$
\inf_{|x|,|y|\le R}{\cal P}_T(x,y;\Delta_{\delta,\kappa})\ge a.
$$
Here, 
$$
\Delta_{\delta,\kappa}:=[(x,y)\in {\cal H}\times{\cal H}\colon {\rm d}_\kappa(x,y)<\delta],\quad \forall\,\kappa,\delta>0.
$$

\begin{theorem}
\label{hm-result}
Suppose that Assumptions 1, 2, 3 stated above are in force. Then the following are true:
\begin{itemize}
\item[1)]
 there exist $C,\gamma>0$ such that
\begin{equation}
\label{010812}
{\rm d}(\mu_1 P_t,\mu_2 P_t)\le C\e^{-\gamma t}{\rm d}(\mu_1,\mu_2),\quad \forall \mu_1,\mu_2\in {\cal P}_1({\cal H},{\rm d}),
\end{equation}
\item[2)]
there exists a unique probability measure $\mu_*\in {\cal P}_1({\cal H},{\rm d})$ invariant under $\{P_t,\,t\ge0\}$, i.e. $\mu_*=\mu_* P_t$ for all $t\ge0$,
\item[3)] we have
\begin{equation}
\label{020812}
\| P_t\phi-\langle\mu_*,\phi\rangle\|_{\rm Lip}\le  C\e^{-\gamma t}\| \phi-\langle\mu_*,\phi\rangle\|_{\rm Lip},\quad \forall\,\phi\in C^1({\cal H}),\,t\ge0.
\end{equation}
Here 
$$
\|\phi\|_{\rm Lip}:=\sup_{x\not=y}\frac{|\phi(x)-\phi(y)|}{{\rm d}(x,y)}+|\langle \mu_*,\phi\rangle|.
$$
\end{itemize}
\end{theorem}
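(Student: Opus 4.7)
The plan is to establish the contraction \eqref{010812} first, from which (2) and (3) follow routinely: existence of $\mu_*$ is obtained by showing $(\delta_0 P_n)_{n\ge 1}$ is Cauchy in the complete Wasserstein space $(\mathcal P_1(\mathcal H,\mathrm d),\mathrm d)$, uniqueness is immediate from \eqref{010812}, and the spectral gap \eqref{020812} follows by Kantorovich--Rubinstein duality applied to Lipschitz test functions, using that $\int \mathrm d(x,0)\mu_*(dx)<+\infty$ is a byproduct of the Lyapunov bound.

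The heart of the argument is a Harris-type coupling in the Wasserstein metric, combining all three assumptions. First, I would extract a \emph{local contraction in }$\mathrm d$ from Assumption~2: for $\phi$ with $\|\phi\|_{\mathrm{Lip}}\le 1$, integrating \eqref{E26} along a $V$-weighted geodesic joining $x$ to $y$ and invoking the moment bound from Assumption~1(3) should give
\[
|P_t\phi(x)-P_t\phi(y)|\le \bigl(\varepsilon+C(\varepsilon)e^{-\lambda t}\bigr)\bigl(1+V(x)+V(y)\bigr)^{\kappa_2}\mathrm d(x,y),
\]
hence a contraction factor $\alpha<1$ for $t$ large and for pairs in $\Delta_{\delta,\kappa}$ with $\delta$ small. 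Second, I would use Assumption~1(3) with $\kappa=\kappa_0$ to produce a Lyapunov drift $\mathbb{E}V^{\kappa_0}(\Phi_t(x))\le \theta V^{\kappa_0}(x)+K$ with $\theta<1$, which forces the chain to return geometrically often to the sublevel set $\{V\le R\}$. Third, on $\{V\le R\}$ Assumption~3 provides a uniform Wasserstein minorization: a positive fraction $a>0$ of the transition measures $P_T(x,\cdot)$ and $P_T(y,\cdot)$ can be coupled on the small set $\Delta_{\delta,\kappa}$, bringing the coupled pair into the regime where the local contraction applies.

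To glue these ingredients I would work with an auxiliary weighted distance of the form
\[
\tilde{\mathrm d}_\beta(x,y):=\sqrt{\mathrm d(x,y)\bigl(2+\beta V(x)+\beta V(y)\bigr)}
\]
and prove, for a fixed $T\ge T_0$ large, a single-step contraction $\tilde{\mathrm d}_\beta(\mu_1 P_T,\mu_2 P_T)\le \alpha'\tilde{\mathrm d}_\beta(\mu_1,\mu_2)$ with some $\alpha'<1$; iterating and passing back to $\mathrm d$ via the equivalence $\tilde{\mathrm d}_\beta^2\asymp \mathrm d$ on sublevel sets of $V$ then delivers \eqref{010812}. The main obstacle will be the calibration of the three small parameters $\delta$ (from Assumption~3), $\varepsilon$ (from \eqref{E26}) and $\beta$ (in $\tilde{\mathrm d}_\beta$) so that the growing factor $V^{\kappa_2}$ in the gradient estimate is dominated by the Lyapunov drift uniformly in $(x,y)$; this interplay is precisely the technical core of the Hairer--Mattingly framework \cite{HM,HM1}, and my plan is to follow that recipe essentially verbatim, adapting it only to account for the specific exponents $\kappa_0,\kappa_1,\kappa_2$ supplied by Assumptions~1--2.
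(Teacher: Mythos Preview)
The paper does not prove this theorem at all: it is quoted verbatim as ``Theorems 3.4 and 3.6 of \cite{HM1}'' and used as a black box in the verification of Theorem~\ref{T1}. So there is no ``paper's own proof'' to compare against; the authors simply invoke the Hairer--Mattingly result and then spend the rest of Section~\ref{sec6} checking that Assumptions~1, 2, 3 hold for the Lagrangian dynamics.

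Your sketch is a faithful outline of the actual argument in \cite{HM1}: the auxiliary metric $\tilde{\mathrm d}_\beta$, the Lyapunov drift from Assumption~1(3), the gradient bound \eqref{E26} integrated along $V$-weighted geodesics to get local contraction, and the coupling minorization from Assumption~3 are precisely the ingredients Hairer and Mattingly combine. If your goal were to reproduce their proof, you have the right plan, and you correctly identify the calibration of $(\delta,\varepsilon,\beta)$ against the exponents $\kappa_0,\kappa_1,\kappa_2$ as the delicate point. But in the context of this paper no such proof is expected or given; a one-line citation suffices.
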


\subsubsection{Proof of part 2)}
\subsection*{Verification of Assumption 1} 
Denote  $\Phi_t(w;W):=\om(t;w,W)$, where $W$ is the cylindrical Wiener process appearing in \eqref{E25a}. Let 
\begin{equation}
\label{041912}
\xi(t;w,\xi):=D\Phi_t(w)[\xi], \quad\xi\in H.
\end{equation}
In what follows we suppress $w$ and $\xi$ in our notation when their values are obvious from the context.
Define
$V(w):=V_*(|w|)=V^*(|w|)=\e^{\nu |w|^2}$. Assumption 1 of Theorem \ref{hm-result} is a consequence of the result below and estimate
 \eqref{010712} shown in the Appendix \ref{secApB}.
 \begin{proposition}
 \label{prop021801}
For any $\nu>0$ there exists $C>0$ such that
\begin{equation}
\label{051801}
 |\xi(t)|\le  |\xi|\exp\left\{\nu\int_0^t\| \om(s)\|^2ds+Ct\right\}
 \end{equation}
 and
 $$
\left\{ \int_0^t\|\xi(s)\|^2ds\right\}^{1/2}\le  |\xi|\exp\left\{\nu\int_0^t\| \om(s)\|^2ds+Ct\right\},\quad\forall\,t\ge0,\quad\bbP-{\rm a.s.}
 $$
 \end{proposition}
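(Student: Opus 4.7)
My plan is to linearize \eqref{E25a} with respect to the initial condition, carry out a standard $L^2$-energy estimate on the Jacobian $\xi(t):=D\Phi_t(w)[\xi]$, and finish with Gronwall's inequality. Since the Wiener process $W$ does not depend on $w$, formal differentiation yields the deterministic (random-coefficient) linear equation
\[
\partial_t\xi = \Delta\xi - [B_0(\xi,\om)+B_0(\om,\xi)] + [B_1(\xi,\om)+B_1(\om,\xi)],\qquad \xi(0)=\xi.
\]
A rigorous construction of $\xi(\cdot)$ satisfying this equation proceeds through the Galerkin approximation already used to produce $\om$ in Theorem \ref{MF1}, for which the a priori estimates below are uniform in the truncation.

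Pairing the equation with $\xi$ in $H$ gives $\tfrac12\tfrac{d}{dt}|\xi|^2+\|\xi\|^2 = -\langle B_0(\xi,\om),\xi\rangle + \langle B_1(\xi,\om),\xi\rangle$, since $\langle B_0(\om,\xi),\xi\rangle=0$ by divergence-freeness of $\cK(\om)$ on $\bbT$, and $\langle B_1(\om,\xi),\xi\rangle = \cK(\om)(0)\cdot\int_\bbT\nabla(\xi^2/2)\,dx = 0$ because $\cK(\om)(0)$ is a constant vector on the closed torus. Both surviving trilinear terms are then estimated using the $L^\infty$-bound $|\cK(\xi)(x)|\le \|\cK(\xi)\|_{\infty}\le C|\xi|_{s}$ obtained from \eqref{embed} together with \eqref{010512a}, combined with the Gagliardo--Nirenberg interpolation \eqref{gagliardo-nirenberg} in the form $|\xi|_{s}\le C|\xi|^{1-s}\|\xi\|^{s}$ for small $s>0$. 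This yields
\[
|\langle B_0(\xi,\om),\xi\rangle|+|\langle B_1(\xi,\om),\xi\rangle| \le C|\xi|^{2-s}\|\xi\|^{s}\|\om\|.
\]

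A first Young's inequality with exponents $(2/s,\,2/(2-s))$ applied to the factor $\|\xi\|^{s}$ produces $\tfrac{s}{2}\|\xi\|^2 + \tilde C(s)|\xi|^2\|\om\|^{r}$ with $r:=2/(2-s)<2$; for $s$ small the $\|\xi\|^2$-part is absorbable into the dissipation $\|\xi\|^2$. A second Young's inequality in the form $a^r\le \eta a^2 + C(\eta)$, valid for $r<2$ and arbitrary $\eta>0$, converts the strictly sub-quadratic power of $\|\om\|$ into $\eta|\xi|^2\|\om\|^2 + C(\eta)|\xi|^2$. Choosing $\eta$ proportional to $\nu$ delivers, for any prescribed $\nu>0$,
\[
\tfrac{d}{dt}|\xi|^2 + \|\xi\|^2 \le 2\nu|\xi|^2\|\om\|^2 + C_\nu|\xi|^2.
\]
Dropping $\|\xi\|^2\ge 0$ and applying Gronwall yields the first bound, up to renaming $2\nu\mapsto\nu$; the second bound follows by integrating the displayed inequality from $0$ to $t$ (keeping $\|\xi\|^2$ on the left), substituting the just-obtained pointwise control on $|\xi(s)|^2$ into the right-hand side, and recognizing the resulting integrand as an exact derivative of the exponential factor.

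The main obstacle is the requirement that $\nu$ be arbitrary: a naive $L^4$-based estimate $|\langle B_0(\xi,\om),\xi\rangle|\le C|\xi|\|\xi\|\|\om\|$ followed by a single Young step locks the prefactor of $|\xi|^2\|\om\|^2$ to a fixed constant. The flexibility is unlocked only by paying with a small $s>0$ in a Sobolev interpolation, so that the power of $\|\om\|$ coming out of the first Young step is strictly sub-quadratic; the second Young splitting then brings the prefactor down to any chosen $\eta$. A secondary subtlety is that $\cK(\xi)(0)$ requires $\xi\in V$, which for $t>0$ is guaranteed by parabolic smoothing (cf.\ part 1 of Theorem \ref{T2}), or is automatic along the Galerkin finite-dimensional approximation where all quantities are smooth and the bounds above transfer without loss.
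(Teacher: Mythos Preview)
Your proof is correct and follows essentially the same approach as the paper: an $L^2$-energy estimate on the linearized equation \eqref{E25-1}, the cancellations $\langle B_i(\om,\xi),\xi\rangle=0$, interpolation plus two Young inequalities to reach $\partial_t|\xi|^2+\|\xi\|^2\le(2\nu\|\om\|^2+C_\nu)|\xi|^2$, and Gronwall (with the second bound obtained exactly as you describe, by integrating and recognizing the exact derivative). The only cosmetic difference is the sequencing of the Young steps---the paper bounds the trilinear terms by $C|\xi|_{1/2}\|\om\||\xi|$ via \eqref{020712}--\eqref{030712}, applies Young with parameter $\nu$ directly to the product $\|\om\|\cdot|\xi|$, and then absorbs the leftover $\tfrac{C}{\nu}|\xi|_{1/2}^2\le \tfrac{C'}{\nu}|\xi|\,\|\xi\|$ into the dissipation by a second Young---which is marginally more direct than your route through a sub-quadratic power $\|\om\|^r$, but the content is the same.
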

 \proof
Note that $\xi(t)$ satisfies a  (non-stochastic) equation
\begin{eqnarray}\label{E25-1}
&&
\partial_t \xi(t) =\Delta\xi(t) -\eta(t)\cdot\nabla \xi(t)-{\cal K}(\xi(t))\cdot\nabla \om(t)\\
&&
+\eta(t,0)\cdot\nabla \xi(t)+{\cal K}(\xi(t))(0)\cdot\nabla \om(t), \qquad \xi(0)=\xi\in H. \nonumber
\end{eqnarray}
Hence,
$$
\partial_t |\xi(t)|^2 =-2\|\xi(t)\|^2-2\langle {\cal K}(\xi(t))\cdot\nabla \om(t),\xi(t)\rangle+2\langle{\cal K}(\xi(t))(0)\cdot\nabla \om(t),\xi(t)\rangle.
$$
Using \eqref{020712} and \eqref{030712} (for $r=1/2$) we conclude that for some deterministic $C>0$,
\begin{eqnarray*}
\partial_t |\xi(t)|^2 &&\le -2\|\xi(t)\|^2+C|\xi(t)|_{1/2}\| \om(t)\||\xi(t)|
\\
&&
\le -2\|\xi(t)\|^2+\nu\| \om(t)\|^2|\xi(t)|^2+\frac{C^2}{4\nu}|\xi(t)|_{1/2}^2.
\end{eqnarray*}
An application of  the Gagliardo--Nirenberg inequality  \eqref{gagliardo-nirenberg} with $s=1$, $\beta=1/2$ yields
$$
|\xi(t)|_{1/2}\le C\|\xi(t)\|^{1/2}|\xi(t)|^{1/2}
$$
for some constant $C>0$.
In consequence, there exist $C,C_1>0$ such that
\begin{equation}\label{051801a}
\begin{aligned}
\partial_t |\xi(t)|^2 &\le -\|\xi(t)\|^2+ \nu\| \om(t)\|^2|\xi(t)|^2+\frac{C^2}{2\cdot 4^3\nu}|\xi(t)|^2\\
&
\le -\|\xi(t)\|^2+ (\nu\| \om(t)\|^2+C_1)|\xi(t)|^2.
\end{aligned}
\end{equation}
 Estimate \eqref{051801} follows upon an application of Gronwall's inequality.
 In addition, from  \eqref{051801} and \eqref{051801a}  we conclude that there exists $C>0$ such that
 \begin{eqnarray*}
\int_0^t\|\xi(s)\|^2ds&\le& |\xi|^2+ \int_0^t (\nu\| \om(s)\|^2+C_1)|\xi(s)|^2ds\\
&\le& 
 |\xi|^2+ |\xi|^2\int_0^t (\nu\| \om(s)\|^2+C)\exp\left\{\nu\int_0^s\| \om(u)\|^2du+Cs\right\}ds\\
&
 \le &|\xi|^2\exp\left\{\nu\int_0^t\| \om(s)\|^2dt+Ct\right\}. \qed
\end{eqnarray*}

\subsection{Verification of Assumption 2}

%

Here we follow the ideas of  Hairer and Mattingly, see \cite{HM1}. Suppose that  $\Psi\colon H\to {\cal H}$ is a Borel measurable function. Given an $(\cF_t)$-adapted process $g\colon [0,\infty) \times \Omega \to H$ satisfying $ \E \int_0^t |g_s|^2d s <+\infty $ for each $t\ge0$ we denote by $\cD_g\Psi(\om(t))$  the Malliavin derivative of $\Psi(\om(t))$ in the direction of $g$; that is 
$$
\cD_g\Psi(\om(t;w)):=\lim_{\varepsilon \downarrow
0}\frac{1}{\varepsilon} \left[ \Psi(\om(t;w,W+\varepsilon g)) -\Psi(
\om(t;w,W))\right], 
$$
where the limit is understood  in the $L^2(\Omega,\cF,\P;{\cal H})$ sense. Recall that $\om_g(t;w):=\om(t;w,W+g)$ solves the equation
\begin{equation}\label{E25}
\begin{aligned}
d \om_g(t;w) &=[\Delta\om_g(t) -B_0(\om_g(t;w))+B_1(\om_g(t;w))]d t + Qd  W(t)+Qg(t)dt, 
\\
 \om(0;w)&=w\in H. 
\end{aligned}
\end{equation}
The following two facts about the Malliavin derivative shall be crucial
for us in the sequel. Directly from the definition of the Malliavin
derivative we conclude \emph{the chain rule:} suppose that $\Psi\in
C^1_b(H;{\cal H})$ then
\begin{equation}\label{083002}
\cD_g\Psi( \om(t;w))=D\Psi(\om(t;w))[D(t)],
\end{equation}
with $D(t;w,g)=: \cD_g\om(t;w)$, $t\ge 0$.
In addition, the  \emph{integration by parts formula} holds, see
Lemma 1.2.1, p. 25 of \cite{Nualart}. Suppose that $\Psi\in
C^1_b(H)$ then
\begin{equation}
\label{083003} \E[\cD_g\Psi( \om(t;w))]=
\E\left[\Psi(\om(t;w))\int_0^t\langle g(s),d
W(s)\rangle\right].
 \end{equation}

In particular, one can easily show that when $H={\cal H}$ and $\Psi=I$, where $I$ is the identity operator,
the Malliavin derivative of $\om(t;w)$ exists and
 the process $D(t;w,g)$ (we omit writing $w$ and $g$ when they are obvious from the context), solves the linear equation
\begin{equation}\label{ET28}
\begin{aligned}
\frac{d D}{d t}(t)&=\Delta D(t) -\eta(t)\cdot \nabla D(t)-\delta k(t)\cdot \nabla \om(t)\\
&+\eta(t,0)\cdot \nabla D(t)+\delta k(t,0)\cdot \nabla \om(t)
+ Qg(t),\\
&\\
D(0)&=0.
\end{aligned}
\end{equation}
Here
$\delta k(t):={\cal K}(D(t))$.
Denote $\rho(t;w,\xi):=\xi(t)- {\cD}_g\om(t;w)$. We have the following.
\begin{proposition}
\label{m-lm} For any $\nu,\gamma>0$ there exists a constant $C>0$ such that for any given $w,\xi\in H$ one can find  an $(\cF_t)$-adapted $H$-valued
process $g(t)= g(t,w,\xi)$ that satisfies
\begin{equation}\label{ET210}
\sup_{|\xi|\le 1} \E \,
|\rho(t;w,\xi)|^2\le Ce_{\nu}(w)e^{-\gamma t},\quad\forall\,t\ge0,
\end{equation}
and
\begin{equation}\label{ET29}
\sup_{|\xi|\le 1} \int_0^\infty \E\,
|g(s,w,\xi)|^2d s \le Ce_{\nu}(w),\quad \forall\,w\in H.
\end{equation}
\end{proposition}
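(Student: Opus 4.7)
The plan is to construct $g$ by a Hairer--Mattingly style control that cancels $\rho=\xi-D$ on a sufficiently high-dimensional low frequency subspace where $Q$ is boundedly invertible, and then rely on the dissipation from $-\Delta$ together with the exponential moment bound on $\int_0^t\|\om(s)\|^2 ds$ (the estimate \eqref{010712} cited at several earlier places) to obtain exponential decay for the remainder. Concretely, I would first subtract the equation \eqref{ET28} for $D(t)$ from \eqref{E25-1} for $\xi(t)$ to obtain the $\rho$-equation
\begin{equation*}
\partial_t\rho=\Delta\rho-\eta\cdot\nabla\rho-{\cal K}(\rho)\cdot\nabla\om+\eta(t,0)\cdot\nabla\rho+{\cal K}(\rho)(0)\cdot\nabla\om-Qg,\qquad \rho(0)=\xi.
\end{equation*}

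Next I would specify the control. Fix $N\in\N$ and let $\Pi_N$ be the orthogonal projection of $H$ onto the span of $\{e_k:0<|k|\le N\}$. Since $Q$ is diagonal with $q_k\neq 0$ (the null space being trivial), the restriction $Q|_{\Pi_N H}$ is invertible with bounded inverse. I would define
\begin{equation*}
g(t):=\la Q^{-1}\Pi_N\rho(t),\qquad \la:=(N+1)^2/2,
\end{equation*}
giving a coupled linear random system for $(\rho,g)$, well-posed path-by-path and $(\cF_t)$-adapted because $\om$, $\xi(t)$ and $D(t)$ all are. Using incompressibility of $\eta$ and the fact that $\eta(t,0)$ is a constant vector (so $\langle\eta(t,0)\cdot\nabla\rho,\rho\rangle=0$), the energy identity reads
\begin{equation*}
\tfrac{1}{2}\tfrac{d}{dt}|\rho|^2=-\|\rho\|^2-\langle{\cal K}(\rho)\cdot\nabla\om,\rho\rangle+{\cal K}(\rho)(0)\cdot\langle\nabla\om,\rho\rangle-\la|\Pi_N\rho|^2.
\end{equation*}
Both nonlinear terms I would estimate using the Gagliardo--Nirenberg inequality \eqref{gagliardo-nirenberg} and \eqref{010512a}, as in the proof of Proposition \ref{prop021801}, producing a bound of the form $C\|\om\|\cdot\|\rho\|\cdot|\rho|$ which Young's inequality turns into $\tfrac{1}{2}\|\rho\|^2+C_1\|\om\|^2|\rho|^2$.

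Now the spectral gap of $-\Delta$ on $(I-\Pi_N)H$ gives $\|\rho\|^2\ge (N+1)^2(|\rho|^2-|\Pi_N\rho|^2)$, so combining with the term $-2\la|\Pi_N\rho|^2=-(N+1)^2|\Pi_N\rho|^2$ the troublesome $|\Pi_N\rho|^2$ cancels and one obtains
\begin{equation*}
\tfrac{d}{dt}|\rho(t)|^2\le \left(-(N+1)^2+2C_1\|\om(t)\|^2\right)|\rho(t)|^2.
\end{equation*}
Gronwall's lemma then yields
\begin{equation*}
|\rho(t)|^2\le |\xi|^2\exp\left\{-(N+1)^2 t+2C_1\int_0^t\|\om(s)\|^2 ds\right\}.
\end{equation*}
Taking expectations and invoking the exponential moment estimate \eqref{010712} with parameter $2C_1$ (choosing $C_1$ small enough, which amounts to how cleanly one performs Young's inequality, or equivalently choosing $N$ larger), I obtain $\bbE|\rho(t)|^2\le C|\xi|^2 e_\nu(w)e^{-\ga t}$ for any prescribed $\ga>0$ provided $N$ is large. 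This is \eqref{ET210}. For \eqref{ET29}, since $\|Q^{-1}\Pi_N\|_{L(H)}$ is a finite constant depending only on $N$, one has $\bbE|g(t)|^2\le C\bbE|\rho(t)|^2$, and integrating the exponential decay bound just obtained over $[0,\infty)$ yields the required uniform-in-$\xi$, $e_\nu(w)$-weighted bound.

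The main obstacle is the delicate balancing: the constant $C_1$ multiplying $\int_0^t\|\om\|^2 ds$ must be compatible with the range of admissible exponents in \eqref{010712} (which is controlled by $\|Q\|$), and the choice of $N$ (hence $\la$) must simultaneously guarantee spectral dominance over this exponential moment rate and produce a decay exceeding the desired $\ga$. Some care is also required in treating the non-local term $\eta(t,0)\cdot\nabla\rho$ and its analogue ${\cal K}(\rho)(0)\cdot\nabla\om$, where the pointwise evaluation is handled exactly as in Corollary \ref{cor012703} via Sobolev embedding and the regularization of $\om(t)$ into $V$ for $t>0$; for $t=0$ the initial condition $\rho(0)=\xi$ lies only in $H$ but the parabolic smoothing is enough to justify the integral manipulations.
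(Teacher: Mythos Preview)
Your approach is correct and takes a genuinely simpler route than the paper's. Both arguments follow the Hairer--Mattingly paradigm of controlling $\rho$ via a low-mode feedback, but the feedbacks differ. The paper introduces an auxiliary process $\zeta$ (shown to equal $\rho$) satisfying a modified equation in which the full low-mode dynamics --- including the nonlinear coupling $\Pi_{<N}B_s(\om,\zeta)$ --- is replaced by the explicit ODE $\dot\zeta_N=-\tfrac12\zeta_N/|\zeta_N|$, forcing $\zeta_N\equiv 0$ for $t\ge 2$; the corresponding control is $g=Q^{-1}\bigl(-\Delta_N^{\perp}\zeta+\Pi_{<N}B_s(\om,\zeta)+\tfrac12\zeta_N/|\zeta_N|\bigr)$. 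Because this $g$ contains $B_s(\om,\zeta)$, verifying \eqref{ET29} costs the paper about two pages of estimates on $\int_0^T\bbE|Q^{-1}\Pi_{<N}B_{i,s}(\om,\zeta)|^2\,dt$, handled by repeated integration by parts against the exponential moment bound \eqref{010712-00}. Your purely linear feedback $g=\lambda Q^{-1}\Pi_N\rho$ makes \eqref{ET29} a one-line corollary of \eqref{ET210} via $|g|^2\le C_N|\rho|^2$; this is a real gain.

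One detail needs tightening. The intermediate bound you write, $C\|\om\|\,\|\rho\|\,|\rho|$, does \emph{not} allow $C_1$ to be made small by Young's inequality (the splitting $\tfrac12\|\rho\|^2+C_1\|\om\|^2|\rho|^2$ fixes $C_1$), and enlarging $N$ does not touch $C_1$ either --- it only enlarges the decay rate $(N+1)^2$, which cannot compensate if $2C_1$ exceeds the admissible exponent $\nu_0$ in \eqref{010712-00}. The fix is exactly the argument of Proposition~\ref{prop021801} that you cite: use the sharper bound $C\|\om\|\,|\rho|_{1/2}\,|\rho|$ from \eqref{020712}--\eqref{030712}, apply Young first to isolate $\nu'\|\om\|^2|\rho|^2$ with $\nu'>0$ arbitrarily small, then Gagliardo--Nirenberg and a second Young turn $|\rho|_{1/2}^2$ into $\tfrac12\|\rho\|^2+C_{\nu'}|\rho|^2$. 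The (large) constant $C_{\nu'}$ is what gets absorbed by choosing $N$ large. With this adjustment your proof goes through.
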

We prove this proposition shortly. First, however let us demonstrate
 how to use it to finish  verification of Assumption 2. We have
$$
D P_t\phi(w)[\xi]= \ \  \E\left\{\,D\phi(\om(t;w))[D(t)]\right\}
+\E\,\left\{ D
\phi(\om(t;w))[\rho(t;w,\xi)]\right\}.
$$
Using the chain rule, see \eqref{083002}, the right hand side can be rewritten as
\begin{eqnarray*}
&&
 \E\, \left\{{\cD}_g \phi(\om(t;w))\right\} +\E\,\left\{ D
\phi(\om(t;w))[\rho(t;w,\xi)]\right\}
\\
&&
= \E\,
\left\{\phi(\om(t;w))\int_0^t\langle g(s),\d
W(s)\rangle\right\}+ \E\,\left\{
 D\phi(\om(t;w))[\rho(t;w,\xi)]\right\}.
 \end{eqnarray*}
 The last  equality follows from integration by parts formula  \eqref{083003}.
We have
$$
\left| \E\, \left\{\phi(\om(t;w))\int_0^t\langle g(s),d
W(s)\rangle\right\}\right|\le \left(P_t|\phi|^2(w)\right)^{1/2}\left(\E\,
\int_0^\infty |g(s)|^2d s\right)^{1/2}
$$
and
$$
\left|\E\,\left\{
 D\phi(\om(t;w))[\rho(t;w,\xi)]\right\} \right|\le \left(P_t|D\phi|^2(w)\right)^{1/2}\left(\E\,
|\rho(t;w,\xi)|^2\right)^{1/2}.
$$
Hence, by \REQ{ET29} and \REQ{ET210}, given  $\kappa_2\in(0,1),$ $\nu>0$ , the corresponding $V(w)=e_{\nu}(w)$ and $\varepsilon \in(0,1)$,  we conclude 
estimate \REQ{E26} with $T_0$, $C(\varepsilon)$,  such that
$$
\left(\E\,
\int_0^\infty |g(s)|^2d s\right)^{1/2}\le C(\varepsilon)V^{\kappa_2}(w)$$
and
$$
 \sup_{|\xi|\le 1}\sup_{t\ge T_0} \left\{\E \,
|\rho(t;w,\xi)|^2\right\}^{1/2}\le \varepsilon V^{\kappa_2}(w).
$$
Therefore Assumption 2 will be verified, provided  that
 we prove Proposition \ref{m-lm}.

\subsubsection*{Proof of Proposition \ref{m-lm}}

We assume first that $q_k\not=0$ for all $k\in\bbZ^2_*$, see \eqref{031002}.
 Let us denote by $\Pi_{\ge N}$ the orthogonal projection
onto span $\left\{\e^{\i kx}\colon |k|\ge N\right\}$ and  let
$\Pi_{< N}:=I-\Pi_{\ge N}$. Write 
$$
B:=-B_0+B_1, \quad B_s(h,\om):=B(h,\om)+B(\om,h),
$$ 
$B_{i,s}(\cdot,\cdot)$ for the  symmetrized forms corresponding to $B_i$, $i=0,1$, 
and
$$
\Delta_N:=\Pi_{\ge N}\Delta,\quad  Q_N:=\Pi_{\ge N}B,\quad \Delta_N^\perp:=\Pi_{<
N}\Delta,\quad  Q_N^\perp:=\Pi_{< N}B.
$$

Let $(\zeta(t))_{t\ge0}$ be the solution of the problem
\begin{equation}\label{ET211}
\begin{aligned}
\frac{\d\zeta}{\d t}(t)&=-\Delta_N\zeta(t)+ \Pi_{\ge N}B_s(\om(t;w),
\zeta(t))  -\frac12\zeta_{N}(t)|\zeta_{N}(t)|^{-1},\\
&\\
\zeta(0)&=\xi,
 \end{aligned}
\end{equation}
for a given  integer $N\ge 1$. Here $\zeta_{N}(t):=\Pi_{< N}\zeta(t)$. We adopt the convention that
\begin{equation}\label{ET212}
\zeta_{N}(t)|\zeta_{N}(t)|^{-1}:=0\qquad
\text{if\quad  $\zeta_{N}(t)=0$.}
\end{equation}
Let
\begin{equation}\label{083005}
g:=Q^{-1}f,
\end{equation}
where
\begin{equation}\label{ET213}
f(t):=-\Delta_N^{\perp}\zeta(t)+ \Pi_{<N}B_s(\om(t),\zeta(t)) +
\frac12\zeta_N(t)|\zeta_N(t)|^{-1}.
\end{equation}
Note that $f$ takes values in a finite dimensional space. Recall that $\rho(t)=\xi(t)- D(t)$.
The proof of the proposition in question shall be achieved at the end of several auxiliary facts formulated as lemmas.
\begin{lemma} \label{L2}
We have
\begin{equation}
\rho(t) =
\zeta(t),\qquad\forall\,t\ge0.
\end{equation}
\end{lemma}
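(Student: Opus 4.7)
The plan is a direct verification that $\rho=\xi-D$ solves the same Cauchy problem as $\zeta$, followed by an ODE uniqueness argument. First, I would differentiate $\rho(t)=\xi(t)-D(t)$ term by term using \eqref{E25-1} for $\xi$ and \eqref{ET28} for $D=\cD_g\om$. The linearity of $\cK$ gives $\cK(\xi)-\cK(D)=\cK(\rho)$, and collecting the four bilinear terms according to $B_s(\om,\cdot)=B(\om,\cdot)+B(\cdot,\om)$ with $B=-B_0+B_1$, one finds
\begin{equation*}
\partial_t\rho = \Delta\rho + B_s(\om(t),\rho) - Qg(t),\qquad \rho(0)=\xi,
\end{equation*}
where the initial datum uses $D(0)=0$. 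This is a (random-coefficient) linear evolution equation in $H$.

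Next I would substitute the specific choice $Qg=f$ from \eqref{083005}--\eqref{ET213} and split the right-hand side through the orthogonal decompositions $I=\Pi_{<N}+\Pi_{\ge N}$ and $\Delta=\Delta_N^\perp+\Delta_N$. The low-mode part of $-Qg$ contributes precisely the prescribed feedback $-\tfrac12\zeta_N|\zeta_N|^{-1}$, while the high-mode part absorbs all the $\Pi_{<N}$-components of $\Delta\rho+B_s(\om,\rho)$ and leaves exactly $-\Delta_N\rho+\Pi_{\ge N}B_s(\om,\rho)$ on the high modes. Consequently the equation for $\rho$ is identical to the Cauchy problem \eqref{ET211} that defines $\zeta$, and both processes start from the same point $\xi$.

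It remains to invoke pathwise uniqueness for \eqref{ET211}. The only delicate point is the non-Lipschitz feedback $-\tfrac12\zeta_N|\zeta_N|^{-1}$ at $\zeta_N=0$. This is handled by observing that, because $\Pi_{<N}\Pi_{\ge N}=0$ and $\Pi_{<N}\Delta_N=0$, the projection $\zeta_N=\Pi_{<N}\zeta$ satisfies the autonomous scalar ODE $\dot\zeta_N=-\tfrac12\zeta_N|\zeta_N|^{-1}$, giving the unique solution $|\zeta_N(t)|=(|\Pi_{<N}\xi|-t/2)_+$, with $\zeta_N\equiv 0$ for $t\ge 2|\Pi_{<N}\xi|$ under the convention \eqref{ET212}. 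Once $\zeta_N$ is pinned down, $\zeta_{\ge N}:=\Pi_{\ge N}\zeta$ obeys a finite-dimensional-perturbation-plus-linear-parabolic equation with coefficients that are in $L^2_{\rm loc}(\bbR_+)$ in $V\to H$ operator norm along $\om(\cdot)$ (by Proposition \ref{propMP}); for such equations a Gronwall estimate applied to $|\rho(t)-\zeta(t)|^2$ forces equality $\bbP$-a.s. The main obstacle is precisely this careful treatment of the singularity of the feedback, but it is cleanly resolved by the orthogonality of $\Pi_{<N}$ to the rest of the dynamics.
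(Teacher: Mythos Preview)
Your first paragraph is correct and is in fact all that is needed: you have shown that $\rho$ satisfies
\[
\partial_t\rho = \Delta\rho + B_s(\om(t),\rho) - Qg(t),\qquad \rho(0)=\xi,
\]
a linear evolution equation with given forcing $Qg$. The paper's proof stops essentially here: it adds $f(t)$ to both sides of \eqref{ET211} to obtain \eqref{ET215}, which says precisely that $\zeta$ solves \emph{this same} linear equation with the same initial datum (since $Qg=f$). Uniqueness for this linear problem, which has no singular term, gives $\rho=\zeta$ immediately.

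Your second and third paragraphs, by contrast, contain a genuine circularity. When you substitute $Qg=f$ using \eqref{ET213}, the forcing $f$ is built from $\zeta$, not from $\rho$. So the low-mode terms you get are $\Delta_N^\perp\zeta-\Pi_{<N}B_s(\om,\zeta)-\tfrac12\zeta_N|\zeta_N|^{-1}$, and these do \emph{not} cancel the $\Pi_{<N}$-components of $\Delta\rho+B_s(\om,\rho)$ unless you already know $\rho=\zeta$. Hence the equation you obtain for $\rho$ is \emph{not} \eqref{ET211} with $\rho$ in place of $\zeta$; it is a linear equation in $\rho$ with an inhomogeneity depending on $\zeta$. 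Your uniqueness discussion for \eqref{ET211}, including the careful treatment of the non-Lipschitz feedback, is therefore addressing the wrong problem. The right move is the one the paper makes: compare $\rho$ and $\zeta$ as two solutions of the \emph{inhomogeneous linear} problem from your first display, where the singularity never appears and a standard Gronwall argument suffices.
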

\begin{proof}
 Adding $f(t)$ to the both sides of \REQ{ET211} we obtain
\begin{equation}\label{ET215}
\begin{aligned}
 \frac{\d \zeta(t)}{\d t}(t) +f(t)&=-\Delta\zeta(t)+ B_s(\om(t),\zeta(t)),\qquad
 \zeta(0)&=\xi.
 \end{aligned}
\end{equation}
Recall that $\xi(t)$ and $D(t)$ satisfy equations
$\REQ{E25-1}$ and \REQ{ET28}, respectively. Hence $\rho(t)$
satisfies
$$
\begin{aligned}
\frac{\d \rho(t)}{\d t} &= -\Delta\rho_t + B_s(\om(t),\rho(t))  - Qg(t),\\
\rho(0) &= \xi.
\end{aligned}
$$
Since, $f(t)=Qg(t)$ we conclude that $\rho(t)$ and $\zeta(t)$ solve the same linear evolution
 equation with the same initial value. Thus the assertion of the lemma follows.
\end{proof}

\begin{lemma}\label{L3}
For each $N\ge1$ we have 
\begin{equation}
\label{011312}
\zeta_N(t)=0,\quad \forall\,t\ge 2,
\end{equation}
 and 
 \begin{equation}
 \label{021312}
 |\zeta_N(t)|\le 1,\quad \forall \,t\ge0.
 \end{equation}
\end{lemma}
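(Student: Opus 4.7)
The plan is to extract a closed scalar equation for $|\zeta_N(t)|$ by projecting \eqref{ET211} onto the low-frequency subspace. Applying $\Pi_{<N}$ to both sides and using the orthogonality relations $\Pi_{<N}\Delta_N = \Pi_{<N}\Pi_{\ge N}\Delta = 0$ and $\Pi_{<N}\Pi_{\ge N}B_s(\omega(t),\zeta(t)) = 0$, the first two terms on the right-hand side are killed. Since the damping term $-\tfrac12\zeta_N(t)|\zeta_N(t)|^{-1}$ already lies in the range of $\Pi_{<N}$, I obtain the decoupled evolution
\begin{equation*}
\frac{d\zeta_N}{dt}(t) \;=\; -\tfrac{1}{2}\,\frac{\zeta_N(t)}{|\zeta_N(t)|}, \qquad \zeta_N(0)=\Pi_{<N}\xi,
\end{equation*}
with the convention \eqref{ET212} in force whenever $\zeta_N(t)=0$.

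Next I take the inner product of the above equation with $\zeta_N(t)$. On any time interval where $\zeta_N(t)\neq 0$ this yields $\tfrac12\frac{d}{dt}|\zeta_N(t)|^2 = -\tfrac12|\zeta_N(t)|$, hence
\begin{equation*}
\frac{d}{dt}|\zeta_N(t)| \;=\; -\tfrac12.
\end{equation*}
Integrating from $0$ up to the first hitting time $t_*:=\inf\{t\ge 0\colon \zeta_N(t)=0\}$, I get $|\zeta_N(t)| = |\zeta_N(0)| - t/2$ for $t\in[0,t_*]$, and in particular $t_* = 2|\zeta_N(0)|$. For $t\ge t_*$ the convention \eqref{ET212} forces the right-hand side of the scalar ODE to vanish, so $\zeta_N(t)\equiv 0$ thereafter.

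Since $|\zeta_N(0)| = |\Pi_{<N}\xi| \le |\xi|\le 1$ (recall that Proposition \ref{m-lm} restricts attention to $|\xi|\le 1$), the explicit formula gives $|\zeta_N(t)|\le |\zeta_N(0)|\le 1$ for every $t\ge 0$, which is \eqref{021312}. Moreover $t_*\le 2$, and therefore $\zeta_N(t)=0$ for all $t\ge 2$, which is \eqref{011312}. The only delicate point is the presence of the discontinuous feedback $\zeta_N/|\zeta_N|$, but the convention \eqref{ET212} precisely patches the equation at the hitting time, and uniqueness of the trajectory after $t_*$ is immediate since the trivial solution $\zeta_N\equiv 0$ clearly satisfies the equation in the sense of the convention. \qed
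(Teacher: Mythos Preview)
Your proof is correct and follows essentially the same approach as the paper: project \eqref{ET211} onto the low modes to decouple $\zeta_N$, derive the scalar ODE for $|\zeta_N(t)|$ (the paper works with $z(t)=|\zeta_N(t)|^2$ instead, which is an immaterial difference), and read off the hitting time $t_*=2|\zeta_N(0)|\le 2$ from the explicit solution. Your treatment of the initial condition $\zeta_N(0)=\Pi_{<N}\xi$ and of the convention \eqref{ET212} after $t_*$ is in fact a bit more careful than the paper's.
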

\begin{proof} By Lemma \ref{L2} we have $\rho(\cdot)=\zeta(\cdot)$. Applying $\Pi_{<N}$ to both
sides of \REQ{ET211} we obtain
\begin{equation}\label{ET216}
\begin{aligned}
 \frac{\d~}{\d t}\zeta_N(t)&=-\frac12 |\zeta_N(t)|^{-1}\zeta_N(t),\\
 \zeta(0)&=\xi.
 \end{aligned}
\end{equation}
Multiplying both sides of \REQ{ET216} by $\zeta_N(t)$ we
obtain that $z(t):=|\zeta_N(t)|^2$ satisfies
\begin{equation}\label{ode}
\frac{\d z}{\d t}(t)=-\frac12 \sqrt{z(t)},\quad z(0)=|\xi|^2.
\end{equation}
Since $0\le z(0)\le1$ the desired conclusion
holds from elementary properties of the solution of o.d.e. \eqref{ode}.
\end{proof}

Let $\zeta^{(N)}(t):=\Pi_{\ge N}\zeta(t)$. We have
\begin{equation}\label{ET211N}
\frac{\d}{\d t}|\zeta^{(N)}(t)|^2=-2\|\zeta^{(N)}(t)\|^2+ \langle\Pi_{\ge N}B_s(\om(t;w),
\zeta(t)),\zeta^{(N)}(t)\rangle,\quad
\zeta^{(N)}(0)=\Pi_{\ge N}\xi.
\end{equation}
We shall use the following estimates, see Proposition 6.1 of \cite{foias}. There exists $C>0$ such that
\begin{equation}
\label{cofo88}
|\langle B_0(h,\om_1),\om_2)\rangle|\le C|h|_{s_1-1}|\om_1|_{1+s_2}|\om_2|_{s_3} ,\quad\forall\, h\in H_{s_1-1},\om_1\in H_{1+s_2},\om_2\in H_{s_3},
\end{equation}
for all $s_1,s_2,s_3\ge0$ such that $s_1+s_2+s_3>1$. When, in addition 
 $s_1>1$ we have
\begin{equation}
\label{cofo88a}
|\langle B_1(h,\om_1),\om_2)\rangle|\le C|h|_{s_1-1}|\om_1|_{1+s_2}|\om_2|_{s_3} ,\quad\forall\, h\in H_{s_1-1},\om_1\in H_{1+s_2},\om_2\in H_{s_3}.
\end{equation}
With the help of the above inequalities   we can bound the symmetric part of the bilinear form $B(\cdot,\cdot)$ as follows.
\begin{lemma}
\label{lm011412}
For any $\varepsilon\in(0,1)$ and $\nu>0$  there exists a constant $C>0$ such that 
\begin{eqnarray}
\label{021412}
&&|\langle B_s(\om(t;w),
\zeta(t)),\zeta^{(N)}(t)\rangle|\\
&&
\le (\varepsilon N+C+\frac{\nu}{2}\|\om(t;w)\|^2)|\zeta^{(N)}(t)|^2+\frac{1}{4}\|\zeta^{(N)}(t)\|^2+C\|\om(t;w)\|^2|\zeta_{N}(t)|^2.\nonumber
\end{eqnarray}
\end{lemma}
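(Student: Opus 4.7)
The plan is to decompose $\zeta = \zeta_N + \zeta^{(N)}$ and split
$\langle B_s(\om,\zeta),\zeta^{(N)}\rangle = \langle B_s(\om,\zeta^{(N)}),\zeta^{(N)}\rangle + \langle B_s(\om,\zeta_N),\zeta^{(N)}\rangle$,
treating the two pieces separately. For the first, all-high-frequency term, the key observation is that writing $B_s = B(\om,\zeta^{(N)})+B(\zeta^{(N)},\om)$, the ``reflexive'' part $\langle B(\om,\zeta^{(N)}),\zeta^{(N)}\rangle$ vanishes entirely: its $B_0$ component equals $-\tfrac12\int \mathcal{K}(\om)\cdot\nabla(\zeta^{(N)})^2\,dx=0$ by the divergence-free property of $\mathcal{K}(\om)$, and its $B_1$ component vanishes similarly since $\mathcal{K}(\om)(0)$ is a constant vector and $\int\nabla(\zeta^{(N)})^2\,dx=0$ on the torus. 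Only $\langle B(\zeta^{(N)},\om),\zeta^{(N)}\rangle$ survives, which after a further divergence-free integration by parts on the $B_0$ piece becomes $\langle B_0(\zeta^{(N)},\zeta^{(N)}),\om\rangle + \langle B_1(\zeta^{(N)},\om),\zeta^{(N)}\rangle$--- both of which enjoy three high-frequency factors.

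To estimate the $B_0$ contribution I would apply \eqref{cofo88} with exponents $s_1=1-\delta$, $s_2=0$, $s_3=1$ for a small $\delta\in(0,1)$, obtaining $|\langle B_0(\zeta^{(N)},\zeta^{(N)}),\om\rangle| \le C|\zeta^{(N)}|_{-\delta}\|\zeta^{(N)}\|\|\om\|$, and then exploit the high-frequency spectral support via $|\zeta^{(N)}|_{-\delta} \le N^{-\delta}|\zeta^{(N)}|$ to reach $CN^{-\delta}|\zeta^{(N)}|\|\zeta^{(N)}\|\|\om\|$. For the $B_1$ part I would combine the pointwise bound $|\mathcal{K}(\zeta^{(N)})(0)|\le C|\zeta^{(N)}|_s \le CN^{s-1}\|\zeta^{(N)}\|$ (from the embedding \eqref{embed} and \eqref{010512a}) with the integration-by-parts bound $|\int \om\nabla \zeta^{(N)}\,dx| \le |\zeta^{(N)}|\|\om\| \le N^{-1}\|\zeta^{(N)}\|\|\om\|$, producing $CN^{s-2}\|\zeta^{(N)}\|^2\|\om\|$ that can be absorbed into the $\tfrac14\|\zeta^{(N)}\|^2$ and $C|\zeta^{(N)}|^2$ allowances.

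The second, cross-frequency, term $\langle B_s(\om,\zeta_N),\zeta^{(N)}\rangle$ is handled by flipping $\langle B_0(\om,\zeta_N),\zeta^{(N)}\rangle = -\langle B_0(\om,\zeta^{(N)}),\zeta_N\rangle$ through the divergence-free integration by parts, then applying H\"older together with $\|\mathcal{K}(\om)\|_\infty \le C\|\om\|$ (from \eqref{embed} and \eqref{010512a}) to get $C\|\om\|\|\zeta^{(N)}\||\zeta_N|$; Young's inequality converts this into $\tfrac18\|\zeta^{(N)}\|^2 + C\|\om\|^2|\zeta_N|^2$. The companion $B_1$ contributions and the $\langle B(\zeta_N,\om),\zeta^{(N)}\rangle$ terms are treated in parallel, using the high-frequency gain $|\zeta^{(N)}|\le N^{-1}\|\zeta^{(N)}\|$ to compensate for the $N^{s}$ factor that comes from the pointwise evaluation $|\mathcal{K}(\zeta_N)(0)| \le C|\zeta_N|_s \le CN^s|\zeta_N|$.

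I expect the main obstacle to be the delicate bookkeeping of Young's-inequality constants: applying the standard inequality $2ab\le\nu a^2 + b^2/\nu$ with $a=|\zeta^{(N)}|\|\om\|$ and $b=CN^{-\delta}\|\zeta^{(N)}\|$ to the main bound produces $\tfrac{\nu}{2}|\zeta^{(N)}|^2\|\om\|^2 + \tfrac{C^2 N^{-2\delta}}{2\nu}\|\zeta^{(N)}\|^2$, and it is precisely the high-frequency gain $N^{-2\delta}$ that makes the second coefficient compatible with the $\tfrac14$ ceiling. Without this gain Young's inequality forces a trade-off of the form $(\text{coef of }|\zeta^{(N)}|^2\|\om\|^2)\cdot(\text{coef of }\|\zeta^{(N)}\|^2)\ge\tfrac14$, which is incompatible with small $\nu$. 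The residual pieces that cannot be matched to the $\tfrac{\nu}{2}\|\om\|^2$ or $\tfrac14\|\zeta^{(N)}\|^2$ coefficients are collected into $\varepsilon N|\zeta^{(N)}|^2$ (absorbing stray $N^{s}$ powers via the equivalence $\|\zeta^{(N)}\|^2\ge N^2|\zeta^{(N)}|^2$) and into $C|\zeta^{(N)}|^2$ (absorbing the bounded-$N$ contributions and lower-order remainders).
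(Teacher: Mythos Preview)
Your decomposition $\zeta=\zeta_N+\zeta^{(N)}$ and the observation that $\langle B(\om,\zeta^{(N)}),\zeta^{(N)}\rangle=0$ are correct and match the paper's structure. However, your treatment of the surviving high--high terms $\langle B_0(\zeta^{(N)},\om),\zeta^{(N)}\rangle$ and $\langle B_1(\zeta^{(N)},\om),\zeta^{(N)}\rangle$ has a real gap.

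For the $B_0$ piece you arrive at $CN^{-\delta}|\zeta^{(N)}|\,\|\zeta^{(N)}\|\,\|\om\|$ and then apply Young to get $\tfrac{\nu}{2}\|\om\|^2|\zeta^{(N)}|^2+\tfrac{C^2N^{-2\delta}}{2\nu}\|\zeta^{(N)}\|^2$. The second coefficient is $\le\tfrac14$ only when $N^{2\delta}\ge 2C^2/\nu$, i.e.\ only for $N$ large depending on $\nu$. The lemma, however, must hold with a constant $C$ \emph{independent of $N$}: in the subsequent argument $N_0$ is chosen in \eqref{N0} \emph{after} $C$, so a bound valid only for large $N$ is circular. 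Your remark that ``without this gain Young's inequality forces a trade-off incompatible with small $\nu$'' is also not correct: there is a third pocket, namely $C|\zeta^{(N)}|^2$ with $C$ allowed to depend on $\nu$, and the paper exploits it. Concretely, the paper estimates $|\langle B_0(\zeta,\om),\zeta^{(N)}\rangle|\le C\|\om\|\,|\zeta^{(N)}|_{1/2}\,|\zeta|$ via \eqref{cofo88}, then uses Young followed by Gagliardo--Nirenberg $|\zeta^{(N)}|_{1/2}^2\le C|\zeta^{(N)}|\,\|\zeta^{(N)}\|$ and a second Young to obtain
\[
\tfrac{\nu}{8}\|\om\|^2|\zeta|^2+\tfrac{1}{16}\|\zeta^{(N)}\|^2+C(\nu)|\zeta^{(N)}|^2,
\]
with $C(\nu)$ independent of $N$. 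No high-frequency gain is needed.

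For the $B_1$ piece your claimed bound $CN^{s-2}\|\zeta^{(N)}\|^2\|\om\|$ \emph{cannot} be absorbed into $\tfrac14\|\zeta^{(N)}\|^2+C|\zeta^{(N)}|^2$: it carries an unbounded factor $\|\om\|$. If instead you keep one factor as $|\zeta^{(N)}|$ you get $CN^{s-1}\|\zeta^{(N)}\|\,\|\om\|\,|\zeta^{(N)}|$ and are back to the same $N$-uniformity problem as above. The paper's route is again via the $H^{1/2}$ intermediate norm: $|\langle B_1(\zeta,\om),\zeta^{(N)}\rangle|\le C\|\om\|\,|\zeta^{(N)}|\,|\zeta|_{1/2}$ from \eqref{cofo88a}, then split $|\zeta|_{1/2}^2\le N|\zeta_N|^2+|\zeta^{(N)}|_{1/2}^2$; the first part produces the $\varepsilon N|\zeta^{(N)}|^2+C\|\om\|^2|\zeta_N|^2$ terms, and the second is handled by Gagliardo--Nirenberg exactly as above. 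Replace your $N^{-\delta}$ device by this interpolation argument and the proof goes through uniformly in $N$.
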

\proof From \eqref{cofo88} we have
\begin{eqnarray*}
&&|\langle B_0(\om(t;w),
\zeta(t)),\zeta^{(N)}(t)\rangle|=|\langle B_0(\om(t;w),
\zeta^{(N)}(t)), \zeta(t)\rangle|\\
&&
\le C|\om(t;w)|_{1/2}\|\zeta^{(N)}(t)\||\zeta(t)|\le \frac{1}{16} \|\zeta^{(N)}(t)\|^2+C_1|\om(t;w)|_{1/2}^2|\zeta(t)|^2.
\end{eqnarray*}
Using the  Gagliardo--Nirenberg  and Young's inequalities we get
$$
C_1|\om(t;w)|_{1/2}^2\le \frac{\nu}{8} \|\om(t;w)\|^2 +C_2|\om(t;w)|^2
$$
for some $C_2>0$. This yields
$$
|\langle B_0(\om(t;w),
\zeta(t)),\zeta^{(N)}(t)\rangle|
\le  \frac{1}{16} \|\zeta^{(N)}(t)\|^2+ \frac{\nu}{8} \|\om(t;w)\|^2 +C_2|\om(t;w)|^2.
$$
Likewise,
 \begin{eqnarray*}
|\langle B_0(
\zeta(t),\om(t;w)),\zeta^{(N)}(t)\rangle|&&\le C\|\om(t;w)\||\zeta^{(N)}(t)|_{1/2}|\zeta(t)|\\
&&
\le \frac{\nu}{8} \|\om(t;w)\|^2|\zeta(t)|^2+C_1|\zeta^{(N)}(t)|_{1/2}^2\\
&&
\le \frac{\nu}{8} \|\om(t;w)\|^2|\zeta(t)|^2+ \frac{1}{16}\|\zeta^{(N)}(t)\|^2+C_2|\zeta^{(N)}(t)|^2.
\end{eqnarray*}
On the other hand
$$
|\langle B_1(\om(t;w),
\zeta(t)),\zeta^{(N)}(t)\rangle|=|\langle B_1(\om(t;w),
\zeta^{(N)}(t)), \zeta^{(N)}(t)\rangle|=0
$$
and
\begin{equation}
\label{051412}
|\langle B_1(
\zeta(t),\om(t;w)),\zeta^{(N)}(t)\rangle|\le C\|\om(t;w)\||\zeta^{(N)}(t)||\zeta(t)|_{1/2}.
\end{equation}
Note that 
$$
|\zeta(t)|_{1/2}^2=|\zeta_N(t)|_{1/2}^2+|\zeta^{(N)}(t)|_{1/2}^2\le N|\zeta_N(t)|^2+|\zeta^{(N)}(t)|_{1/2}^2. 
$$
With this inequality we can estimate of the right hand side of \eqref{051412} by
$$
CN^{1/2}\|\om(t;w)\||\zeta^{(N)}(t)||\zeta_N(t)|+C\|\om(t;w)\||\zeta^{(N)}(t)||\zeta^{(N)}(t)|_{1/2}.
$$
The first term can be estimated by
$$
C_1\|\om(t;w)\|^2|\zeta_N(t)|^2+\varepsilon N|\zeta^{(N)}(t)|^2.
$$
The  second term is less than, or equal to
\begin{eqnarray*}
&&\frac{\nu}{8}\|\om(t;w)\|^2|\zeta^{(N)}(t)|^2+C_1|\zeta^{(N)}(t)|_{1/2}^2\\
&&\quad 
\le \frac{\nu}{8}\|\om(t;w)\|^2|\zeta^{(N)}(t)|^2+ \frac{1}{16}\|\zeta^{(N)}(t)\|^2 +C_2|\zeta^{(N)}(t)|^2 .
\end{eqnarray*}
Summarizing the above consideration we have shown \eqref{021412}.
\qed

\subsubsection*{Proof of \eqref{ET210}}

Performing the scalar product in $H$ of both sides of \eqref{ET211} against $\zeta^{(N)}(t)$ and using Lemma \ref{L2} we conclude that
\begin{equation}\label{ET211b}
\begin{aligned}
\frac{\d}{\d t}|\zeta^{(N)}(t)|^2&\le-2\|\zeta^{(N)}(t)\|^2+2 (\epsilon N+C+\frac{\nu}{2}\|\om(t;w)\|^2)|\zeta^{(N)}(t)|^2\\
&\quad 
+\frac12\|\zeta^{(N)}(t)\|^2+C\|\om(t;w)\|^2|\zeta_{N}(t)|^2\\
&\le -\frac12\|\zeta^{(N)}(t)\|^2+\left[-N^2+2 (\epsilon N+C)\right]|\zeta^{(N)}(t)|^2\\
&\quad 
+\nu\|\om(t;w)\|^2|\zeta^{(N)}(t)|^2+C\|\om(t;w)\|^2|\zeta_{N}(t)|^2, \\
\zeta(0)&=\xi.
 \end{aligned}
\end{equation}
Suppose that $N_0$ 
is such that 
\begin{equation}
\label{N0}
N^2_0-2 (\varepsilon N_0+C)\ge \max\{N^2_0/2,\gamma+{\rm tr}\,Q^2\}.
\end{equation}
 Then, solve \eqref{ET211} and determine $g(t)$ via \eqref{083005}. According to Lemma \ref{L2} the difference $\rho(t)=\xi(t)-D(t)$
equals $\zeta(t)$. From \eqref{ET211b} we conclude via Gronwall's inequality that
\begin{equation}\label{ET211c}
\begin{aligned}
&|\zeta^{(N_0)}(t)|^2\le|\zeta^{(N_0)}(0)|^2\exp\left\{-\gamma t-{\rm tr}\,Q^2t+\nu\int_0^t\|\om(s;w)\|^2\d s\right\}\\
&
+C\int_0^t\exp\left\{-\gamma(t-s)-{\rm tr}\,Q^2(t-s)+\nu\int_s^t\|\om(r;w)\|^2\d r\right\}\|\om(s;w)\|^2|\zeta_{N_0}(s)|^2\d s, \\
&\zeta(0)=\xi.
 \end{aligned}
\end{equation}
From Lemma \ref{L3} the second term on the right hand side of  \eqref{ET211c} can be estimated by
\begin{eqnarray*}
&&
C\exp\left\{-\gamma(t-2)-{\rm tr}\,Q^2(t-2)\right\}\exp\left\{\nu\int_0^t\|\om(r;w)\|^2\d r\right\}\int_0^2\|\om(s;w)\|^2\d s\\
&&\quad 
\le C_1\exp\left\{-\gamma(t-2)-{\rm tr}\,Q^2(t-2)\right\}\exp\left\{\nu'\int_0^{t\vee 2}\|\om(r;w)\|^2\d r\right\},
\end{eqnarray*}
provided that $\nu'>\nu$. Therefore
\begin{equation}\label{ET211d1}
\begin{aligned}
&|\zeta^{(N_0)}(t)|^2\le|\zeta^{(N_0)}(0)|^2\exp\left\{-\gamma t-{\rm tr}\,Q^2t+\nu\int_0^{t\vee 2}\|\om(s;w)\|^2\d s\right\}\\
&\quad 
+ C_1\exp\left\{-\gamma(t-2)-{\rm tr}\,Q^2(t-2)\right\}\exp\left\{\nu'\int_0^{t\vee 2}\|\om(r;w)\|^2\d r\right\},\quad\bbP\,{\rm a.s.} \\
&\zeta(0)=\xi
 \end{aligned}
\end{equation}
for all $t>0$.
Estimate \eqref{ET210}, with $e_{\nu'}(w)$ appearing on the right hand side, is then a
consequence of the above bound, Lemma \ref{L2} and estimate \eqref{010712-00} if only $0<\nu<\nu'<\nu_0$.

\subsubsection*{Proof of \eqref{ET29}}
To prove the estimate observe that from \eqref{083005}, \eqref{ET213} and \eqref{011312} it follows that
$$
|g(t)|= |Q^{-1}\Pi_{<{N_0}}B_s(\om(t),\zeta(t))|\le |g_0(t)|+|g_1(t)|,\quad\forall\,t\ge 0,
$$
with
$$
g_i(t):= Q^{-1}\Pi_{<{N_0}}B_{i,s}(\om(t),\zeta(t)),\quad i=0,1.
$$

\subsubsection{Estimates of $|g_1(t)|$}
Note that for $t\ge2$,
$$
|g_1(t)|= |Q^{-1}\Pi_{<{N_0}}B_1(\zeta^{(N_0)}(t),\om(t))|\le C\|\zeta^{(N_0)}(t)\|\|\Pi_{<{N_0}}\om(t)\le CN_0\|\zeta^{(N_0)}(t)\||\om(t)|.
$$
The last inequality holds because 
\begin{equation}
\label{012402}
\|\Pi_{<{N_0}}w\|\le N_0|w|,\quad\forall\, w\in H.
\end{equation}
Therefore
$$
\bbE\int_2^T|g_1(t)|^2d t\le C J(T),
$$
with
$$
J(T):=\bbE\int_2^T\|\zeta^{(N_0)}(t)\|^2|\om(t)|^2d t.
$$
We use \eqref{ET211b} to get 
\begin{equation}\label{ET211d}
\begin{aligned}
&J(T)\le -2\bbE\int_2^T\frac{d}{d t}|\zeta^{(N_0)}(t)|^2|\om(t)|^2d t
+2\nu\bbE\int_2^T|\om(t)|^2\|\om(t;w)\|^2|\zeta^{(N_0)}(t)|^2d t.
 \end{aligned}
\end{equation}
Denote the terms appearing on the right hand side as $J_i(T)$, $i=1,2$, respectively.
We have
\begin{eqnarray*}
&&J_1(T)=2\bbE\int_2^T|\zeta^{(N_0)}(t)|^2d|\om(t)|^2-2\bbE|\zeta^{(N_0)}(t)|^2|\om(t)|^2\Big|_2^T.
\end{eqnarray*}
The boundary term appearing on the right hand side is easily estimated by $Ce_{\nu}(w)$, by virtue of \eqref{ET211d1} and \eqref{010712}. As for the integral term,
using \eqref{051512} and the already proven \eqref{ET210}, we can estimate it by
\begin{eqnarray*}
&&2\bbE\int_2^T|\zeta^{(N_0)}(t)|^2\left({\rm tr}\, Q^2-2\|\om(t)\|^2\right)d t\le Ce_{\nu}(w)\int_2^T\e^{-\gamma t}d t\le C_1e_{\nu}(w).
\end{eqnarray*}
Next, we can write
\begin{eqnarray*}
&&J_2(T)\le J_{21}(T)+J_{22}(T),
\end{eqnarray*}
where
\begin{eqnarray*}
&&J_{21}(T):=2\nu|\zeta^{(N_0)}(0)|^2\bbE\int_2^T|\om(t)|^2\|\om(t;w)\|^2\e^{-(\gamma +{\rm tr}\,Q^2)t}\exp\left\{\nu\int_0^{t}\|\om(s;w)\|^2d s\right\}d t\\
&&
J_{22}(T):=2\nu C_1\bbE\int_2^T|\om(t)|^2\|\om(t;w)\|^2\e^{-(\gamma +{\rm tr}\,Q^2)t}\exp\left\{\nu'\int_0^{t}\|\om(r;w)\|^2d r\right\}d t.
\end{eqnarray*}
Observe that
\begin{eqnarray*}
&&J_{21}(T)\le 2\bbE\int_2^T|\om(t)|^2\e^{-(\gamma +{\rm tr}\,Q^2)t}\frac{d}{d t}\exp\left\{\nu\int_0^{t}\|\om(s;w)\|^2d s\right\}d t=\sum_{i=1}^3J_{21i}(T),
\end{eqnarray*}
where
\begin{eqnarray*}
&&J_{211}(T) :=2\e^{-(\gamma +{\rm tr}\,Q^2)t}\bbE\left\{|\om(t)|^2\exp\left\{\nu\int_0^{t}\|\om(s;w)\|^2d s\right\}\right\}\Big|_2^T,\\
&&J_{212}(T) :=2(\gamma +{\rm tr}\,Q^2)\int_2^T\e^{-(\gamma +{\rm tr}\,Q^2)t}\bbE\left\{|\om(t)|^2\exp\left\{\nu\int_0^{t}\|\om(s;w)\|^2d s\right\}\right\}dt,\\
&&J_{213}(T):=-2\int_2^T \e^{-(\gamma +{\rm tr}\,Q^2)t}\bbE\left\{\exp\left\{\nu\int_0^{t}\|\om(s;w)\|^2d s\right\}d |\om(t)|^2\right\}.
\end{eqnarray*}
We have
$$
J_{211}(T)\le C\e^{-(\gamma +{\rm tr}\,Q^2)T}\bbE \exp\left\{\nu|\om(T;w)|^2+\nu\int_0^{T}\|\om(s;w)\|^2d s\right\}\le C_1\e^{-\gamma T}e_{\nu}(w).
$$
The last inequality follows from \eqref{010712-00}. On the other hand, by the same token
\begin{eqnarray*}
J_{212}(T)&&\le C\bbE\int_2^T\e^{-(\gamma +{\rm tr}\,Q^2)t}\exp\left\{\nu|\om(t)|^2+\nu\int_0^{t}\|\om(s;w)\|^2d s\right\}d t\\
&&
\le C_2e_{\nu}(w)\int_2^T\e^{-\gamma t}d t\le C_3e_{\nu}(w),\quad\forall\,T\ge 2
\end{eqnarray*}
and finally
\begin{eqnarray*}
J_{213}(T)&&\le C\bbE\int_2^T\|\om(t)\|^2 \e^{-(\gamma +{\rm tr}\,Q^2)t}\exp\left\{\nu\int_0^{t}\|\om(s;w)\|^2d s\right\}d t\\
&&
\le \frac{C}{\nu}\bbE\int_2^T \e^{-(\gamma +{\rm tr}\,Q^2)t}\frac{d }{d t}\exp\left\{\nu\int_0^{t}\|\om(s;w)\|^2d s\right\}d t.
\end{eqnarray*}
Repeating the integration by parts  argument used before we conclude that also
$$
J_{213}(T)\le Ce_{\nu}(w),\quad\forall\,T\ge2.
$$
Summarizing, we have shown that
$
J_{21}(T)\le Ce_{\nu}(w),
$ for $T\ge2$.
In the same way we can argue that
$
J_{22}(T)\le Ce_{\nu}(w),
$
thus also
$$
J_{2}(T)\le Ce_{\nu}(w),\quad\forall\,T\ge2.
$$

Finally, for $t\in[0,2]$ we  use \eqref{012402} to obtain that
\begin{eqnarray*}
|g_1(t)|&&= |Q^{-1}\Pi_{<{N_0}}B_{1,s}(\zeta(t),\om(t))|\\
&&
\le C\left(\|\zeta^{(N_0)}(t)\|+N_0|\zeta_{N_0}(t)|\right)|\om(t)|+CN_0\|\om(t)\||\zeta_{N_0}(t)|.
\end{eqnarray*}
We have therefore
$$
\int_0^2\bbE |g_1(t)|^2d t\le J_{31}+J_{32},
$$
with
\begin{eqnarray*}
&&
J_{31}:=C\int_0^2\bbE\|\zeta^{(N_0)}(t)\|^2|\om(t)|^2d t,\\
&&
J_{32}:=C\int_0^2\bbE(|\om(t)|^2+\|\om(t)\|^2)d t.
\end{eqnarray*}
It is easy to see from \eqref{051512} that
$
J_{32}\le Ce_{\nu}(w).
$
Term $J_{31}$ satisfies an estimate analogous to \eqref{ET211d},
we can write therefore that
$$
J_{31}\le J_{311}+J_{312},
$$
where $J_{311}$, $J_{312}$ are defined as the corresponding expression on the right hand side of \eqref{ET211d}  
 with the limits of the integrals appearing on the right hand side  replaced by $0$ and $2$ correspondingly. In the case of $J_{311}$ we  proceed in the same way for $J_1(T)$ and end up with the bound
$
J_{311}\le Ce_{\nu}(w).
$
On the other hand, from \eqref{ET211c} we get
\begin{eqnarray}
\label{011612}
&&
J_{312}\le C\bbE\int_0^2\|\om(t;w)\|^2|\om(t)|^2\exp\left\{\nu\int_0^t\|\om(s)\|^2d s\right\}d t\\
&&\qquad 
+C\bbE\int_0^2\int_0^t\|\om(s)\|^2|\om(t)|^2\|\om(t)\|^2\exp\left\{\nu\int_s^t\|\om(r)\|^2d r\right\}d td s.\nonumber
\end{eqnarray}
Repeating the argument with the integration by parts we have used in the foregoing we conclude that the first term on the right hand side is estimated by
$e_{\nu}(w).$ The second term equals
\begin{eqnarray*}
&&
-\frac{C}{\nu}\bbE\int_0^2|\om(t)|^2\|\om(t)\|^2d t\int_0^t\frac{d }{d s}\exp\left\{\nu\int_s^t\|\om(r)\|^2d r\right\}d s\\
&&\qquad 
\le \frac{C}{\nu}\bbE\int_0^2|\om(t)|^2\|\om(t)\|^2\exp\left\{\nu\int_0^t\|\om(r)\|^2d r\right\} d t.
\end{eqnarray*}
From here on we estimate as in the foregoing and conclude that this term is less than $e_{\nu}(w).$
Summarizing, we have shown  that
$$
J(T)\le Ce_{\nu}(w),\quad\forall\,T\ge2.
$$

\subsubsection{Estimates of $g_0(t)$}
We start with the following.
\begin{lemma}
\label{form1} (cf. Lemma A.1 of  \cite{EMS01})
For any $N$ there exists $C_N$ such that
$$
 |\Pi_{<N}B_{0}(h,\om)|\le C_N|h|_{-1}|\om|,\quad\forall\,h\in H_{-1},\,\om\in H.
$$
\end{lemma}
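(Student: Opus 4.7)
The plan is to argue by duality against test functions in the finite dimensional range of $\Pi_{<N}$. Fix $\varphi\in H$ with $\varphi=\Pi_{<N}\varphi$ and $|\varphi|\le 1$. By self-adjointness of $\Pi_{<N}$,
\[
\langle\Pi_{<N}B_0(h,\om),\varphi\rangle=\langle \vec u\cdot\nabla\om,\varphi\rangle,\qquad \vec u:=\cK(h).
\]
Since $\nabla\cdot\vec u=0$, integration by parts gives $\langle \vec u\cdot\nabla\om,\varphi\rangle=-\langle \om,\vec u\cdot\nabla\varphi\rangle$, so it suffices to estimate $|\langle\om,\vec u\cdot\nabla\varphi\rangle|$ by $C_N|h|_{-1}|\om|$ uniformly in $\varphi$ as above; taking the supremum over $\varphi$ then yields the lemma.

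By H\"older's inequality,
\[
|\langle\om,\vec u\cdot\nabla\varphi\rangle|\le |\om|\,\|\vec u\|_{L^2(\bbT;\bbR^2)}\,\|\nabla\varphi\|_{\infty}.
\]
The first factor involving $\vec u$ is handled by the definition \eqref{010512} of $\cK$: a direct Fourier computation shows $\|\cK(h)\|_{L^2(\bbT;\bbR^2)}=|h|_{-1}$, or equivalently this is the $r=-1$ case of \eqref{010512a}. This controls $\|\vec u\|_{L^2}$ by $|h|_{-1}$.

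The key step is a Bernstein-type inequality for the low-mode projection: for any $\varphi\in \Pi_{<N}H$,
\[
\|\nabla\varphi\|_{\infty}\le \sum_{0<|k|<N}2\pi|k||\hat\varphi_k|\le C N^2|\varphi|,
\]
using Cauchy--Schwarz and the fact that there are only $O(N^2)$ modes. Combining the three estimates gives $|\langle\Pi_{<N}B_0(h,\om),\varphi\rangle|\le C_N|h|_{-1}|\om|$ for all admissible $\varphi$, and the conclusion follows. I do not expect a genuine obstacle here; the only point to watch is that the constant $C_N$ is allowed to depend on $N$, which is exactly what makes the trade of regularity for Bernstein constants admissible.
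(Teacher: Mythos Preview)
Your proof is correct and follows essentially the same idea as the paper's: both use divergence-freeness to shift the derivative, the identity $\|\cK(h)\|_{L^2}=|h|_{-1}$, and then exploit the finite-dimensionality of $\Pi_{<N}H$ to trade regularity for powers of $N$. The paper carries this out by a direct Fourier-side calculation (bounding the convolution $\widehat{\cK(h)}*\hat\om$ on modes $|k|<N$ and extracting a factor of $N$ from the divergence), whereas you phrase the same estimate via duality and a Bernstein inequality on $\nabla\varphi$; both routes give $C_N=O(N^2)$.
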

\proof
Suppose that
$$
h=\sum_{k\in\bbZ^2_*}\hat h(k)e_k,\quad \om=\sum_{k\in\bbZ^2_*}\hat \om(k)e_k.
$$
We can write that
\begin{eqnarray*}
|\Pi_{<N}B_0(h,\om)|^2&&=\int_{\bbT}|\Pi_{<N}\nabla\cdot({\cal K}(h)(x)\om(x))|^2d x\\
&&
\le N^2\sum_{0<|k|<N}\left|\sum_{\ell\in\bbZ^2_*}\widehat{{\cal K}(h)}(\ell)\hat\om(k-\ell)\right|^2\le N^4|h|^2_{-1}|\om|^2.\qed
\end{eqnarray*}

From the above lemma we get that for $T\ge2$,
$$
\bbE\int_2^T|g_0(t)|^2d t\le C I(T)
$$
with
\begin{eqnarray*}
I(T)&&:=\bbE\int_2^T|\zeta^{(N_0)}(t)|^2|\om(t)|^2d t\\
&&
\le C\int_2^T \bbE\exp\left\{-\gamma t-{\rm tr}\,Q^2t+\nu|\om(t)|^2+\nu\int_0^t\|\om(s)\|^2d s\right\}d t\\
&&
+C \bbE\int_2^T\exp\left\{-\gamma (t-2)-{\rm tr}\,Q^2(t-2)+\nu'|\om(t)|^2+\nu'\int_0^t\|\om(s)\|^2d s\right\}d t
\\
&&
\le C_1e_{\nu'}(w),
\end{eqnarray*}
provided $0<\nu<\nu'<\nu_0$. 
The first inequality follows from \eqref{ET211d1}, while the second from \eqref{010712-00}.
This,  ends the proof of Proposition \ref{m-lm} and according to our previous remarks concludes  the verification
of Assumption 2.

\subsection{Assumption 3} To verified  this assumption consider the solution $y(t;w)$, $t\ge 0$,   to  the deterministic equation
$$
\frac{d y (t)}{d t}=\Delta y(t)+B(y(t)), \qquad t \ge 0, 
$$
with the initial condition $y(0)=w$. Then 
$$
\lim_{t\to+\infty}\sup_{|w|\le R}|y(t;w)|=0,\quad\forall\,R>0.
$$
Fix $\delta>0$ and $R>0$. Let $T_0>0$ be such that 
$$
\sup_{|w|\le R}{\rm d}_{\kappa}(y(T; w),0)\le\delta/4,\quad\forall\,T\ge T_0.
$$ 
 Since 
 $$W_{\Delta, Q}(t):=   \int_0^t \e ^{\Delta (t-s)} Qd W(s)
 $$ is a centered Gaussian random element in the Banach space $C([0, T]; V)$ with the uniform norm 
 $$
 \|f\|_{\infty,T}:=\sup_{t\in[0,T]}\|f(t)\|,\quad f\in C([0, T]; V),
 $$   its topological support is a closed linear subspace (see e.g.  \cite{VAK}). Thus, in particular, $0$ belongs to the support of its law and for any $\varrho>0$, $\mathbb{P}(F_\varrho)>0$, where 
$$
F_{\varrho}=\{\pi\in\Omega\colon  \|W_{A, Q}(\pi)\|_{\infty,T}<\varrho\}. 
$$
Choose $\varrho_0>0$ such that 
$$
{\rm d}_{\kappa}(\om(T; w_i)(\pi),y(T; w_i))|\le\delta/4\qquad\text{for all $\pi\in F_{\varrho_0}$, $i=1,2$ and $|w|\le R$,}
$$
and set $a:=\mathbb{P}(F_{\varrho_0})>0$. For any  $|w_1|,|w_2|\le R$ we have 
$$
{\cal P}_{T}(w_1,w_2;\Delta_{\delta,\kappa})\ge\mathbb{P}\left[\pi\in\Omega\colon  {\rm d}_{\kappa}(\om(T; w_i)(\pi),y(T; w_i))|\le\delta/4,\,i=1,2\right]\ge \mathbb{P}(F_{\varrho_0})=a, 
$$
and thus we have finished verification of Assumption 3. 
$\qed$

\subsection{Proof of Theorem \ref{T2}}

\subsubsection{Proof of part 1)}  

Let us fix an arbitrary $T>0$ and define
$
\zeta(t):=|\om(t)|^2+t\|\om(t)\|^2
$ and ${\rm tr}\,Q_1:=\sum_{k\in\bbZ^2_*}|k|^2q_k^2$.
By It\^o's formula we have
\begin{equation}
\label{012502}
d\zeta(t)=\left[{\rm tr}\, Q^2+t{\rm tr}\,Q_1-2t|\om(t)|^2_2-\|\om(t)\|^2+2t\langle B(\om(t)),\Delta\om(t)\rangle\right]d t
+dM_t
\end{equation}
and
$$
dM_t:=2\langle Qd W(t), (I+t\Delta)\om(t)\rangle.
$$
According to  \eqref{020712} there exist $C,C_1>0$ such that
$$
|\langle B_0(\om),\Delta\om\rangle|\le C|\om|_{1/2}\|\om\||\om|_2\le\frac14 |\om|_2^2+C_1|\om|^4,\quad\forall\, \om\in H_2.
$$
Likewise, from \eqref{cofo88a} with $s_1=3/2$, $s_2=s_3=0$, we have
$$
|\langle B_1(\om),\Delta\om\rangle|\le C|\om|_{1/2}\|\om\||\om|_2,\quad\forall\, \om\in H_2.
$$
With these inequalities we conclude that
$$
|\langle B(\om),\Delta\om\rangle| \le\frac12 |\om|_2^2+C_1|\om|^4,\quad\forall\, \om\in H_2.
$$
From here on we proceed as in the proof of Lemma A.3 of \cite{MP} and conclude from \eqref{012502} that
\begin{equation}
\label{032502}
\zeta(t)\le |w|^2+t{\rm tr}\, Q^2+\frac{t^2{\rm tr}\,Q_1}{2}+C\int_0^ts|\om(s)|^4ds+U(t),
\end{equation}
where $U(0)=0$ and
$$
dU(t)=-(t|\om(t)|^2_2+\|\om(t)\|^2)dt+dM_t.
$$
Since 
$$
U(t)\le M_t-(\alpha/2) \langle M\rangle_t
$$
 for some sufficiently small $\alpha>0$ we conclude from the exponential martingale inequality that
$$
\bbP[\sup_{t\in[0,T]}U(t)\ge K]\le \e^{-\alpha K},\quad\forall\,K>0.
$$
This, of course, implies that $\bbE \exp\left\{\alpha' \sup_{t\in[0,T]}U(t)\right\}<+\infty$ for any $\alpha'\in(0,\alpha)$.
From \eqref{010712-00} we get
$$
\bbE\exp\left\{\nu\sup_{t\in[0,T]}|\om(t)|^2\right\}\le Ce_{\nu}(w),
$$
which in turn implies that  
$$
\bbE\left[  \sup_{t\in[0,T]}|\om(t)|^{4N}\right]\le C|w|^{4N}.
$$ Summarizing, 
the above consideration we obtain from \eqref{032502} that
for any $T>0$ and $N\ge0$ there exists a constant $C>0$ such that
\begin{equation}
\label{011912}
\bbE\left[\sup_{s\in[0,T]}\zeta^{2N}(s)\right]\le C\left(|w|^{4N}+1\right).
\end{equation}
Thus we conclude the proof of part 1) of Theorem \ref{T2}.

\subsubsection{Proof of part 2)}
First note that $P_t\phi(w)$ is well defined thanks to the already proved estimate \eqref{E27-aa} and the definition of the norm $\|\!|\cdot\|\!|_N$. In addition, we have
\begin{equation}
\label{031912}
e_{-\nu}(w)|P_t\phi(w)|\le \|\!|\phi\|\!|_Ne_{-\nu}(w)(1+\bbE\|\om(t;w)\|^N)\le C\|\!|\phi\|\!|_N,\quad \forall\,w\in H.
\end{equation}
To deal with $DP_t\phi(w)[\xi]$ we first show the following:
\begin{lemma}
\label{lm031912}
Suppose that $\{\xi(t),\,t\ge0\}$ is defined by \eqref{041912}. Then, for any $t,\nu>0$ there exists $C>0$ such that
\begin{equation}
\label{021912}
\|\xi(t)\|^2\le C\|\xi\|^2\exp\left\{\nu\int_0^t\|\om(s;w)\|^2d s+Ct\right\},\quad\forall\,t\ge 0,\,w\in H,\,\xi\in V,\,\bbP-{\rm a.s.}
\end{equation}
\end{lemma}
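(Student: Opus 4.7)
The approach is to carry out an $H^1$-energy estimate for $\xi(t)$ directly from the deterministic linearized equation \eqref{E25-1}, the natural analog of the $L^2$-argument in Proposition \ref{prop021801} at one higher level of regularity. Taking the inner product of \eqref{E25-1} with $-\Delta\xi(t)$ gives
\[
\tfrac{1}{2}\tfrac{d}{dt}\|\xi(t)\|^2 = -|\xi(t)|_2^2 + I_1+I_2+I_3+I_4,
\]
where $I_k$ are the pairings of the four nonlinear terms in \eqref{E25-1} against $-\Delta\xi$. Since $\eta(t,0)$ is a spatial constant, the contribution $\langle\eta(t,0)\cdot\nabla\xi,-\Delta\xi\rangle$ vanishes after integration by parts. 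For the transport terms I would use the divergence-free trick to rewrite $\langle\eta\cdot\nabla\xi,-\Delta\xi\rangle = \int\partial_j\eta_i\,\partial_i\xi\,\partial_j\xi$ and apply Ladyzhenskaya's inequality $\|v\|_{L^4}^2\le C|v|\|v\|$ together with \eqref{010512a}. For the two contributions involving $\nabla\om$ I would use the Sobolev embedding \eqref{embed}, the bound \eqref{010512a} and the Gagliardo--Nirenberg inequality \eqref{gagliardo-nirenberg} with $\beta=1/2$ to get $\|{\cal K}(\xi)\|_\infty\le C|\xi|_{1/2}\le C|\xi|^{1/2}\|\xi\|^{1/2}$, producing a trilinear bound of the form $C|\xi|^{1/2}\|\xi\|^{1/2}\|\om\||\xi|_2$.

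With all four nonlinear contributions now expressed as constants times products of $|\xi|$, $\|\xi\|$, $|\om|$, $\|\om\|$ and $|\xi|_2$, a two-step Young's inequality (first splitting the $|\xi|_2$ factor off against the rest, then splitting the remaining $(\|\xi\|\|\om\|)\cdot(|\xi|\|\om\|)$-type product) produces, for any prescribed $\nu>0$, a differential inequality
\[
\tfrac{d}{dt}\|\xi(t)\|^2 \le (\nu\|\om(t)\|^2+C_\nu)\|\xi(t)\|^2 + C_\nu(1+\|\om(t)\|^2)|\xi(t)|^2,
\]
with the $-|\xi|_2^2$ dissipation absorbing all remaining higher-order terms. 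Gronwall's lemma then gives the target exponential bound on $\|\xi(t)\|^2$ plus a forcing integral involving $|\xi(s)|^2$. To close the argument I would substitute into that integral the pathwise bound of Proposition \ref{prop021801} for $|\xi(s)|^2$ (with some auxiliary parameter $\nu' \ll \nu$), use $|\xi|\le\|\xi\|$ on the zero-mean subspace $H$, and apply the identity $\int_0^t\|\om(s)\|^2 e^{\nu'\int_0^s\|\om\|^2 dr}ds = (\nu')^{-1}(e^{\nu'\int_0^t\|\om\|^2}-1)$ to collapse the forcing integral into a single exponential of the form $C_\nu\|\xi\|^2 e^{\nu\int_0^t\|\om(r)\|^2 dr + C_\nu t}$, as required.

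The main obstacle is ensuring that the coefficient of $\|\om(t)\|^2\|\xi(t)\|^2$ in the energy inequality can be made an \emph{arbitrary} $\nu>0$, rather than a fixed constant determined by Sobolev constants. This is precisely why, in handling $\langle{\cal K}(\xi)\cdot\nabla\om,-\Delta\xi\rangle$, it is necessary to refine $\|{\cal K}(\xi)\|_\infty$ through the $|\xi|_{1/2}$-norm and Gagliardo--Nirenberg: the resulting bound $C|\xi|^{1/2}\|\xi\|^{1/2}\|\om\||\xi|_2$ admits a Young split that isolates exactly $\nu\|\om\|^2\|\xi\|^2$, at the price of a lower-order $|\xi|^2$-term that Proposition \ref{prop021801} takes care of. A crude bound $C|\om|\|\xi\||\xi|_2$ on the transport terms would not allow the coefficient of $\|\om\|^2\|\xi\|^2$ to be made arbitrarily small, so the same finer estimate $\|\eta\|_\infty\le C|\om|^{1/2}\|\om\|^{1/2}$ must be used for $I_1$ as well.
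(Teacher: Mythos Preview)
Your overall strategy is sound, but it diverges from the paper's in a way that matters precisely at the point you flag as the main obstacle.

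The paper does not treat $\|\xi(t)\|^2$ in isolation. It introduces the weighted energy
\[
\zeta(t):=|\xi(t)|^2+\gamma\|\xi(t)\|^2
\]
with a small parameter $\gamma>0$, computes $\partial_t\zeta$, and estimates each trilinear pairing $\gamma\langle B_{\cdot}(\cdot,\cdot),\Delta\xi\rangle$ and $\langle B_{\cdot}(\xi,\omega),\xi\rangle$ via the Constantin--Foias bounds \eqref{cofo88}--\eqref{cofo88a} rather than the divergence-free/Ladyzhenskaya combination you describe. This yields a closed inequality $\partial_t\zeta\le(\nu\|\omega\|^2+C)\zeta$ with no forcing term, and a single Gronwall finishes. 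The role of $\gamma$ is that every pairing against $\Delta\xi$ carries an explicit factor of $\gamma$, so that after Young the dangerous $\|\omega\|^2\|\xi\|^2$ contributions can be balanced against the $|\xi|^2$ part of $\zeta$; choosing $\gamma$ small is exactly what the paper uses in place of your two-step Young split.

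Your route---keep $\|\xi\|^2$ alone, carry the $|\xi|^2$ terms as a forcing, and close via Proposition~\ref{prop021801}---would work if the differential inequality you state were attainable with an arbitrary $\nu$ in front of $\|\omega\|^2\|\xi\|^2$. But your proposed treatment of $I_1=\langle\eta\cdot\nabla\xi,-\Delta\xi\rangle$ does not deliver this. Using $\|\eta\|_\infty\le C|\omega|^{1/2}\|\omega\|^{1/2}$ and absorbing $|\xi|_2$ by Young leaves a residual $C|\omega|\,\|\omega\|\,\|\xi\|^2$, which has \emph{no} $|\xi|$ factor and therefore cannot be redirected into the $|\xi|^2$-forcing. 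Splitting it once more via $|\omega|\le\|\omega\|$ (Poincar\'e on $H$) only returns a fixed-constant multiple of $\|\omega\|^2\|\xi\|^2$; any Young split of $|\omega|\|\omega\|$ with small parameter reintroduces an $|\omega|^2\|\xi\|^2$ term with a large coefficient, and the same obstruction recurs. In short, without a device that couples the $H^1$ estimate back to the $L^2$ one, the transport term $I_1$ imposes a floor on the coefficient of $\|\omega\|^2\|\xi\|^2$ that Young's inequality alone cannot undercut. The paper's weighted energy $\zeta$ with small $\gamma$ is precisely that coupling device.
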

\proof
Let $\zeta(t):=|\xi(t)|^2+\gamma\|\xi(t)\|^2$, with $\gamma>0$ to be chosen later on. We have
$$
\partial_t\zeta(t)=-2\|\xi(t)\|^2-2\gamma|\xi(t)|^2_2+\gamma\langle B_{s}(\xi(t),\om(t)),\Delta \xi(t)\rangle
+\langle B(\xi(t),\om(t)),\xi(t)\rangle.
$$
Thanks to \eqref{cofo88} with $s_1=3/2$, $s_2=s_3=0$ we can   find  constants $C,C_1>0$ such that
\begin{eqnarray*}
\gamma|\langle B_{0}(\xi(t),\om(t)),\Delta \xi(t)\rangle|&&\le C\gamma|\xi(t)|_2|\xi(t)|_{1/2}\|\om(t)\|
\\
&&
\le \frac{1}{4}|\xi(t)|_2^2+C_1\gamma^2|\xi(t)|\|\xi(t)\|\|\om(t)\|^2\\
&&
\le \frac{1}{4}|\xi(t)|_2^2+\nu|\xi(t)|^2\|\om(t)\|^2+\frac{C_2\gamma^4}{\nu}
\|\xi(t)\|^2\|\om(t)\|^2.
\end{eqnarray*}
Using again \eqref{cofo88}, this time with $s_1=2$, $s_2=s_3=0$, we obtain
\begin{eqnarray*}
\gamma|\langle B_{0}(\om(t),\xi(t)),\Delta \xi(t)\rangle|&&\le C\gamma|\xi(t)|_2\|\xi(t)\|\|\om(t)\|
\\
&&
\le \frac{1}{4}|\xi(t)|_2^2+C_1\gamma^2\|\xi(t)\|^2\|\om(t)\|^2.
\end{eqnarray*}
Also  from \eqref{cofo88a}, used with $s_1=3/2$, $s_2=s_3=0$, we obtain
\begin{eqnarray*}
\gamma|\langle B_1(\xi(t),\om(t)),\Delta \xi(t)\rangle|&&\le C\gamma|\xi(t)|_2|\xi(t)|_{1/2}\|\om(t)\|
\\
&&
\le \frac{1}{4}|\xi(t)|_2^2+\nu|\xi(t)|^2\|\om(t)\|^2+\frac{C_2\gamma^4}{\nu}
\|\xi(t)\|^2\|\om(t)\|^2.
\end{eqnarray*}
In addition,
$$
\gamma|\langle B_1(\om(t),\xi(t)),\Delta \xi(t)\rangle|=0.
$$
On the other hand,
\begin{eqnarray*}
|\langle B_{0}(\xi(t),\om(t)),\xi(t)\rangle|&&\le C|\xi(t)||\xi(t)|_{1/2}\|\om(t)\|
\\
&&
\le \nu |\xi(t) |^2|\|\om(t)\|^2+C_1|\xi(t)|_{1/2}^2\\
&&
\le \nu |\xi(t) |^2|\|\om(t)\|^2+\frac{1}{4}\|\xi(t)\|^2+C_2|\xi(t)|^2,
\end{eqnarray*}
and
\begin{eqnarray*}
|\langle B_{1}(\xi(t),\om(t)),\xi(t)\rangle|&&\le C|\xi(t)||\xi(t)|_{1/2}\|\om(t)\|
\\
&&
\le \nu |\xi(t) |^2|\|\om(t)\|^2+C_1|\xi(t)|_{1/2}^2\\
&&
\le \nu |\xi(t) |^2|\|\om(t)\|^2+\frac{1}{4}\|\xi(t)\|^2+C_2|\xi(t)|^2.
\end{eqnarray*}
Summarizing,  for a sufficiently small $\gamma>0$ and some constant $C>0$ we can write that
$$
\partial_t\zeta(t)\le \left(\nu\|\om(t)\|^2+C\right)\zeta(t)
$$
and \eqref{021912} follows by Gronwall's inequality.
\qed

Concerning the estimates of $|DP_t\phi(w)[\xi]|$ we can write that 
\begin{eqnarray}
\label{051912}
&&
e_{-\nu}(w)|DP_t\phi(w)[\xi]|=e_{-\nu}(w)\left|\bbE\left[(D\phi)(\om(t;w))[\xi(t)]\right]\right|\\
&&
\le \|\!|\phi\|\!|_Ne_{-\nu}(w)\bbE\left[(1+\|\om(t;w)\|^N)\|\xi(t)\|\right]\nonumber\\
&&
\le C\|\!|\phi\|\!|_Ne_{-\nu}(w)\left\{\bbE(1+\|\om(t;w)\|)^{2N}\right\}^{1/2}\left\{\bbE\|\xi(t)\|^2\right\}^{1/2},\quad \forall\,w\in H.
\end{eqnarray}
By the already proved part 1) of the theorem and Lemma  \ref{lm031912} we obtain that the utmost right hand side is less than, or equal to
$$
  C_1\|\xi\|\|\!|\phi\|\!|_Ne_{-\nu}(w)(1+|w|^{4N})\bbE\exp\left\{\frac{\nu}{2}\int_0^t\|\om(s;w)\|^2d s+C_1t\right\}\le C_2\|\xi\|\|\!|\phi\|\!|_N.
  $$
Hence
$$
e_{-\nu}(w)\|DP_t\phi(w)\|\le  C_2\|\!|\phi\|\!|_N
$$
and thus we have finished the proof of part 2) of Theorem \ref{T2}.

\appendix

\section{Existence of the Markov, Feller family}

\label{secAp1}

\subsubsection*{Proof of Theorem \ref{MF1}} Given $N\in \mathbb{N}$, denote by $\Pi_N$ the orthogonal projection of $H$ into $H_N:=\text{\rm span}\,\{e_k,\,0<|k|\le N\}$. 
Consider the following  finite dimensional It\^o stochastic differential equation 
 \begin{equation}\label{E25N}
 \begin{aligned}
d \om^{(N)}(t) &=[\Delta\om^{(N)}(t) -B_0^{(N)}(\om^{(N)}(t))-B_1^{(N)}(\om^{(N)}(t))]d t + Q^{(N)}d  W(t), \\
 \om^{(N)}(0)&=w^{(N)}\in H, 
\end{aligned}
\end{equation}
with $W^{(N)}(t):=\Pi_N W(t)$, $Q^{(N)}:=\Pi_N Q$,  and 
$$
B_0^{(N)}(\om):=\Pi_NB_0(\om),\quad B_1^{(N)}(\om):=\Pi_NB_1(\om),\quad\om\in H_N.
$$
The local existence and uniqueness of solution to \eqref{E25N} follows from a result for finite dimensional S.D.E.-s. By It\^o's formula we get the  estimate
\begin{equation}\label{E26N}
 \mathbb{E}\left\{ |\om^{(N)}(T)|^2 +\frac{1}{2}\int_0^T \|\om^{(N)}(t)\|^2 d t \right\} \le  |w^{(N)}|^2 +  \|Q^{(N)}\|^2_{L_{(HS)(H,H)}} T
\end{equation} 
From this we conclude that the sequence $\{\om^{(N)}(t),\,t\in[0,T]\}$, $N\ge1$ is compact in 
$
 L^2(\Omega, {\cal E}_T).
$
In addition,
\begin{align*}
\om^{(N)}(t) &= \e ^{\Delta t} w^{(N)} - \int_0^t \e ^{\Delta (t-s)} B_0^{(N)}(\om^{(N)}(s))d s +
 \int_0^t \e ^{\Delta (t-s)}  B_1^{(N)}(\om^{(N)}(s))d s\\
 &\qquad +  \int_0^t \e ^{\Delta (t-s)} Q^{(N)}dW( s).
\end{align*}
Any weak limiting point satisfies therefore 
\eqref{020512}. To show  uniqueness we need the following.
\begin{lemma}
There exists a constant $C>0$ such that for all $w_0,w_1\in H$,  and $t\ge 0$, 
\begin{equation} \label{E27}
|\om(t;w_0)-\om(t;w_1)|\le |w_0-w_1|\exp\left\{ C\int_0^t \|\om(s;w_0)\|^2 d s\right\},\qquad \mathbb{P}-a.s. 
\end{equation}
\end{lemma}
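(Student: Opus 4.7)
Since both $\om(t;w_0)$ and $\om(t;w_1)$ are driven by the same cylindrical Wiener process $W$, the stochastic integrals cancel in the difference. Setting $\delta\om(t):=\om(t;w_0)-\om(t;w_1)$, the bilinearity of the maps $B_i(h,\om)$ in conjunction with $B_i(\om)=B_i(\om,\om)$ yields
\begin{equation*}
B_i(\om(t;w_0))-B_i(\om(t;w_1))=B_i(\delta\om,\om(t;w_0))+B_i(\om(t;w_1),\delta\om),\quad i=0,1,
\end{equation*}
so that $\delta\om$ satisfies a pathwise linear PDE in $H$ with drift $\Delta\delta\om$ and the above transport-type terms on the right. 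The plan is to estimate $\tfrac{1}{2}\tfrac{d}{dt}|\delta\om(t)|^2$ and close with Gronwall's inequality.

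\textbf{Cancellations and nonlinear estimates.} First, the two "frozen first argument" pairings vanish: $\langle B_0(\om(t;w_1),\delta\om),\delta\om\rangle=0$ because $\mathcal K(\om(t;w_1))$ is divergence-free, and $\langle B_1(\om(t;w_1),\delta\om),\delta\om\rangle=0$ because $\mathcal K(\om(t;w_1))(0)$ is a constant vector and $\int_{\bbT}\nabla(|\delta\om|^2/2)\,dx=0$ by periodicity. Only $\langle B_i(\delta\om,\om(t;w_0)),\delta\om\rangle$ remain; applying \eqref{cofo88} (for $i=0$) and \eqref{cofo88a} (for $i=1$, valid since $s_1=2>1$) with $s_1=2$, $s_2=s_3=0$ gives
\begin{equation*}
|\langle B_i(\delta\om,\om(t;w_0)),\delta\om\rangle|\le C\|\delta\om\|\,\|\om(t;w_0)\|\,|\delta\om|.
\end{equation*}
Young's inequality, $C\|\delta\om\|\,\|\om(t;w_0)\|\,|\delta\om|\le \tfrac{1}{4}\|\delta\om\|^2+C_1\|\om(t;w_0)\|^2|\delta\om|^2$, absorbs the gradient term into the dissipative contribution $-\|\delta\om\|^2$ arising from $\langle\Delta\delta\om,\delta\om\rangle=-\|\delta\om\|^2$. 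The outcome is
\begin{equation*}
\frac{d}{dt}|\delta\om(t)|^2\le 2C_2\|\om(t;w_0)\|^2|\delta\om(t)|^2,
\end{equation*}
and Gronwall's lemma followed by taking square roots produces \eqref{E27}.

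\textbf{Main obstacle.} The energy identity is a priori only formal, because $\Delta\delta\om$ need not lie in $H$ and the duality pairing $\langle B_i(\delta\om,\om(t;w_0)),\delta\om\rangle$ must be justified. The rigorous route is to perform the estimate on the Galerkin approximations $\om^{(N)}$ from \eqref{E25N} --- where all pairings are finite-dimensional and the cancellations are exact --- obtain the bound uniformly in $N$, and then pass to the limit using the compactness argument already invoked for existence in Theorem \ref{MF1}. The finiteness $\bbP$-a.s. of $\int_0^t\|\om(s;w_0)\|^2\,ds$, required for Gronwall's estimate to be meaningful, is provided by the energy inequality built into Definition \ref{D21} (or equivalently \eqref{E26N} passed to the limit).
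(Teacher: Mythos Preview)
Your proof is correct and follows essentially the same route as the paper: both exploit the cancellations $\langle B_i(\om(t;w_1),\delta\om),\delta\om\rangle=0$, bound the surviving terms $\langle B_i(\delta\om,\om(t;w_0)),\delta\om\rangle$ by $C\|\om(t;w_0)\|\,\|\delta\om\|\,|\delta\om|$ (the paper reaches this via \eqref{020712}--\eqref{030712} with $r=1/2$ and $|\rho|_{1/2}\le\|\rho\|$, you via \eqref{cofo88}--\eqref{cofo88a} with $s_1=2$), absorb $\|\delta\om\|^2$ into the dissipative term, and apply Gronwall. Your explicit remark on justifying the energy identity through the Galerkin scheme is in fact more careful than the paper's own formal computation.
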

\proof
Let $\rho(t):=\om(t;w_1)-\om(t;w_0)$ and $r(t):={\cal K}(\rho(t))$. From \eqref{E25} we conclude
\begin{equation} \label{030512}
\frac{d}{dt}|\rho(t)|^2=-2\|\rho(t)\|^2-2\langle( r(t)\cdot\nabla)\om(t;w_0),\rho(t)\rangle+2\langle( r(t,0)\cdot\nabla)\om(t;w_0),\rho(t)\rangle.
\end{equation}
To deal with the second term on the right hand side we use the following estimate. Suppose that 
$v={\cal K}(h)$. Then,  for any $r>0$ there exists a constant $C>0$ such that 
\begin{equation}
\label{020712}
|\langle( v\cdot\nabla)f,g\rangle|\le C\|f\||g|_{r}|h|,\quad\forall\, f\in V,\,g\in H_{r},\,h\in H
\end{equation}
and
\begin{equation}
\label{030712}
|\langle( v\cdot\nabla)f,g\rangle|\le C\|f\||g||h|_r,\quad\forall\,g\in H,\,f\in V,\,h\in H_r,
\end{equation}
see e.g. (6.10)  of \cite{foias}.
With these two inequalities in mind we conclude from \eqref{030512} that
\begin{eqnarray*}
\frac{d}{dt}|\rho(t)|^2&&\le 
-2\|\rho(t)\|^2+C\|\om(t;w_0)\||\rho(t)|_{1/2}|\rho(t)|\\
&&
\le -2\|\rho(t)\|^2+C_1\|\om(t;w_0)\|^2|\rho(t)|^2+2\|\rho(t)\|^2.
\end{eqnarray*}
By Gronwall's inequality we conclude then \eqref{E27}.
\qed

%

\section{Semimartingale estimates}

\label{secApB}

The following result comes from \cite{HM}, see Lemma 5.1. 
\begin{proposition}
\label{prop5.1}
	Let $\{U(t),\,t\ge0\}$ be a real-valued semi-martingale 
	$$
	dU(t)=F(t)dt +G(t)dw(t), \quad U(0)=u_0,
	$$
with  a standard Brownian motion $\{w(t),\,t\ge0\}$. Assume that there exist a process $\{Z(t),\,t\ge0\}$ and positive constants $b_1, b_2, b_3$, with $b_2 > b_3$, such that $F (t) \le  b_1 -b_2Z(t)$, $U(t) \le Z(t)$ and $G(t)^2 \le b_3Z(t)$, $\mathbb{P}$-a.s. Then
$$	
\bbE\exp\left\{U(t)+	\frac{b_2\e^{-b_2t/4}}{4}\int_0^tZ(s)ds\right\}
\le \frac{b_2 \exp\{2b_1/b_2\}}{b_2-b_3}	\exp\left\{u_0\e^{-(b_2 /2)t}\right\},\quad\forall\,t\ge0.
$$
\end{proposition}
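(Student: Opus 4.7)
The strategy is to apply It\^o's formula to a carefully chosen exponential functional of $U$, $J(s) := \int_0^s Z(r)\,dr$, and time, and to show that after multiplication by a deterministic factor the resulting process is a supermartingale on $[0, t]$ for each fixed $t > 0$. The peculiar form of the weights $\frac{b_2 e^{-b_2 t/4}}{4}$ and $u_0 e^{-b_2 t/2}$ appearing in the inequality dictates the time-dependence of the exponent that will enter the functional.

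Concretely, fix $t > 0$ and consider, for $s \in [0, t]$, the process $\Psi(s) := \exp\{\alpha(s) U(s) + \beta(s) J(s) + \gamma(s)\}$, where $\alpha, \beta, \gamma$ are smooth deterministic functions to be chosen with $\alpha(t) = 1$, $\alpha(0) = e^{-b_2 t/2}$, and $\beta(t) = b_2 e^{-b_2 t/4}/4$. By It\^o's formula, using $dU = F\,ds + G\,dw$ and $dJ = Z\,ds$, the drift of $\Psi$ equals $\Psi$ times
$$\alpha'(s) U(s) + \alpha(s) F(s) + \beta'(s) J(s) + \beta(s) Z(s) + \gamma'(s) + \tfrac{1}{2}\alpha(s)^2 G(s)^2.$$
Invoking the hypotheses $F \le b_1 - b_2 Z$ and $G^2 \le b_3 Z$, and arranging that $\alpha'(s) \ge 0$ so that $U \le Z$ yields $\alpha'(s) U(s) \le \alpha'(s) Z(s)$, this drift is dominated by
$$\Psi(s)\left\{\alpha(s) b_1 + \gamma'(s) + \beta'(s) J(s) + Z(s) \bigl[\alpha'(s) - \alpha(s) b_2 + \beta(s) + \tfrac{b_3}{2}\alpha(s)^2\bigr]\right\}.$$

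The remaining task is to select $\alpha, \beta, \gamma$ so that (i) the bracketed $Z$-coefficient is non-positive on $[0, t]$, (ii) $\beta'(s) \le 0$, and (iii) $\gamma'(s) = -\alpha(s) b_1$, which together force $\Psi$ to be a supermartingale. Taking expectations then yields $\bbE \Psi(t) \le \Psi(0) = \exp\{u_0 e^{-b_2 t/2} + \gamma(0)\}$, which after rearrangement produces the claimed inequality: the contribution $\gamma(0) = b_1 \int_0^t \alpha(s)\,ds \le 2 b_1 / b_2$ accounts for the factor $\exp\{2 b_1 / b_2\}$ in the prefactor, while the factor $b_2/(b_2 - b_3)$ arises as the overhead from the balancing in (i) using the hypothesis $b_2 > b_3$.

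The main obstacle lies in condition (i): one must interpolate the prescribed boundary values of $\alpha$ and $\beta$ across $[0, t]$ while keeping the $Z$-coefficient non-positive for every $s$, using only the qualitative assumption $b_2 > b_3$. A natural candidate is $\alpha(s) = e^{-b_2(t-s)/2}$, which satisfies the endpoint conditions and gives $\alpha'(s) = (b_2/2)\alpha(s)$; then $\beta(s)$ must be selected so that $\beta(s)/\alpha(s) + b_3 \alpha(s)/2 \le b_2/2$ uniformly on $[0, t]$, and the specific form $b_2 e^{-b_2 s/4}/4$ is forced by optimizing this constraint, which is precisely where the saturation of $b_2 > b_3$ enters and where the somewhat awkward constants in the statement originate.
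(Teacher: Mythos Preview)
The paper does not supply its own proof of this proposition: it is quoted verbatim from Hairer--Mattingly \cite{HM}, Lemma~5.1, and used as a black box. So there is nothing in the paper to compare your argument against. That said, your overall plan---apply It\^o's formula to $\exp\{\alpha(s)U(s)+\beta(s)J(s)+\gamma(s)\}$ with time-dependent weights and show a supermartingale property---is exactly the mechanism behind the original proof, and your identification of $\alpha(s)=e^{-b_2(t-s)/2}$ with $\gamma'(s)=-b_1\alpha(s)$ (yielding $\gamma(0)\le 2b_1/b_2$) is correct and accounts for the factor $\exp\{2b_1/b_2\}$ and the weight $e^{-b_2t/2}$ on $u_0$.

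Where your write-up falls short is in the handling of $\beta$ and of the prefactor $b_2/(b_2-b_3)$. With $\alpha(s)=e^{-b_2(t-s)/2}$ the $Z$-coefficient becomes $\beta(s)-\tfrac{\alpha(s)}{2}(b_2-b_3\alpha(s))$, and your proposed $\beta(s)=\tfrac{b_2}{4}e^{-b_2 s/4}$ does \emph{not} make this non-positive throughout $[0,t]$: at $s=0$ the ratio $\beta(0)/\alpha(0)=\tfrac{b_2}{4}e^{b_2 t/2}$ blows up with $t$, and at $s=t$ the constraint $\tfrac{b_2}{4}e^{-b_2 t/4}\le \tfrac{b_2-b_3}{2}$ fails for small $t$ whenever $b_3>b_2/2$. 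So constraint~(i) is not met by the choices you suggest. More fundamentally, a pure supermartingale inequality $\bbE\Psi(t)\le\Psi(0)$ produces no multiplicative constant in front of the exponential on the right; the factor $b_2/(b_2-b_3)$ cannot appear from ``overhead in the balancing'' as you claim, because in your framework the right-hand side is exactly $\exp\{\alpha(0)u_0+\gamma(0)\}$ with no room for an additional constant depending on $b_3$. In the Hairer--Mattingly argument this factor enters through a separate step (in effect a comparison/interpolation using the gap $b_2-b_3$ after the basic supermartingale bound), which your sketch omits. To make the proof complete you would need either to exhibit $\beta$ explicitly and verify~(i) for all $s\in[0,t]$ and all admissible $(b_2,b_3)$, or to drop $\beta$ from $\Psi$, obtain the clean bound $\bbE e^{U(t)}\le e^{2b_1/b_2}\exp\{u_0 e^{-b_2t/2}\}$, and then recover the $\int_0^t Z$ term and the $b_2/(b_2-b_3)$ factor by a second estimate.
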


Let  $U(t):=\nu|\om(t)|^2$. Using It\^o's formula and \eqref{E25a} we obtain
\begin{equation}
\label{051512}
dU(t)=\nu\left({\rm tr}\, Q^2-2\|\om(t)\|^2\right)d t+2\nu|Q\om(t)|\d w(t)
\end{equation}
for some adapted one dimensional standard Brownian motion $w$. 
Using Proposition \ref{prop5.1} with
$ Z(t) = \nu\|\om(t)\|^2$ and
$$
 b_1 =\nu{\rm tr}\,Q^2 ,\quad	b_2 =1,\quad	b_3 =4\nu\|Q\|^2.
$$
(see the proof of Proposition 5.2 of \cite{HM1} for details) we conclude that (cf (28) ibid.)
for $\nu_0:=1/(4\|Q\|)$ there exists a constant $C>0$ such that 
%
%
%
\begin{equation}
\label{010712}
\bbE\exp\left\{\nu|\om(t)|^2+\frac{\nu}{2e}\int_0^t\|\om(s)\|^2ds\right\}\le C_1\exp\left\{\nu|w|^2e^{-t/2}\right\},\quad\forall\,t\in[0,1],\,\nu\in[0,\nu_0].
\end{equation}

Using \eqref{051512} we can also repeat the proof of point 1) of  Lemma 4.10 from \cite{HM} and conclude that
there exist $\nu_0>0$ and a constant $C>0$ such that 
\begin{equation}
\label{010712-00}
\bbE\exp\left\{\nu\sup_{t\ge0}\left[|\om(t)|^2+\int_0^t\|\om(s)\|^2ds-t{\rm tr}\, Q^2\right]\right\}\le Ce_{\nu}(w),\quad\forall\,t\ge0,\,\nu\in[0,\nu_0].
\end{equation}


\begin{thebibliography}{99}
 
\bibitem{caxu} Carmona, R.A. and Xu, L.,   {\em Homogenization for time dependent 2-D incompressible Gaussian flows},  Ann. Appl. Probab. {\bf 7} (1997),  265--279.

\bibitem{foias} Constantin, P. and  Foias, C.,  {\em Navier--Stokes Equations},   Chicago Lectures in Mathematics, 1988. 
 
\bibitem{DaPrato-Zabczyk}  Da Prato G. and  Zabczyk J.,  \emph{Stochastic Equations in Infinite Dimensions}, Cambridge University, 1992.


\bibitem{EMS01} E, W.,  Mattingly, J.C. and  Sinai, Ya., {\em Gibbsian dynamics and ergodicity for the stochastically forced Navier-Stokes equation},  Comm. Math. Phys. {\bf 224} (2001),  83--106. 


\bibitem{fk1} Fannjiang, A. and Komorowski, T., {\em  Turbulent diffusion in Markovian flows,} Ann. Appl. Prob. {\bf 9} (1999), 591--610.

\bibitem{fkp} 
Fannjiang, A.,  Komorowski, T., and  Peszat, S., {\em Lagrangian dynamics for a passive tracer in a class of Gaussian Markovian flows}, Stochastic Processes Appl. {\bf 97} (2002), 171--198.

 \bibitem{gilbarg-trudinger} Gilbarg, D. and Trudinger, N.S., {\em Elliptic Partial Differential Equations of Second Order},
Springer Verlag,  Berlin, 1998.

\bibitem{HM}
Hairer, M. and Mattingly, J., {\em  Ergodicity of the 2D Navier-Stokes equations with degenerate stochastic forcing},  Ann.
of  Math. {\bf 164}  (2006), 993--1032.

\bibitem{HM1} Hairer, M. and Mattingly, J., {\em  Spectral gaps in Wasserstein distances and the 2D stochastic Navier--Stokes equations}, Ann.  Probab.  {\bf 36} (2008),  2050--2091.


\bibitem{henry} Henry, D., {\em  Geometric Theory of Semilinear Parabolic Equations}, Lecture Notes in Mathematics, 840. Springer-Verlag, Berlin-New York, 1981.  

\bibitem{IW}
Ikeda, N.  and Watanabe, S., 
{\em  Stochastic Differential Equations and Diffusion Processes},
 North-Holland, Amsterdam, 1981.



\bibitem{kola} Komorowski, T. and  Olla, S., {\em On the sector condition and homogenization of diffusions with a Gaussian drift}, J.  Funct. Anal. {\bf 197} (2003), 179--211.

\bibitem{fg-1} Komorowski, T. and  Papanicolaou, G., {\em Motion in a Gaussian, incompressible flow},  Ann.  Appl. Prob. {\bf  7}  (1997),  229--264.

\bibitem{kp}
Komorowski, T, and  Peszat, S.,  {\em Transport of a passive tracer by an irregular velocity field}, J. Statist. Phys. {\bf 115} (2004), 1383--1410.


\bibitem{Kps}
Komorowski, T., Peszat, S.  and  Szarek, T., {\em On egrodicity of some Markov processes}, Ann. Probab.  {\bf 38} (2010), 1401--1443.

\bibitem{kowalczuk} Komorowski, T. and  Walczuk, A., {\em Central limit theorem for Markov processes with spectral gap  in Wasserstein metric},  Stochastic Processes Appl., to appear. 

\bibitem{koralov} Koralov, L.,  {\em Transport by time dependent stationary flow}, Comm. Math. Phys. {\bf 199} (1999), 649--681.

\bibitem{kraichnan} Kraichnan, R., {\em Diffusion in a Random Velocity
Field}, Phys. Fluids 13,  22--31(1970).

\bibitem{kuksin-shirkyan} Kuksin, S. and Shirikyan, A.,  {\em Mathematics of Two-Dimensional Turbulence}, book preprint available at  {\tt http://shirikyan.u-cergy.fr/book.pdf}. 


\bibitem{krama} Majda, A.J. and  Kramer, P.R.,  {\em Simplified models for turbulent diffusion: theory, numerical modeling, and physical phenomena}, Phys. Rep. {\bf 314} (1999), 237--574.


\bibitem{M} Mattingly, J.C.,  {\em The dissipative scale of the stochastic Navier-Stokes equation: regularisation and analyticity},  J. Statist. Phys. {\bf 108} (2002), 1157--1179.


\bibitem{MP} Mattingly, J.C. and  Pardoux, E.,  {\em Malliavin calculus for the stochastic 2D Navier-Stokes equation}, Comm. Pure Appl. Math. {\bf 59} (2006), 1742--1790.


\bibitem{MS}
Menaldi, J.-L.  and Sritharan, S.,  {\em Stochastic 2-D Navier--Stokes equation},  Appl Math Optim {\bf 46} (2002), 31--53. 


\bibitem{Nualart} Nualart, D., {\em  The Malliavin Calculus and Related Topics}, Springer-Verlag,  Berlin, Heidelberg, New York,  1995.

\bibitem{port-stone} Port, S.C. and Stone, C.J., {\em Random measures and their application to motion in an incompressible Fluid}, J. Appl. Prob. {\bf 13} (1976), 499--506.


\bibitem{shirikyan}  Shirikyan, A.,  {\em Law of large numbers and central limit theorem for randomly forced PDE's,} Probab. Theory Related Fields 134 (2006),  215--247.


\bibitem{stewart} Stewart, R.H.,  {\em Introduction to Physical Oceanography}, Open Source Textbook,  2009, available at {\tt http://oceanworld.tamu.edu/resources/ocng\_textbook/contents.html}

\bibitem{taylor} Taylor, G.I.,   {\em Diffusions by continuous
movements}, Proc. London Math. Soc. Ser. 2 {\bf 20} (1923), 196--211. 

\bibitem{VAK} Vakhania, N.V., {\em The topological support of Gaussian measure in Banach space}, Nagoya Math. J. {\bf 57} (1975), 59--63.


\bibitem{yaglom-monin}  Monin, A.S. and   Yaglom, A.M.,  {\em Statistical Fluid Mechanics of Turbulence}, Vols I,II, MIT Press Cambridge, 1971, 1975. 


\bibitem{YW}
Yamada,  T.  and Watanabe, S., 
{\em On the uniqueness of solutions of stochastic differential
equations}, J. Math. Kyoto Univ. {\bf  11} (1971), 155--167.


\end{thebibliography}
\end{document}